\newtheorem{theorem}{Theorem}
\newtheorem{corollary}[theorem]{Corollary}
\newtheorem{proposition}[theorem]{Proposition}
\theoremstyle{remark}
\newtheorem{remark}[theorem]{Remark}
\theoremstyle{definition}
\newtheorem{definition}[theorem]{Definition}
\begin{document}

\title{Algebraic Semantics of Datalog with Equality}

\author{Martin E. Bidlingmaier}

\date{}

\maketitle

\begin{abstract}
  We discuss the syntax and semantics of relational Horn logic (RHL) and partial Horn logic (PHL).
  RHL is an extension of the Datalog programming language that allows introducing and equating variables in conclusions.
  PHL is a syntactic extension of RHL by partial functions and one of the many equivalent notions of essentially algebraic theory.

  Our main contribution is a new construction of free models.
  We associate to RHL and PHL sequents \emph{classifying morphisms}, which enable us to characterize logical satisfaction using lifting properties.
  We then obtain free and weakly free models using the \emph{small object argument}.
  The small object argument can be understood as an abstract generalization of Datalog evaluation.
  It underpins the implementation of the \emph{Eqlog} Datalog engine, which computes free models of PHL theories.
\end{abstract}

\section{Introduction}

\emph{Datalog} \citep{know-datalog} is a programming language for logical inference from Horn clauses.
Abstracting from concrete syntax, a Datalog program consists of the following declarations:
\begin{itemize}
  \item
    A set of \emph{sort symbols} $s$.
  \item
    A set of \emph{relation symbols} and their \emph{arities} $r : s_1 \times \dots \times s_n$.
  \item
    A set of \emph{sequents} (or \emph{rules}, or \emph{axioms}) of the form
    \begin{equation}
      r_1(\bar v^1) \land \dots \land r_n(\bar v^n) \implies r_{n + 1}(\bar v^{n+1})
    \end{equation}
    where $\bar v^i = (v_1, \dots, v_{k_i})$ is a sort-compatible list of variables for each $i$, and each variable in the conclusion also appears in the premise.
\end{itemize}
A \emph{fact} is an expression of the form $r(c_1, \dots, c_n)$ where each $c_i$ is a constant symbol of the appropriate sort.
Given a Datalog program and a set of input facts, a \emph{Datalog engine} computes the set of facts that can be derived from the input facts by repeated application of sequents.

A typical example of a problem that can be solved using Datalog is the computation of the transitive closure of a (directed) graph.
Graphs are given by a binary relation $E \subseteq V \times V$ of edges among a sort $V$ of vertices.
The only axiom of transitive graphs is the transitivity axiom
\begin{equation}
  E(u, v) \land E(v, w) \implies E(u, w).
\end{equation}
A set of input facts for this Datalog program is given by a set of expressions $E(a, b)$ where $a, b$ are constant symbols.
We identify such data with the data of a graph with vertices $V = \bigcup_{E(a, b)} \{ a, b \}$ and edges $E = \{(a, b) \mid E(a, b) \}$.
Every finite graph in which every vertex appears in some edge arises in this way, so we conflate such graphs and sets of facts.
(Standard Datalog does not support constants that do not appear in a fact.)

Given the Datalog program for transitive graphs and a corresponding set of facts, a Datalog engine enumerates all matches of the premise of the transitivity axiom, i.e. all substitutions $u \mapsto a, v \mapsto b, w \mapsto c$ such that the substituted conjuncts of the premise, $E(a, b)$ and $E(b, c)$, are in the set of input facts.
For each such substitution, the Datalog engine then adds the substitution $E(a, c)$ of the conclusion to the set of facts.
This process is repeated until the set of facts does not increase anymore; that is, until a fixed point has been reached.
This final set of facts now corresponds to a transitive graph.

Datalog has seen renewed interest in recent years for the implementation of program analysis tasks \citep{declarative-points-to, alias-analysis-bdd, flix} such as points-to analysis.
One encodes abstract syntax trees derived from the program source code as relations, on which one then runs Datalog programs.
The advantage of this approach over more ad-hoc methods is that implementation time can be reduced significantly, and that different analyses can be integrated seamlessly.

\emph{Equality saturation} has garnered interest as a program optimization technique in recent years \citep{egg}.
The idea is to insert expressions that should be optimized into an e-graph, and then close the e-graph under a set of rewrite rules.
E-graphs allow sharing nodes that occur as children more than once, so that a large number of expressions can be stored.
Furthermore, e-graphs can be efficiently closed under congruence, i.e. equivalence can be propagated from subexpressions to their parents.
After a suitable number of rewrite rules have been applied and the e-graph has been closed under congruence, one selects a suitable equivalent expression from the equivalence class of the expression one is interested in according to a cost function.
Crucially, equality saturation makes considerations about the order of rewrites unnecessary.

In this paper, we study languages and corresponding semantics that combine and subsume both Datalog and the applications of e-graphs outlined above.
To that end, we extend Datalog by \emph{equality}, that is, the ability of enforce an equality $u \equiv v$ in the conclusion of a sequent.
One example is the order-theoretic antisymmetry axiom
\begin{equation}
  \mathrm{Le}(u, v) \land \mathrm{Le}(v, u) \implies u \equiv v
\end{equation}
which is not valid Datalog due to the equality atom $u \equiv v$, but allowed in our extension.
If during evaluation of RHL an equality among constants $c_1$ and $c_2$ is inferred, we expect the system to conflate $c_1$ and $c_2$ in all contexts henceforth.
In other words, inferred equality should behave as congruence with respect to relations.
For example, the premise $E(u, v) \land E(v, w)$ of the transitivity axiom should match $(b, c_1), (c_2, d) \in E$ if the equality $c_1 \equiv c_2$ has been inferred earlier.
In addition to a set of derived facts, we also expect evaluation to yield an equivalence relation on each sort, representing inferred equalities.

\emph{Relational Horn logic} extends Datalog further by \emph{sort quantification}, i.e. variables matching any element of a sort, and by variables that only occur in a conclusion.
We interpret the latter as existentially quantified:
If the premise of a sequent matches and the conclusion contains a variable that is not bound in the premise, then we expect the Datalog engine to create new identifiers of the given sort if necessary to ensure that the conclusion holds.

\emph{Partial Horn logic}, originally due to \citet{phl}, is a layer of syntactic sugar on top RHL, i.e. a purely syntactic extension with the same descriptive power.
PHL adds function symbols $f : s_1 \times \dots s_n \rightarrow s$, which desugar into relations $f : s_1 \times \dots \times s_n \times s$ representing the graph of the function and the functionality axiom
\begin{equation}
  f(v_1, \dots, v_n, u) \land f(v_1, \dots, v_n, w) \implies u \equiv w.
\end{equation}
In positions where RHL expects variables (e.g. arguments of predicates or in equations), PHL allows also composed terms.
A composed term is desugared into a fresh variable corresponding to the result of applying the function and an assertion about the graph of the function.

These features enable the implementation of algorithms in PHL for which Datalog unsuitable, for example congruence closure \citep{congruence-closure}, Steensgaard's points-to analysis \citep{steensgaard} and type inference \citep[22.3, 22.4]{pierce-types-and-programming-languages}.
In each case, evaluation of the PHL theory encoding the problem domain yields the same algorithm as the standard domain-specific algorithm.
However, the present paper focuses on the semantics of PHL and RHL, whereas an evaluation algorithm and the applications above are presented in \citet{eqlog-algorithm}.

Partial Horn logic is one of the equivalent notions of \emph{essentially algebraic theory} \citep[Chapter 3.D]{locally-presentable-and-accessible-categories}.
Essentially algebraic theories generalize the better-known algebraic theories of universal algebra by allowing functions to be partial.
Crucially, the \emph{free model theorem} of universal algebra continues to hold also for essentially algebraic theories.
Free models are the basis of our semantics of PHL evaluation.
We show that free models can be computed using the \emph{small object argument}, which we shall come to understand as an abstract generalization of Datalog evaluation.

In brief, the relation of free models and Datalog evaluation can be understood for the transitivity Datalog program outlined above as follows.
We have seen that input data for this Datalog program represent certain graphs $G = (V, E)$, while output data represent transitive graphs $G' = (V, E')$.
The two graphs $G$ and $G'$ share the same set of vertices $V$, which is the set of constant symbols that appear in the set of input facts.
Intuitively, $G'$ arises from $G$ by adding data that must exist due to the transitivity axiom but no more.

Let us rephrase the relation between $G$ and $G'$ using category theory.
Denote by $\mathrm{Graph}$ the category of graphs:
A morphism $f : (V_1, E_1) \rightarrow (V_2, E_2)$ between graphs is a map $f : V_1 \rightarrow V_2$ that preserves the edge relation.
Thus if $(u, v) \in E_1$, then we must have $(f(u), f(v)) \in E_2$.
The requirement that the output graph $G'$ arises from the input $G$ solely by application of the transitivity sequent can now be summarized as follows:
\begin{proposition}
  \label{prop:free-transitive-graph}
  Let $G' = (V, E')$ be the output graph generated from evaluating the transitivity Datalog program on a finite input graph $G = (V, E)$.
  Then $G'$ is the free transitive graph over $G$.
\end{proposition}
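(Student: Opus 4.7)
The plan is to verify the universal property of the free transitive graph directly from the operational description of Datalog evaluation. First, I would unpack the definition: the free transitive graph over $G$ is a transitive graph $F(G)$ together with a graph morphism $\eta : G \to F(G)$ such that for every transitive graph $H$ and every graph morphism $f : G \to H$, there exists a unique graph morphism $\tilde f : F(G) \to H$ with $\tilde f \circ \eta = f$. Since Datalog evaluation never introduces new vertices, $G'=(V,E')$ shares its underlying vertex set with $G=(V,E)$, and $\eta$ will be taken to be the identity map on $V$, which is a graph morphism because the fixed-point construction can only add edges, so $E\subseteq E'$.

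Next I would argue that $G'$ really is transitive: by construction, evaluation terminates at a fixed point under the transitivity sequent, so whenever $(u,v),(v,w)\in E'$ we must already have $(u,w)\in E'$, otherwise another application of the rule would produce a strictly larger edge set, contradicting fixed-pointness.

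The heart of the argument is universality. Given a transitive $H=(V_H,E_H)$ and a graph morphism $f : G \to H$, I would set $\tilde f := f$ on vertices, which is forced by the equation $\tilde f\circ\eta=f$ and hence also gives uniqueness immediately. What remains is to check that this $\tilde f$ really is a graph morphism $G'\to H$, i.e.\ that $(u,v)\in E'$ implies $(f(u),f(v))\in E_H$. I would prove this by induction on the stage at which $(u,v)$ enters $E'$ during Datalog evaluation. At stage $0$, $(u,v)\in E$ and the claim holds because $f$ preserves edges. At a successor stage, $(u,v)$ is produced by the transitivity rule from some pair $(u,w),(w,v)$ already present, and the inductive hypothesis gives $(f(u),f(w)),(f(w),f(v))\in E_H$, whence transitivity of $H$ yields $(f(u),f(v))\in E_H$.

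The main obstacle, such as it is, is purely expository: one needs a clean way to index the inductive invariant. Either one stratifies $E'$ by the stage at which each edge is produced, or one observes that $E'$ is the least fixed point of the monotone operator that adds all composable edge pairs, and then uses the Knaster--Tarski / ordinal-iteration characterization. Both are routine; the essential content is that Datalog's bottom-up saturation produces exactly the least set of edges containing $E$ and closed under transitivity, which is precisely the data underlying the free transitive graph over $G$.
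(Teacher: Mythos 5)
Your proposal is correct and follows essentially the same route as the paper: take $\eta$ to be the identity on $V$ (valid since evaluation only adds edges, so $E \subseteq E'$), get uniqueness of $\tilde f$ for free, and verify that $f$ remains a graph morphism into $H$ by induction on the stage at which each edge is added during Datalog evaluation, using transitivity of $H$ at each step. The only cosmetic difference is that you additionally spell out why $G'$ is transitive (fixed-pointness) and mention the least-fixed-point formulation, which the paper leaves implicit.
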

\begin{proof}
  First we must exhibit a canonical graph morphism $\eta : G \rightarrow G'$.
  As $G$ and $G'$ share the same set of vertices, we choose $\eta$ simply as identity map on $V$.
  Note that the identity on $V$ is indeed a graph morphism $(V, E) \rightarrow (V, E')$ because $E \subseteq E'$.
  
  Now we must show that for all graph morphisms $f : G \rightarrow H$ where $H = (V_H, E_H)$ is a transitive graph, there exists a unique graph morphism $\bar f : G' \rightarrow H$ such that the following triangle commutes:
  \begin{equation}
    \begin{tikzcd}
      G \arrow[r, "f"] \arrow[d, "\eta"'] & H \\
      G' \arrow[ur, dashed, "\exists! \bar f"']
    \end{tikzcd}
  \end{equation}
  Because $\eta$ is the identity map, it suffices to show that $f = \bar f$ also defines a graph morphisms $G' \rightarrow H$; uniqueness of $\bar f$ follows from surjectivity of $\eta$.
  Recall that $G'$ arises from repeatedly matching the premise of the transitivity axiom and adjoining its conclusion.
  Thus there is a finite chain
  \begin{equation}
    E = E_0  \subseteq E_1 \subseteq \dots \subseteq E_n = E'
  \end{equation}
  where for each $i$ there exist $a, b, c \in V$ such that
  \begin{equation}
    \label{eq:trans-graph-step}
    E_{i + 1} = E_{i} \cup \{(a, c\}
    \qquad\qquad
    (a, b), (b, c) \in E_i.
  \end{equation}
  By induction, it suffices to show for all $i$ that $f$ is a graph morphism $(V, E_{i + 1}) \rightarrow H$ assuming that $f$ is a graph morphisms $(V, E_i) \rightarrow H$.
  Choose $a, b, c$ such that $(a, b), (b, c) \in E_i$ and \eqref{eq:trans-graph-step} is satisfied.
  Because $f$ is a graph morphism $(V, E_i) \rightarrow (V, E_{i + 1})$, we have $(f(a), f(b)), (f(b), f(c)) \in E_H$.
  Because $H$ is transitive, it follows that $(f(a), f(c)) \in E_H$.
  Thus $f$ preserves the edge $(a, c)$ and hence constitutes a graph morphism $(V, E_{i + 1}) \rightarrow H$.
\end{proof}
Denote by $\mathrm{TGraph}$ the full subcategory of $\mathrm{Graph}$ given by the transitive graphs.
The inclusion functor $\mathrm{TGraph} \subseteq \mathrm{Graph}$ has a left adjoint, a \emph{reflector}, which is given by assigning a graph to its transitive hull.
Thus Proposition~\ref{prop:free-transitive-graph} shows that the transitivity Datalog program computes the reflector.
Our primary goal in this paper is to explore and extend a semantics of PHL along these lines.

\textbf{Outline and Contributions.}
In Section~\ref{sec:small-object-argument}, we review the \emph{small object argument} \citep[Theorem 2.1.14]{hovey-model-categories} as a method of computing weak reflections into subcategories of injective objects.
We introduce \emph{strong} classes of morphisms, for which the small object argument specializes to the \emph{orthogonal-reflection construction} \citep[Chapter 1.C]{locally-presentable-and-accessible-categories} and produces a reflection into orthogonal subcategories.

In Section~\ref{sec:rhl}, we introduce \emph{relational Horn logic (RHL)}.
RHL extends Datalog with sort quantification, with variables that occur only in the conclusion, and with equations.
Input data of Datalog programs generalize to finite \emph{relational structures}, and output data generalize to \emph{models}, i.e. relational structures that satisfy all sequents.

Our poof of the existence of free or weakly free models associates to each RHL sequent a \emph{classifying morphism} of relational structures.
Satisfaction of the sequent can be characterized as lifting property against the classifying morphism.
The small object argument now shows the existence of weakly free models.
From this perspective, we may thus understand the small object argument as an abstract formulation of Datalog evaluation.

In Section~\ref{sec:phl}, we extend RHL by function symbols to obtain \emph{partial Horn logic (PHL)}.
By identifying each function symbol with a relation symbol representing its graph and adding a functionality axiom, every PHL theory gives rise to a relational Horn logic theory with equivalent semantics.
For \emph{epic} PHL theories, where all variables must be introduced in the premise of a sequent, the associated RHL theory is strong.
Conversely, we show that the semantics of every strong RHL theory can be recovered as semantics of an epic PHL theory.
This justifies the usage of epic PHL as an equally powerful but syntactically better-behaved language compared to strong RHL.

The results of this paper serve as semantics of \href{https://github.com/eqlog/eqlog}{\emph{Eqlog}}, a Datalog engine that computes free models of epic PHL theories.
Eqlog's algorithm is based on an efficient implementation of the small object argument that combines optimized Datalog evaluation (semi-naive evaluation and indices) with techniques used in congruence closure algorithms.
The present paper focuses on the semantics of PHL and RHL, whereas the evaluation algorithm employed by Eqlog is presented in \citet{eqlog-algorithm}.
Independently of Eqlog and the work presented there, members of the Egg \citep{egg} community have recently created the \emph{Egglog} tool, which combines Datalog with e-graphs and is based on very similar ideas as those of Eqlog.

\section{The Small Object Argument}
\label{sec:small-object-argument}

This section is a review of the small object argument, which we shall in later sections come to understand as an abstract description of Datalog evaluation.
The concepts we discuss here are not new and are in fact widely known among homotopy theorists; see for example \citet{hovey-model-categories} for a standard exposition.
A minor innovation is our consideration of \emph{strong} sets:
Sets of morphisms for which injectivity coincides with orthogonality.
For strong sets, the small object argument yields a reflection into the orthogonal subcategory where in general we would obtain only a weak reflection into the injective subcategory.

The \emph{orthogonal-reflection construction} \citep[Chapter 1.C]{locally-presentable-and-accessible-categories} produces a reflection into the orthogonal subcategory for arbitrary sets of morphisms $M$.
We show that every set of morphism $M$ can be extended to a strong set $N$ such that $M$ and $N$ induce the same orthogonality class.
The small object argument for $N$ now specializes to the orthogonal-reflection construction for $M$.
Thus, the concept of strong morphisms can be used to understand the orthogonal-reflection construction as a specialized variation of the small object argument.

Fix a cocomplete locally small category $\mathcal{C}$ for the remainder of this section.
We reserve the word \emph{set} for a small set, while \emph{class} refers to a set in a larger set-theoretic universe that contains the collection of objects in $\mathcal{C}$.
All colimits of set-indexed diagrams in $\mathcal{C}$ exist, while colimits of class-indexed diagrams need not exist.

\begin{definition}
  Let $f : A \rightarrow B$ be a morphism and let $X$ be an object.
  We write $f \pitchfork X$ and say that $X$ is \emph{injective} to $f$ if for all maps $a : A \rightarrow X$ there exists a map $b : B \rightarrow X$ such that
  \begin{equation}
    \begin{tikzcd}
      A \arrow[r, "a"] \arrow[d, "f"'] & X \\
      B \arrow[ur, dashed, "\exists b"']
    \end{tikzcd}
  \end{equation}
  commutes.
  If furthermore $b$ is unique for all $a$, then we write $f \perp X$ and say that $X$ is \emph{orthogonal} to $f$.

  If $M$ is a class of morphisms, then we write $M \pitchfork X$ if $f \pitchfork X$ for all $f \in M$, and $M \perp X$ if $f \perp X$ for all $f \in M$.
  The full subcategories given by the injective and orthogonal objects, respectively, are denoted by $M^\pitchfork$ and $M^\perp$.
  We call $M^\pitchfork$ the \emph{injectivity class} of $M$ and $M^\perp$ the \emph{orthogonality class} of $M$.
\end{definition}

\begin{definition}
  \label{def:strong}
  A class $M$ of morphisms is called \emph{strong} if $M^\pitchfork = M^\perp$.
\end{definition}

One of the main sources of strong sets is the following proposition:
\begin{proposition}
  \label{prop:epi-strong}
  Let $M$ be a class of epimorphisms.
  Then $M$ is strong.
  \qed
\end{proposition}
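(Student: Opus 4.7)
The plan is to prove the equality $M^\pitchfork = M^\perp$ by showing both inclusions, where only one of the two requires the epimorphism hypothesis.

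First I would observe that the inclusion $M^\perp \subseteq M^\pitchfork$ holds for any class $M$ of morphisms, directly from the definitions: orthogonality to $f$ requires the existence of a unique lift, whereas injectivity requires only the existence of some lift. So every object orthogonal to all of $M$ is injective to all of $M$, regardless of whether the morphisms in $M$ are epimorphisms.

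Next I would prove the reverse inclusion $M^\pitchfork \subseteq M^\perp$ using the assumption. Fix an object $X$ with $M \pitchfork X$, fix $f : A \to B$ in $M$, and fix a morphism $a : A \to X$. By injectivity, there exists at least one $b : B \to X$ with $b \circ f = a$; the only thing left to show is uniqueness. Suppose $b, b' : B \to X$ both satisfy $b \circ f = a = b' \circ f$. Then $b \circ f = b' \circ f$, and since $f$ is by hypothesis an epimorphism, this forces $b = b'$. Hence $f \perp X$, and since $f \in M$ was arbitrary, $X \in M^\perp$.

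Combining both inclusions yields $M^\pitchfork = M^\perp$, which is the definition of strongness. There is no real obstacle here — the proof is a one-step diagram chase whose entire content is the fact that the uniqueness clause of orthogonality is automatic for epimorphisms. I would present it as a brief paragraph rather than a multi-part argument.
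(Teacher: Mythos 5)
Your proof is correct and is exactly the paper's argument: the paper disposes of this with the single remark that it ``follows immediately from right-cancellation,'' which is precisely the epimorphism step you spell out, with the trivial inclusion $M^\perp \subseteq M^\pitchfork$ left implicit. Your more explicit write-up is fine and adds nothing problematic.
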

\begin{proof}
  This follows immediately from right-cancellation.
\end{proof}

Another source of strong sets is the following proposition.
It lets us reduce questions about orthogonality classes to strong injectivity classes.
\begin{proposition}
  \label{prop:strongbization}
	Let $M$ be a class of morphisms.
	Then there exists a superclass $N \supseteq M$ such that $N$ is strong and $N^\pitchfork = M^\perp$.
  If $M$ is a set, then $N$ can be chosen as set.
\end{proposition}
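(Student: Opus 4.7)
The plan is to augment $M$ with morphisms whose injectivity condition encodes the \emph{uniqueness} part of orthogonality to $M$, so that injectivity to the enlarged class becomes equivalent to orthogonality to $M$. The key device is, for each $f : A \to B$ in $M$, the codiagonal $\nabla_f : B \sqcup_A B \to B$ out of the self-pushout of $f$ along itself, i.e.\ the unique map with $\nabla_f i_1 = \nabla_f i_2 = \mathrm{id}_B$, where $i_1, i_2 : B \to B \sqcup_A B$ are the coprojections.

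I would set $N := M \cup \{\nabla_f \mid f \in M\}$. The pushouts exist by cocompleteness of $\mathcal{C}$, and if $M$ is a set then $N$ is visibly a set too (one codiagonal per element of $M$). The first step is to verify $N^\pitchfork = M^\perp$. A morphism $B \sqcup_A B \to X$ amounts to a pair $b_1, b_2 : B \to X$ with $b_1 f = b_2 f$, and such a pair factors through $\nabla_f$ via some $b : B \to X$ precisely when $b = b_1 = b_2$. Hence $X \pitchfork \nabla_f$ says that any two lifts of a given $a : A \to X$ along $f$ must coincide, which is exactly the uniqueness part of $X \perp f$. Combined with $X \pitchfork f$ (existence), this yields $X \pitchfork N \iff X \perp M$.

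The second step is to show $N$ is strong, i.e.\ $N^\pitchfork \subseteq N^\perp$. Take $X \in N^\pitchfork = M^\perp$ and $g \in N$. If $g = f \in M$ then $X \perp f$ holds directly. If $g = \nabla_f$, then existence of lifts is part of injectivity; uniqueness is automatic because $\nabla_f$ is a split epimorphism with sections $i_1, i_2$: any two maps $h, h' : B \to X$ with $h \nabla_f = h' \nabla_f$ satisfy $h = h \nabla_f i_1 = h' \nabla_f i_1 = h'$. In spirit this is just Proposition~\ref{prop:epi-strong} applied to the class of codiagonals, noting that split epis are epis.

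I do not anticipate a real obstacle — the construction is entirely formal once one hits upon the idea of encoding uniqueness through the self-pushout. The only mild subtlety is the size claim: for each $f \in M$ one must choose a pushout, but this preserves set-hood since we are making one choice per element of $M$.
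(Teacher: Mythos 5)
Your proof is correct and takes essentially the same route as the paper: the same codiagonal $B \amalg_A B \to B$ encoding the uniqueness half of orthogonality, the same identification $N^\pitchfork = M^\perp$, and strongness of $N$ via the (split) epimorphism property of the collapse map. No gaps.
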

\begin{proof}
  Let $f : A \rightarrow B$ be a morphism in $M$.
  Then for each object $X$, the data of a single map $a : A \rightarrow X$ and two maps $b_1, b_2 : B \rightarrow X$ such that
  \begin{equation}
    \begin{tikzcd}
      A \arrow[r, "a"] \arrow[d, "f"'] & X \\
      B \arrow[ur, "b_i"'] &
    \end{tikzcd}
  \end{equation}
  commutes for $i \in \{1, 2\}$ is in bijective correspondence to a map $\langle b_1, b_2 \rangle : B \amalg_A B \rightarrow X$.
  Let
  \begin{equation}
    f' : B \amalg_A B \rightarrow B.
  \end{equation}
  be the canonical map that collapses the two copies of $B$ into one.
  Then $b_1 = b_2$ if and only if there exists a map $b$ such that
  \begin{equation}
    \begin{tikzcd}
      B \amalg_A B \arrow[r, "{\langle b_1, b_2\rangle}"] \arrow[d, "f'"'] & X \\
      B \arrow[ur, "b"'] &
    \end{tikzcd}
  \end{equation}
  commutes.
  The map $f'$ is an epimorphism.
  Thus if $b$ exists, then it exists uniquely, and $b = b_1 = b_2$.
  It follows that $X$ is orthogonal to $f$ if and only if $f$ is injective to both $f$ and $f'$.
  The desired class $N$ can thus be defined by $N = M \cup \{ f' \mid f \in M \}$.
\end{proof}

\begin{definition}
  A \emph{sequence} of morphisms is a diagram of the form
  \begin{equation}
    \begin{tikzcd}
      X_0 \arrow[r, "f_0"] & X_1 \arrow[r, "f_1"] & \dots
    \end{tikzcd}
  \end{equation}
  for a countable set $(f_n)_{n \in \mathbb{N}}$ of morphisms.
  The \emph{(infinite) composition} of a sequence of morphisms $(f_n)_{n \in \mathbb{N}}$ is the canonical map
  \begin{equation}
    X_0 \rightarrow X_\infty = \operatorname{colim}_{n \geq 0} X_n
  \end{equation}
  to the colimit of the sequence.
\end{definition}
Note that the composition of a sequence of morphisms is uniquely determined only up to a choice of colimit.

\begin{definition}
  \label{def:relative-cell-complex}
  Let $M$ be a class of morphisms.
  The class $\mathrm{Cell}(M)$ of \emph{relative $M$-cell complexes} is the least class of morphisms such that the following closure properties hold:
  \begin{enumerate}
    \item
      \label{itm:cell-base}
      $M \subseteq \mathrm{Cell}(M)$.
    \item
      \label{itm:cell-coproduct}
      $\mathrm{Cell}(M)$ is closed under coproducts.
      That is, if $(f_i : A_i \rightarrow B_i)_{i \in I}$ is a family of morphisms indexed by some set $I$ and $f_i \in \mathrm{Cell}(M)$ for all $i \in I$, then
      \begin{equation}
        \coprod_{i \in I} f_i : \coprod_{i \in I} A_i \rightarrow \coprod_{i \in I} B_i
      \end{equation}
      is in $\mathrm{Cell}(M)$.
    \item
      \label{itm:cell-pushout}
      $\mathrm{Cell}(M)$ is closed under pushouts.
      That is, if
      \begin{equation}
        \begin{tikzcd}
          A \arrow[d] \arrow[r, "f"] \arrow[dr, very near end, "\ulcorner", phantom] & B \arrow[d] \\
          X \arrow[r, "f'"] & Y
        \end{tikzcd}
      \end{equation}
      is a pushout square and $f \in \mathrm{Cell}(M)$, then $f' \in \mathrm{Cell}(M)$.
    \item
      \label{itm:cell-composition}
      $\mathrm{Cell}(M)$ is closed under composition of sequences.
      That is, if
      \begin{equation}
        \begin{tikzcd}
          A_0 \arrow[r, "f_0"] & A_1 \arrow[r, "f_1"] & \dots
        \end{tikzcd}
      \end{equation}
      is a sequence of morphisms $f_n \in \mathrm{Cell}(M)$ with composition $f : A_0 \rightarrow A_\infty$, then $f \in \mathrm{Cell}(M)$.
  \end{enumerate}
\end{definition}

\begin{remark}
  \label{rem:iterative-flattening}
  Standard literature on factorization systems and the closely related small object argument \citep{hovey-model-categories} usually considers not only countable sequences of morphisms but also arbitrary transfinite sequences, which are chains of morphisms indexed by an arbitrary ordinal number.
  In this more general setting, one typically defines a relative $M$-cell complex to be a transfinite composition of pushouts of morphisms in $M$ without mention of coproducts.

  This more common notion of relative $M$-cell complex satisfies our closure properties \ref{itm:cell-base} -- \ref{itm:cell-composition}.
  For \ref{itm:cell-coproduct}, one chooses a well-ordering on the indexing set $I$, and then computes the coproduct as composition of a chain indexed by this well-ordering.
  Conversely, our definition of relative $M$-cell complex is closed under arbitrary transfinite composition if all morphism in $M$ have finitely presentable domains and codomains (Definition~\ref{def:finitely-presentable}).
  Thus, whenever the domains and codomains of the morphisms in $M$ are finitely presentable, the definition given here and the usual one agree.
\end{remark}

\begin{proposition}
  \label{prop:iterative-relative-cell-complexes}
  Let $M$ be a class of morphisms.
  Define classes of morphisms $M \subseteq M_1 \subseteq M_2 \subseteq M_3$ as follows:
  \begin{mathpar}
    M_1 = M \cup \{ f \mid f \text{ is a coproduct of morphisms in } M \}
    \\
    M_2 = M_1 \cup \{ f \mid f \text{ is a pushout of a morphism in } M_1 \}
    \\
    M_3 = M_2 \cup \{ f \mid f \text{ is a composition of a sequence of morphisms in } M_2 \}
  \end{mathpar}
  Then $M_3 = \mathrm{Cell}(M)$.
\end{proposition}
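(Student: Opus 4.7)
The plan is to prove both inclusions $M_3 \subseteq \mathrm{Cell}(M)$ and $\mathrm{Cell}(M) \subseteq M_3$.

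For $M_3 \subseteq \mathrm{Cell}(M)$, I apply the four closure properties of $\mathrm{Cell}(M)$ in sequence along the chain $M \subseteq M_1 \subseteq M_2 \subseteq M_3$: property \ref{itm:cell-base} gives $M \subseteq \mathrm{Cell}(M)$, property \ref{itm:cell-coproduct} gives $M_1 \subseteq \mathrm{Cell}(M)$, property \ref{itm:cell-pushout} gives $M_2 \subseteq \mathrm{Cell}(M)$, and property \ref{itm:cell-composition} gives $M_3 \subseteq \mathrm{Cell}(M)$. The converse $\mathrm{Cell}(M) \subseteq M_3$ reduces to verifying that $M_3$ itself satisfies the four closure properties, since $\mathrm{Cell}(M)$ is the minimal such class; the base case $M \subseteq M_3$ is immediate.

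For closure under coproducts, I first check that $M_1$ and $M_2$ are closed under coproducts. The case of $M_1$ is trivial, since nested coproducts of $M$-morphisms flatten into a single coproduct. For $M_2$, every $M_1$-morphism is a pushout of itself along an identity, so a coproduct of $M_2$-morphisms is a coproduct of pushouts of $M_1$-morphisms, which by commutativity of coproducts with pushouts equals a pushout of a coproduct of $M_1$-morphisms, and hence again lies in $M_2$. For $M_3$, commutativity of coproducts with $\omega$-colimits gives $\coprod_i f_i = \operatorname{colim}_n \coprod_i g_{i,n}$ when $f_i = \operatorname{colim}_n g_{i,n}$, an $\omega$-composition of $M_2$-morphisms. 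Closure under pushouts follows by the parallel argument: the pasting lemma for pushouts establishes closure of $M_2$ under pushouts, and pushouts commute with $\omega$-colimits to handle $M_3$.

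The main obstacle is closure of $M_3$ under sequential composition. Given composable $(f_i)_i$ in $M_3$ with $f_i = \operatorname{colim}_n g_{i,n}$ and $g_{i,n} \in M_2$, naively concatenating the inner sequences would produce a transfinite sequence of order type $\omega \cdot \omega$, rather than the countable sequence permitted in Definition~\ref{def:relative-cell-complex}. The plan is a Fubini-style argument combined with a diagonal enumeration. The composition $A_0 \to \operatorname{colim}_i A_i$ coincides with the canonical map into the bifiltered colimit $\operatorname{colim}_{(i,n) \in \omega \times \omega} B_{i,n}$, and I construct a new $\omega$-sequence $(Y_k)_k$ with $Y_0 = A_0$ whose transition morphisms lie in $M_2$ and whose $\omega$-composition agrees with this canonical map. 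Fixing a Cantor-style bijection $\pi \colon \mathbb{N} \to \omega \times \omega$, at step $k$ I obtain $Y_{k+1}$ from $Y_k$ as the pushout of a finite coproduct of the $M_1$-morphisms underlying the $g_{i,n}$'s indexed by $\pi(0), \dots, \pi(k)$; such a step lies in $M_2$ because $M_1$ is closed under coproducts and $M_2$ is closed under pushouts. The delicate technical point is constructing coherent maps from the $M_1$-sources into the $Y_k$'s so that the resulting $\operatorname{colim}_k Y_k$ genuinely identifies with the bifiltered colimit of the $B_{i,n}$'s.
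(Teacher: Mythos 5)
Your overall strategy coincides with the paper's: the inclusion $M_3 \subseteq \mathrm{Cell}(M)$ is immediate from the closure properties, and the substance is verifying that $M_3$ itself satisfies properties \ref{itm:cell-base}--\ref{itm:cell-composition}, so that minimality gives $\mathrm{Cell}(M) \subseteq M_3$. Your verifications for coproducts and pushouts are correct, and in fact more explicit than the paper's blanket appeal to ``colimits commute with colimits''. The genuine gap is exactly the step you yourself flag and then leave open: closure of $M_3$ under composition of sequences. First, the objects $B_{i,n}$ do not form a diagram over the product $\omega \times \omega$: for $i \geq 1$ the chain underlying $f_i$ starts at $A_i = \operatorname{colim}_n B_{i-1,n}$, so the only natural indexing is the ordinal $\omega \cdot \omega$, and no Fubini-style interchange of a bifiltered colimit is available. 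Second, your diagonal step needs, at each finite stage $k$, a map from the domain of the $M_1$-morphism underlying $g_{i,n}$ into $Y_k$ compatible with its original attaching map; but that attaching map lands in $B_{i,n}$, an object that exists only after all (infinitely many) cells of $f_0, \dots, f_{i-1}$ have been attached, and with no finite-presentability hypothesis there is no reason it should factor through any finite stage. (As a smaller point, pushing out at stage $k$ the coproduct of all cells indexed by $\pi(0), \dots, \pi(k)$ would attach each cell infinitely often and change the colimit.)

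This is not a defect you can repair at the stated level of generality, because without smallness hypotheses the closure claim itself fails: take $\mathcal{C}$ to be the complete lattice $\{0 < 1 < \dots < \omega < \omega + 1\}$ viewed as a category and $M = \{ n \rightarrow n+1 \mid n \in \mathbb{N} \} \cup \{ \omega \rightarrow \omega+1 \}$. Every morphism in $M_2$ whose domain is a finite ordinal has finite codomain, so any sequence of $M_2$-morphisms starting at $0$ has composition landing in at most $\omega$; hence $0 \rightarrow \omega+1 \notin M_3$, yet $0 \rightarrow \omega+1 \in \mathrm{Cell}(M)$, being the composition of the sequence $0 \rightarrow \omega \rightarrow \omega+1 \rightarrow \omega+1 \rightarrow \dots$ whose first map is the composition of $0 \rightarrow 1 \rightarrow 2 \rightarrow \dots$ and whose identities arise as pushouts of the empty coproduct. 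So any correct proof of property \ref{itm:cell-composition} for $M_3$ must invoke finite presentability of the domains and codomains of the morphisms in $M$ (the situation in all of the paper's applications, cf.\ Remark~\ref{rem:iterative-flattening}); under that hypothesis your interleaving plan does go through, since each attaching map then factors through a finite stage and the cells can be rearranged into a single $\omega$-sequence. For comparison, the paper's own proof compresses the entire verification into ``colimits commute with colimits'', which justifies your coproduct and pushout cases but silently passes over precisely the compositional step on which your proposal (rightly) hesitates.
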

\begin{proof}
  Coproducts, pushouts and compositions of sequences are all defined via colimits.
  Because colimits commute with colimits, $M_3$ is closed under coproducts, pushouts and compositions of sequences.
  It follows that $\mathrm{Cell}(M) \subseteq M_3$, hence $\mathrm{Cell}(M) = M_3$.
\end{proof}

\begin{proposition}
  Let $M$ be a class of morphisms.
  Then $\mathrm{Cell}(M)^\pitchfork = M^\pitchfork$ and $\mathrm{Cell}(M)^\perp = M^\perp$.
\end{proposition}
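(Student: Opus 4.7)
The plan is as follows. Since $M \subseteq \mathrm{Cell}(M)$, the inclusions $\mathrm{Cell}(M)^\pitchfork \subseteq M^\pitchfork$ and $\mathrm{Cell}(M)^\perp \subseteq M^\perp$ are immediate. For the converse direction, I would fix an object $X$ with $M \pitchfork X$ (respectively $M \perp X$) and define $N_X$ to be the class of morphisms $f$ such that $f \pitchfork X$ (respectively $f \perp X$). It suffices to show that $N_X$ satisfies the four closure properties from Definition~\ref{def:relative-cell-complex}, since by minimality this then forces $\mathrm{Cell}(M) \subseteq N_X$, i.e. $X \in \mathrm{Cell}(M)^\pitchfork$ (respectively $X \in \mathrm{Cell}(M)^\perp$). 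Property~\ref{itm:cell-base} is the hypothesis on $X$; the remaining three closure properties follow from the universal properties of the colimits involved.

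For the injectivity case: given a coproduct $\coprod_i f_i$ and a map $a : \coprod_i A_i \to X$, decompose $a$ into its components $a_i : A_i \to X$ via the universal property, lift each $a_i$ to $b_i : B_i \to X$ using $f_i \pitchfork X$, and assemble the $b_i$ into the desired lift $\coprod_i B_i \to X$. For a pushout of $f : A \to B$ along $A \to X'$, producing $f' : X' \to Y$, and a map $a' : X' \to X$, lift the composite $a' \circ (A \to X')$ along $f$ to obtain $b : B \to X$; then $(a', b)$ is a cocone on the pushout span and induces the required map $Y \to X$. For the composition $f : A_0 \to A_\infty$ of a sequence $(f_n)$ and a map $a_0 : A_0 \to X$, inductively construct $a_{n+1} : A_{n+1} \to X$ by lifting $a_n$ along $f_n$; the family $(a_n)$ is by construction compatible and yields a map $A_\infty \to X$ out of the colimit.

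For the orthogonality case, the same constructions produce the required lifts; the additional task is to propagate uniqueness. For coproducts, two lifts that agree with $a$ must agree after precomposition with each coproduct inclusion, hence are equal by orthogonality to each $f_i$ together with the universal property of $\coprod_i B_i$. For pushouts, given two lifts $b_1', b_2' : Y \to X$ of $a'$ along $f'$, they agree on $X' \to Y$ by construction and must agree on $B \to Y$ because both $b_i' \circ (B \to Y)$ lift $a' \circ (A \to X')$ along $f$, so orthogonality forces them to coincide and the pushout universal property concludes. For sequential compositions, two lifts $a_\infty', a_\infty'' : A_\infty \to X$ agreeing on $A_0$ restrict to families $a_n', a_n''$ that agree inductively by orthogonality to each $f_n$, so they are equal by the uniqueness part of the colimit universal property. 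I do not expect a serious obstacle here; the only point requiring care is keeping track of which universal property delivers existence versus uniqueness in the orthogonality argument, particularly for pushouts where both legs must be checked separately.
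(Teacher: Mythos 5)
Your proposal is correct and follows exactly the paper's strategy: the easy inclusions from $M \subseteq \mathrm{Cell}(M)$, and then verifying that for fixed $X$ the class of morphisms injective (resp.\ orthogonal) to $X$ satisfies the closure properties of Definition~\ref{def:relative-cell-complex}, so that it contains $\mathrm{Cell}(M)$ by minimality. The paper dismisses the closure verifications as routine and only spells out the pushout case; your case-by-case treatment, including the uniqueness propagation for orthogonality, fills in the same details correctly.
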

\begin{proof}
  If $M \supseteq N$ is an inclusion of classes of morphisms, then in general $M^\pitchfork \subseteq N^\pitchfork$ and $M^\perp \subseteq N^\perp$.
  This proves the inclusions $\subseteq$.

  Conversely, it suffices to show for $X \in M^\pitchfork$ that the class
  \begin{equation}
    N = \{f \in \operatorname{Mor} \mathcal{C} \mid f \pitchfork X \}
  \end{equation}
  satisfies the closure properties \ref{itm:cell-base} -- \ref{itm:cell-composition} of Definition~\ref{def:relative-cell-complex}, and similarly for orthogonality.
  This is routine.
  For example, closure under pushouts can be proved as follows.
  Let $X$ be injective to $f : A \rightarrow B$, let $f' : A' \rightarrow B'$ be a pushout of $f$, and let $A' \rightarrow X$ be an arbitrary morphism.
  The lift $B' \rightarrow X$ can then be obtained from a lift $B \rightarrow X$ and the universal property of the pushout as depicted in the following commuting diagram:
  \begin{equation}
    \begin{tikzcd}
      A \arrow[d, "f"] \arrow[r] & A' \arrow[r] \arrow[d, "f'"] & X \\
      B \arrow[r] \arrow[u] \arrow[rru, dashed, bend right=45] & B' \arrow[ur, dashed]
    \end{tikzcd}
  \end{equation}
  If the lift $B \rightarrow X$ is unique, then also $B' \rightarrow X$ is unique by uniqueness of the morphism induced by the universal property of the pushout.
\end{proof}

\begin{definition}
  \label{def:reflection}
  Let $\mathcal{C}' \subseteq \mathcal{C}$ be a full subcategory.
  A \emph{weak reflection} of an object $X \in \mathcal{C}$ into $\mathcal{C}'$ is a map $\eta : X \rightarrow X'$ such that $X' \in \mathcal{C}'$ and every map $X \rightarrow Y$ with $Y \in \mathcal{C}'$ factors via $\eta$.
  If the factorization is unique for all $X \rightarrow Y$, then $\eta$ is a \emph{reflection}.
  A \emph{(weak) reflector} consists of a functor $F : \mathcal{C} \rightarrow \mathcal{C}'$ and a natural transformation $\eta : \mathrm{Id} \rightarrow F$ such that $\eta_X$ is a (weak) reflection for all $X \in \mathcal{C}$.
  The subcategory $\mathcal{C}'$ is \emph{(weakly) reflective} in $\mathcal{C}$ if there exists a (weak) reflector.
\end{definition}

\begin{proposition}
  \label{prop:factoring-via-cell-complexes}
  Let $M$ be a class of morphisms.
  Let $f : X \rightarrow Y$ be a relative $M$-cell complex, and let $g : X \rightarrow Z$ be a map with $Z \in M^\pitchfork$.
  Then there is a map $h : Y \rightarrow Z$ such that $hf = g$.
  If furthermore $M$ is strong, then $h$ is unique.
\end{proposition}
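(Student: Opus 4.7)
The plan is to derive this proposition almost immediately from the preceding one, which states $\mathrm{Cell}(M)^\pitchfork = M^\pitchfork$ and $\mathrm{Cell}(M)^\perp = M^\perp$. The content of the current proposition is just a restatement of those equalities after unpacking the definitions of $(-)^\pitchfork$ and $(-)^\perp$ for the specific cell complex $f : X \to Y$ and target $Z$.

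First I would handle existence. From the hypothesis $Z \in M^\pitchfork$ and the equality $\mathrm{Cell}(M)^\pitchfork = M^\pitchfork$ established previously, it follows that $Z \in \mathrm{Cell}(M)^\pitchfork$. Unwinding this means $f \pitchfork Z$ for every $f \in \mathrm{Cell}(M)$, in particular for the given $f$. Applied to the map $g : X \to Z$, this yields a map $h : Y \to Z$ with $hf = g$.

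For the uniqueness claim, suppose additionally that $M$ is strong, so $M^\perp = M^\pitchfork$ by Definition~\ref{def:strong}. Then $Z \in M^\perp$, and invoking the other half of the preceding proposition, $Z \in \mathrm{Cell}(M)^\perp$. Therefore $f \perp Z$, which by definition gives uniqueness of the lift $h$.

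I do not expect any genuine obstacle, since the induction over the closure operations of Definition~\ref{def:relative-cell-complex} has already been done (implicitly) in the proof of the previous proposition. The only thing to be careful about is bookkeeping of the two cases (existence versus uniqueness) and citing the correct equality ($\pitchfork$ or $\perp$) in each. Had the previous proposition not been available, one would instead prove the statement by induction over $\mathrm{Cell}(M)$ via Proposition~\ref{prop:iterative-relative-cell-complexes}, verifying directly that the class of morphisms satisfying the lifting property against $Z$ is closed under coproducts, pushouts, and compositions of sequences, with uniqueness preserved under the additional strongness assumption.
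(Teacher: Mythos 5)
Your proof is correct and is essentially the paper's argument: the paper proves the statement by observing that the class of morphisms having the stated lifting (respectively unique lifting) property against $Z$ satisfies the closure properties of Definition~\ref{def:relative-cell-complex}, which is precisely the induction you note was already carried out in establishing $\mathrm{Cell}(M)^\pitchfork = M^\pitchfork$ and $\mathrm{Cell}(M)^\perp = M^\perp$. Packaging it as an immediate corollary of that earlier proposition is legitimate and involves no circularity, since that proposition was proved independently.
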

\begin{proof}
  This follows from the fact that the class of morphisms $f$ for which the proposition holds satisfies properties \ref{itm:cell-base} -- \ref{itm:cell-composition} of Definition~\ref{def:relative-cell-complex}.
\end{proof}

\begin{proposition}
  \label{prop:relative-m-cell-complex-is-reflection}
  Let $M$ be a class of morphisms.
  Let $f : X \rightarrow Y$ be a relative $M$-cell complex such that $Y \in M^\pitchfork$.
  Then $f$ is a weak reflection into $M^\pitchfork$.
  If $M$ is strong, then $f$ is a reflection.
\end{proposition}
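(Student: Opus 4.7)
The plan is to observe that this proposition is essentially a repackaging of Proposition~\ref{prop:factoring-via-cell-complexes} in the language of Definition~\ref{def:reflection}, so no new work is needed beyond unpacking definitions. Recall that a weak reflection of $X$ into a full subcategory $\mathcal{C}'$ is a morphism $\eta : X \to X'$ with $X' \in \mathcal{C}'$ such that every $g : X \to Z$ with $Z \in \mathcal{C}'$ factors through $\eta$, and a reflection is one for which the factorization is unique.

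First I would take $\mathcal{C}' = M^\pitchfork$ as the target subcategory and verify the two clauses. The hypothesis $Y \in M^\pitchfork$ immediately gives that the codomain of $f$ lies in the target subcategory. For the factorization clause, I would apply Proposition~\ref{prop:factoring-via-cell-complexes} directly: given any $g : X \to Z$ with $Z \in M^\pitchfork$, since $f$ is a relative $M$-cell complex, there exists $h : Y \to Z$ with $hf = g$. This establishes that $f$ is a weak reflection.

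For the strong case, I would again invoke the second half of Proposition~\ref{prop:factoring-via-cell-complexes}, which asserts that when $M$ is strong the factorization $h$ is unique. This promotes the weak reflection to a genuine reflection.

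There is no real obstacle here; the content of the statement is fully contained in the previous proposition, and the proof is at most a one-line citation of that result together with the observation that the definitions line up. If anything, the only thing to be careful about is that the definition of weak reflection requires the factorization property against \emph{all} objects of the target subcategory, not just against objects arising from some construction, which is exactly the generality in which Proposition~\ref{prop:factoring-via-cell-complexes} is stated.
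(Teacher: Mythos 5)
Your proposal is correct and matches the paper's proof, which consists precisely of the one-line citation of Proposition~\ref{prop:factoring-via-cell-complexes}; your unpacking of Definition~\ref{def:reflection} just makes the routine verification explicit.
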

\begin{proof}
  By Proposition~\ref{prop:factoring-via-cell-complexes}.
\end{proof}

\begin{definition}
  \label{def:finitely-presentable}
  An object $X$ is \emph{finitely presentable} if the hom-functor $\mathrm{Hom}(X, -) : \mathcal{C} \rightarrow \mathrm{Set}$ preserves filtered colimits.
\end{definition}

\begin{proposition}[Small Object Argument: Property]
  \label{prop:small-object-argument-prop}
  Let $M$ be a class of morphisms with finitely presentable domains and codomains.
  Let
  \begin{equation}
    \begin{tikzcd}
      X_0 \arrow[r, "x_0"] & X_1 \arrow[r, "x_1"] & \dots
    \end{tikzcd}
  \end{equation}
  be a sequence in $M$ such that the following holds:
  \begin{enumerate}
    \item
      $x_n$ is a relative $M$-cell complex for all $n$.
    \item
      \label{itm:m-factorizations-exist}
      For all $f : A \rightarrow B$ in $M$, $n \geq 0$ and maps $a : A \rightarrow X_n$, there exists a map a map $b : B \rightarrow X_m$ for some $m \geq n$ such that
      \begin{equation}
        \begin{tikzcd}
          A \arrow[rrr, "f"] \arrow[d, "a"'] & & & B \arrow[d, "b"] \\
          X_n \arrow[r, "x_n"] & X_{n + 1} \arrow[r, "x_{n + 1}"] & \dots \arrow[r, "x_{m - 1}"] & X_m
        \end{tikzcd}
      \end{equation}
      commutes.
  \end{enumerate}
  Then the infinite composition $X_0 \rightarrow X_\infty$ of the $x_n$ is a weak reflection into $M^\pitchfork$.
  If $M$ is strong, then $X_0 \rightarrow X_\infty$ is a reflection.
\end{proposition}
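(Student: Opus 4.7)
The plan is to combine two observations: that $X_0 \to X_\infty$ is itself a relative $M$-cell complex, and that $X_\infty$ lies in $M^\pitchfork$. Once both are established, Proposition~\ref{prop:relative-m-cell-complex-is-reflection} delivers the conclusion immediately, including the strengthening to a genuine reflection when $M$ is strong.

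For the first observation, each $x_n$ is a relative $M$-cell complex by hypothesis, and the infinite composition $X_0 \to X_\infty$ is by definition the composition of the sequence $(x_n)_{n \in \mathbb{N}}$. Closure property \ref{itm:cell-composition} of Definition~\ref{def:relative-cell-complex} then shows that $X_0 \to X_\infty$ itself belongs to $\mathrm{Cell}(M)$.

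For the second observation, I would show that $X_\infty$ is injective (respectively orthogonal) to every $f : A \to B$ in $M$. Let $a : A \to X_\infty$ be a morphism. The key step here, and the main technical point of the whole proof, is that since $A$ is finitely presentable and $X_\infty = \operatorname{colim}_{n} X_n$ is a filtered (in fact, sequential) colimit, the map $a$ factors through some $X_n$ as $a = \iota_n \circ a'$ with $a' : A \to X_n$, where $\iota_n : X_n \to X_\infty$ denotes the coprojection. Applying hypothesis \ref{itm:m-factorizations-exist} to $a'$ produces an $m \geq n$ and a map $b' : B \to X_m$ with $b' \circ f = x_{m-1} \circ \dots \circ x_n \circ a'$. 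Setting $b = \iota_m \circ b'$ and using compatibility of the coprojections with the transition maps, we get $b \circ f = \iota_m \circ b' \circ f = \iota_n \circ a' = a$, as desired. Hence $X_\infty \in M^\pitchfork$.

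With both observations in hand, Proposition~\ref{prop:relative-m-cell-complex-is-reflection} applies directly: $X_0 \to X_\infty$ is a relative $M$-cell complex landing in $M^\pitchfork$, so it is a weak reflection into $M^\pitchfork$, and a reflection when $M$ is strong. The main obstacle is really only the factorization of $a$ through some finite stage $X_n$, which is precisely the content of finite presentability applied to the filtered colimit defining $X_\infty$; everything else is bookkeeping assembled from the previously established closure and lifting lemmas.
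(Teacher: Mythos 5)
Your proposal is correct and follows essentially the same route as the paper: infinite composition keeps $X_0 \rightarrow X_\infty$ in $\mathrm{Cell}(M)$, finite presentability of $A$ factors any $a : A \rightarrow X_\infty$ through a finite stage, hypothesis \ref{itm:m-factorizations-exist} supplies the lift, and Proposition~\ref{prop:relative-m-cell-complex-is-reflection} concludes, including the strong case. No gaps.
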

\begin{proof}
  Because $\mathrm{Cell}(M)$ is closed under infinite composition, the map $X_0 \rightarrow X_\infty$ is a relative $M$-cell complex.
  Thus by Proposition~\ref{prop:relative-m-cell-complex-is-reflection}, it suffices to show that $X_\infty$ is in $M^\pitchfork$.

  Let $f : A \rightarrow B$ be in $M$ and let $a : A \rightarrow X_\infty$.
  Because $A$ is finitely presentable, there exists $n \in \mathbb{N}$ and $a_n : A \rightarrow X_n$ such that $a$ factors as $A \rightarrow X_n \rightarrow X_\infty$.
  By assumption \ref{itm:m-factorizations-exist}, there exist $m$ and $b_m : B \rightarrow X_m$ that commutes with $f$, $a_n$ and $x_{m - 1} \circ \dots \circ x_n$.
  Thus if we define $b$ as composition $B \rightarrow X_m \rightarrow X_\infty$, then $a = b \circ f$.
\end{proof}

\begin{proposition}[Small Object Argument: Existence]
  \label{prop:small-object-argument-exist}
  Let $M$ be a set of morphism with finitely presentable domains and codomains, and let $X$ be an object.
  Then there exists a sequence
  \begin{equation}
    \begin{tikzcd}
      X = X_0 \arrow[r, "x_0"] & X_1 \arrow[r, "x_1"] & \dots
    \end{tikzcd}
  \end{equation}
  satisfying the conditions of Proposition~\ref{prop:small-object-argument-prop}.
  In particular, $M^\pitchfork$ is a (weakly) reflective subcategory of $\mathcal{C}$.
\end{proposition}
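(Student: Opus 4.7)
The plan is to construct the sequence $X = X_0 \to X_1 \to \dots$ by a ``greedy'' step that, at each stage, adjoins a solution to every possible lifting problem against a morphism in $M$ simultaneously. Concretely, suppose $X_n$ has been constructed. Define the indexing set
\begin{equation}
  S_n = \{ (f, a) \mid f : A \to B \text{ in } M,\ a : A \to X_n \}.
\end{equation}
Because $M$ is a set and $\mathcal{C}$ is locally small, each $\operatorname{Hom}(\operatorname{dom} f, X_n)$ is a set and $S_n$ is a set. Form the coproduct $\coprod_{(f,a) \in S_n} f$ of the morphisms in $M$ indexed by $S_n$. The components $a : A \to X_n$ assemble into a single map $\coprod_{(f,a)} A \to X_n$, and we define $X_{n+1}$ as the pushout
\begin{equation}
  \begin{tikzcd}
    \coprod_{(f,a)} A \arrow[d, "{\coprod f}"'] \arrow[r] \arrow[dr, very near end, "\ulcorner", phantom] & X_n \arrow[d, "x_n"] \\
    \coprod_{(f,a)} B \arrow[r] & X_{n+1}
  \end{tikzcd}
\end{equation}

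First I would verify that each $x_n$ is a relative $M$-cell complex. By Proposition~\ref{prop:iterative-relative-cell-complexes}, $\mathrm{Cell}(M)$ contains all coproducts of morphisms in $M$ (closure \ref{itm:cell-coproduct}), and then all pushouts of such coproducts (closure \ref{itm:cell-pushout}). Since $x_n$ is by construction a pushout of a coproduct of morphisms in $M$, we obtain $x_n \in \mathrm{Cell}(M)$.

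Next I would verify condition \ref{itm:m-factorizations-exist} of Proposition~\ref{prop:small-object-argument-prop}. Given $f : A \to B$ in $M$, an index $n \geq 0$ and a map $a : A \to X_n$, the pair $(f, a)$ lies in $S_n$ by definition. The corresponding coproduct coprojection $B \to \coprod_{(f,a)} B$ composed with the pushout leg $\coprod_{(f,a)} B \to X_{n+1}$ yields a map $b : B \to X_{n+1}$, and commutativity of the pushout square gives precisely $b \circ f = x_n \circ a$. Hence we can take $m = n+1$. The hypotheses of Proposition~\ref{prop:small-object-argument-prop} are therefore satisfied, so the infinite composition $X_0 \to X_\infty$ is a weak reflection into $M^\pitchfork$, and a reflection whenever $M$ is strong, which proves that $M^\pitchfork$ is (weakly) reflective in $\mathcal{C}$.

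The only real subtlety is set-theoretic: one must ensure the coproduct at each stage is indexed by a set rather than a proper class, and this is exactly where the hypothesis that $M$ is a set (as opposed to a class) is used, together with local smallness of $\mathcal{C}$. The finite presentability assumption plays no role here; it is needed only downstream in Proposition~\ref{prop:small-object-argument-prop} to pass from $X_\infty$ back to some finite stage $X_n$.
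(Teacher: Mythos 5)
Your construction is exactly the paper's: at each stage you form the set of all pairs $(f, a)$ with $f \in M$ and $a : \operatorname{dom} f \to X_n$, push out the coproduct $\coprod f$ along the induced map into $X_n$, and verify the two conditions of Proposition~\ref{prop:small-object-argument-prop} with $m = n + 1$. The argument is correct, including the set-theoretic remark and the observation that finite presentability is only needed downstream, so there is nothing to add.
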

\begin{proof}
  It suffices to construct a relative $M$-cell complexes $X \rightarrow Y$ such that for every $f : A \rightarrow B$ in $M$ and $a : A \rightarrow X$, there exists a commuting diagram
  \begin{equation}
    \begin{tikzcd}
      A \arrow[r, "f"] \arrow[d, "a"'] & B \arrow[d, "b"] \\
      X \arrow[r] & Y.
    \end{tikzcd}
  \end{equation}
  We then obtain the desired sequence by induction.

  Let $K$ be the set of pairs $(f, a)$, where $f : A \rightarrow B$ is a morphism in $M$ and $a : A \rightarrow X$.
  Note that $K$ is a set because $M$ is a set and $\mathrm{Hom}(A, X)$ is a set for all $A$.
  Now let $X \rightarrow Y$ be the map defined by the following pushout diagram:
  \begin{equation}
    \begin{tikzcd}
      \coprod_{(f, a) \in K} \operatorname{dom} f \arrow[r] \arrow[d] \arrow[dr, phantom, very near end, "\ulcorner"] & \coprod_{(f, a) \in K} \operatorname{cod} f \arrow[d] \\
      X \arrow[r] & Y
    \end{tikzcd}
  \end{equation}
  Here the top map is the coproduct $\coprod_{(f, a) \in K} f$, and the left vertical map $\langle a \rangle_{(f, a) \in K}$ is induced by the universal property of coproducts.
\end{proof}

\begin{proposition}
  \label{prop:reflection-vs-injectivity}
  Let $M$ be a strong set of morphisms, and let $f : A \rightarrow B$.
  Denote by $\bar f : \bar A \rightarrow \bar B$ the reflection of $f$ into $M^\pitchfork$.
  Then the following equations among injectivity and orthogonality classes hold:
  \begin{mathpar}
    (M \cup \{ f \})^\perp = (M \cup \{ \bar f \})^\perp
    \and
    (M \cup \{ f \})^\pitchfork = (M \cup \{ \bar f \})^\pitchfork.
  \end{mathpar}
\end{proposition}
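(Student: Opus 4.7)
The plan is to reduce both equalities to a single hom-set diagram chase that exploits the strongness of $M$. First, I would invoke Proposition~\ref{prop:small-object-argument-exist} applied to the strong set $M$: this yields a reflector $F$ of $\mathcal{C}$ into $M^\pitchfork = M^\perp$, with unit $\eta : \mathrm{Id} \to F$ whose components are relative $M$-cell complexes. Writing $\bar A = FA$, $\bar B = FB$ and $\bar f = Ff$, naturality of $\eta$ gives the commuting square $\bar f \circ \eta_A = \eta_B \circ f$.

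The key observation is that any object $X$ lying in either of the classes $(M \cup \{f\})^\pitchfork$ or $(M \cup \{\bar f\})^\pitchfork$ automatically lies in $M^\pitchfork = M^\perp$, since both classes contain $M$. Because $\eta_A$ and $\eta_B$ are reflections (not merely weak reflections) into $M^\perp$, precomposition with them defines bijections $\eta_A^* : \mathrm{Hom}(\bar A, X) \to \mathrm{Hom}(A, X)$ and $\eta_B^* : \mathrm{Hom}(\bar B, X) \to \mathrm{Hom}(B, X)$. The naturality square then induces a commutative square of hom-sets
\[
\begin{tikzcd}
\mathrm{Hom}(\bar B, X) \arrow[r, "\bar f^*"] \arrow[d, "\cong"'] & \mathrm{Hom}(\bar A, X) \arrow[d, "\cong"] \\
\mathrm{Hom}(B, X) \arrow[r, "f^*"] & \mathrm{Hom}(A, X)
\end{tikzcd}
\]
whose verticals are bijections. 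Consequently $f^*$ is surjective (respectively, bijective) if and only if $\bar f^*$ is, and these four conditions are precisely $f \pitchfork X$, $\bar f \pitchfork X$, $f \perp X$ and $\bar f \perp X$. Since both sides of each claimed equality restrict $X$ to lie in $M^\pitchfork$, both equalities of injectivity and orthogonality classes follow.

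I do not foresee a genuine obstacle. The only substantive ingredient is that strongness of $M$ upgrades the weak reflection supplied by Proposition~\ref{prop:small-object-argument-exist} to an actual reflection, which is what makes the vertical arrows in the square above bijections rather than mere surjections; this is exactly what lets the $\pitchfork$ and the $\perp$ statements be read off the same diagram.
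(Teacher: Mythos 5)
Your argument is correct and is essentially the paper's own proof: both reduce the claim to the observation that, for $X \in M^\pitchfork = M^\perp$, the reflection units induce bijections $\mathrm{Hom}(\bar A, X) \cong \mathrm{Hom}(A, X)$ and $\mathrm{Hom}(\bar B, X) \cong \mathrm{Hom}(B, X)$ which identify the lifting problems (and their solutions) for $f$ against $X$ with those for $\bar f$ against $X$. The only cosmetic difference is your appeal to Proposition~\ref{prop:small-object-argument-exist} to manufacture the reflector, which is superfluous (and tacitly imports a finite-presentability hypothesis not in the statement), since the proposition already presupposes the reflection $\bar f$ together with the commuting square $\bar f \circ \eta_A = \eta_B \circ f$.
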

\begin{proof}
  Let $X$ be orthogonal (equivalently: injective) to $M$.
  Then there is a bijective correspondence between solutions to the following lifting problems:
  \begin{equation}
    \begin{tikzcd}    
      A \arrow[r, "a"] \arrow[d, "f"'] & X \\
      B \arrow[ur, dashed]
    \end{tikzcd}    
    \qquad
    \begin{tikzcd}    
      \bar A \arrow[r, "a'"] \arrow[d, "\bar f"'] & X \\
      \bar B. \arrow[ur, dashed]
    \end{tikzcd}    
  \end{equation}
  Here $a$ is an arbitrary map, and $a' : \bar A \rightarrow X$ is induced from $a$ by the universal property of $\bar A$ and $X$ being orthogonal to $\bar f$.
\end{proof}

\section{Relational Horn Logic}
\label{sec:rhl}

Relational Horn Logic (RHL) is a superset of Datalog.
Most notably, RHL allows equations, and in particular equations in conclusions.
Our semantics of RHL are based on \emph{relational structures}, which we introduce in Section~\ref{subsec:relational-structures}.
In Section~\ref{subsec:rhl-semantics}, we then consider syntax and semantics of RHL.
We show that RHL models can be characterized using lifting properties against \emph{classifying morphisms}.
This enables us to apply the small object argument to prove the existence of (weakly) free models, in close analogy to Datalog evaluation.
In Section~\ref{subsec:rhl-completeness}, we prove a completeness result for the descriptive power of RHL:
Every finitary injectivity class of relational structures can be obtained as semantics of an RHL theory.
In Section~\ref{subsec:datalog-vs-rhl}, we identify in detail the subset of RHL that corresponds to Datalog.
We then explain how the computation of free RHL models can be reduced to evaluation of Datalog with minor extensions via the \emph{setoid transformation}.

\subsection{Relational Structures}
\label{subsec:relational-structures}

\begin{definition}
  A \emph{relational signature} $\mathfrak{S}$ is given by the following data:
  \begin{itemize}
    \item
      A set $S$ of \emph{sort symbols}.
    \item
      A set $R$ of \emph{relation symbols}.
    \item
      A map that assigns to each relation symbol $r \in R$ an \emph{arity}
      \begin{equation}
        r : s_1 \times \dots \times s_n
      \end{equation}
      of sort symbols $s_1, \dots, s_n \in S$ for $n \geq 0$.
  \end{itemize}
\end{definition}

\begin{definition}
  Let $\mathfrak{S} = (S, R)$ be a relational signature.
  A \emph{relational $\mathfrak{S}$-structure} consists of the following data:
  \begin{itemize}
    \item
      For each sort symbol $s \in S$, a \emph{carrier set} $X_s$.
    \item
      For each relation symbol $r \in R$ with arity $r : s_1 \times \dots \times s_n$, a relation $r_X \subseteq X_{s_1} \times \dots \times X_{s_n}$.
  \end{itemize}

  A \emph{morphism of relational structures} $f : X \rightarrow Y$  consists of functions $f_s : X_s \rightarrow Y_s$ for $s \in S$ that are compatible with the relations $r_X$ and $r_Y$ for all $r$.
  That is, we require that if $(x_1, \dots, x_n) \in r_X$ for some relation symbol $r : s_1 \times \dots \times s_n$, then $(f_{s_1}(x_1), \dots, f_{s_n}(x_n)) \in r_Y$.
  The category of relational structures is denoted by $\mathrm{Rel}(\mathfrak{S})$.
\end{definition}

When no confusion can arise, we suppress sort annotations.
Thus if $X$ is a relational structure, then we write $x \in X$ to mean that $x \in X_s$ for some $s \in S$.
Similarly, if $f : X \rightarrow Y$ is a morphism of relational structures and $x \in X_s$, then we often denote the image of $x$ under $f$ by $f(x)$ instead of $f_s(x)$.
If the signature $\mathfrak{S}$ is clear from context, we abbreviate $\mathrm{Rel}(\mathfrak{S})$ as $\mathrm{Rel}$.

There is an evident forgetful functor $\mathrm{Rel}(\mathfrak{S}) \rightarrow \mathrm{Set}^S$ to the $S$-ary product of the category of sets, which is given by discarding the relations.
When we mention the \emph{carrier sets} of a relational structure, we mean the result of applying this forgetful functor.

\begin{proposition}
  \label{prop:adjoints-signature-inclusion}
  Let $\mathfrak{S} = (S, R)$ and $\mathfrak{S}' = (S', R')$ be relational signatures such that $\mathfrak{S}'$ extends $\mathfrak{S}$, in the sense that $S \subseteq S', R \subseteq R'$ and $\mathfrak{S}$ and $\mathfrak{S}'$ assign the same arities to relation symbols $r \in R$.
  Then the evident forgetful functor $\mathrm{Rel}(\mathfrak{S}') \rightarrow \mathrm{Rel}(\mathfrak{S})$ has both a left adjoint and a right adjoint.
  Both adjoints are sections to the forgetful functor, that is, both composites
  \begin{equation}
    \begin{tikzcd}
      \mathrm{Rel}(\mathfrak{S}) \arrow[r, shift left] \arrow[r, shift right] & \mathrm{Rel}(\mathfrak{S}') \arrow[r] & \mathrm{Rel}(\mathfrak{S})
    \end{tikzcd}
  \end{equation}
  are identity functors.
\end{proposition}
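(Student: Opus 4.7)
The plan is to exhibit the left adjoint $L$ and right adjoint $R$ to the forgetful functor $U : \mathrm{Rel}(\mathfrak{S}') \to \mathrm{Rel}(\mathfrak{S})$ explicitly, as the minimal and maximal extensions of an $\mathfrak{S}$-structure to an $\mathfrak{S}'$-structure. The section property $U \circ L = \mathrm{Id} = U \circ R$ will then be immediate from the construction.

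Concretely, for an $\mathfrak{S}$-structure $X$, I would define $L(X)$ by copying the carrier sets $X_s$ and relations $r_X$ for all $s \in S$ and $r \in R$, setting $L(X)_{s'} = \emptyset$ for every new sort $s' \in S' \setminus S$, and setting $r_{L(X)} = \emptyset$ for every new relation $r \in R' \setminus R$. Dually, $R(X)$ keeps the old carrier sets and relations unchanged, assigns a singleton $R(X)_{s'} = \{*\}$ to every new sort $s'$, and lets $r_{R(X)}$ be the full product $R(X)_{s_1} \times \dots \times R(X)_{s_n}$ for every new relation $r$ of arity $s_1 \times \dots \times s_n$. Applying $U$ to either $L(X)$ or $R(X)$ literally returns $X$, which yields both section statements on the nose; functoriality of $L$ and $R$ is inherited from the obvious action on morphisms.

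For the adjunction $L \dashv U$, I would exhibit a natural bijection between morphisms $L(X) \to Y$ in $\mathrm{Rel}(\mathfrak{S}')$ and morphisms $X \to U(Y)$ in $\mathrm{Rel}(\mathfrak{S})$. From left to right, restrict to the old sorts. From right to left, extend by the unique empty function on each new sort; compatibility with new relations is vacuous because $r_{L(X)} = \emptyset$. These assignments are clearly mutually inverse and natural in $X$ and $Y$. For $U \dashv R$, morphisms $Y \to R(X)$ correspond to morphisms $U(Y) \to X$ by restricting to old sorts and, conversely, extending by the unique function into the singleton on each new sort; compatibility with new relations is automatic since $r_{R(X)}$ is the total relation.

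The only subtlety is to verify that the new-sort and new-relation components of morphisms in $\mathrm{Rel}(\mathfrak{S}')$ are fully determined in both directions of the two bijections, which reduces to the observations that $\emptyset$ is initial and that singletons carrying total relations are terminal among structures on the new sorts. Beyond this, the verification is entirely formal.
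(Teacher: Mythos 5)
Your proposal is correct and follows essentially the same route as the paper's proof: the left adjoint extends by empty carriers and empty relations, and the right adjoint by singleton carriers and full relations, with the section property holding on the nose. You simply spell out the unit/counit-style bijections and the initial/terminal observations that the paper leaves implicit.
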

\begin{proof}
  Let $X$ be a relational $\mathfrak{S}$-structure.
  Let $s \in S' \setminus S$ and let $r : s_1 \times \dots \times s_n$ be in $R' \setminus R$.
  The left adjoint extends $X$ to a relational $\mathfrak{S}'$-structure $Y$ by $Y_s = \emptyset$ and $r_{Y} = \emptyset$.
  The right adjoint extends $X$ to a relational $\mathfrak{S}'$ structure $Z$ such that $Z_s = \{ * \}$ is a singleton set and $r_Z = Z_{s_1} \times \dots \times Z_{s_n}$.
\end{proof}

\begin{proposition}
  \label{prop:rel-cocomplete}
  Let $\mathfrak{S} = (S, R)$ be a relational signature.
  Then $\mathrm{Rel}(\mathfrak{S})$ is complete and cocomplete, and the forgetful functor $\mathrm{Rel}(\mathfrak{S}) \rightarrow \mathrm{Set}^S$ preserves limits and colimits.
\end{proposition}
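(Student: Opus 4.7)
The plan is to construct all (co)limits explicitly by computing them first on carrier sets in $\mathrm{Set}^S$, and then equipping the resulting carrier with an appropriate interpretation of each relation symbol. Preservation by the forgetful functor will then be immediate by construction, and since $\mathrm{Set}^S$ is complete and cocomplete, this will establish both completeness and cocompleteness of $\mathrm{Rel}(\mathfrak{S})$.

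For limits, let $D : \mathcal{J} \to \mathrm{Rel}(\mathfrak{S})$ be a small diagram. First I would form the limit $L \in \mathrm{Set}^S$ of carriers, with sortwise projections $\pi_j : L \to D(j)$. For each relation symbol $r : s_1 \times \dots \times s_n$, I set
\begin{equation}
r_L = \{ (x_1,\dots,x_n) \in L_{s_1} \times \dots \times L_{s_n} \mid (\pi_j(x_1),\dots,\pi_j(x_n)) \in r_{D(j)} \text{ for all } j \in \mathcal{J}\}.
\end{equation}
By definition each $\pi_j$ is then a morphism of relational structures, and any cone $(\varphi_j : X \to D(j))_{j}$ in $\mathrm{Rel}(\mathfrak{S})$ induces a unique carrier-map $\varphi : X \to L$; the conditions above show that $\varphi$ lands in $r_L$ whenever the source tuple lies in $r_X$, so $\varphi$ is a morphism of relational structures. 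Uniqueness is inherited from the universal property in $\mathrm{Set}^S$.

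For colimits, I again form the colimit $C \in \mathrm{Set}^S$ of carriers with coprojections $\iota_j : D(j) \to C$, and for each $r : s_1 \times \dots \times s_n$ I define $r_C$ as the smallest relation containing the images of all $r_{D(j)}$, namely
\begin{equation}
r_C = \bigcup_{j \in \mathcal{J}} \{ (\iota_j(x_1),\dots,\iota_j(x_n)) \mid (x_1,\dots,x_n) \in r_{D(j)} \}.
\end{equation}
By construction each $\iota_j$ is a morphism of relational structures. Given a cocone $(\psi_j : D(j) \to Y)_j$, the induced carrier-map $\psi : C \to Y$ exists uniquely in $\mathrm{Set}^S$, and preserves each relation because every generator of $r_C$ is an image of some tuple in $r_{D(j)}$, which $\psi_j = \psi \circ \iota_j$ maps into $r_Y$. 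Uniqueness is again inherited from $\mathrm{Set}^S$.

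In both constructions the underlying carriers coincide with the (co)limit in $\mathrm{Set}^S$, so the forgetful functor $\mathrm{Rel}(\mathfrak{S}) \to \mathrm{Set}^S$ preserves limits and colimits. The only slightly delicate step is verifying that the induced mediating morphism between relational structures actually respects the relations; this is routine for limits (a tuple satisfying all projections satisfies the defining condition of $r_L$) and requires the ``generated by coprojected images'' choice of $r_C$ for colimits, which I expect to be the subtlest point to state carefully.
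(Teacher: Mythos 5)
Your proof is correct and follows essentially the same route as the paper: carriers are computed in $\mathrm{Set}^S$, limits get the limit relations (tuples all of whose projections lie in the component relations) and colimits get the union of the images of the component relations, with the universal properties checked sortwise. The only cosmetic difference is that the paper deduces preservation of (co)limits from the forgetful functor having both adjoints (Proposition~\ref{prop:adjoints-signature-inclusion}), whereas you read it off directly from the explicit constructions, which is equally valid.
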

\begin{proof}
  Limit and colimit preservation follows from Proposition~\ref{prop:adjoints-signature-inclusion}, since the forgetful functor is induced by the extension of signatures $(S, \emptyset) \subseteq (S, R)$.
  Limits commute with other limits and in particular products.
  Thus, limits of relational structures can be constructed as limits of carriers endowed with the limits of relation sets.

  The construction of colimits is more involved because products do not generally commute with quotients.
  Let $D : I \rightarrow \mathrm{Rel}(\mathfrak{S})$ be a diagram of relational structures.
  We define the carrier sets of our candidate colimit structure $X$ by the colimit of carrier sets.
  That is,
  \begin{equation}
    X_s = \operatorname*{colim}_{i \in I} D(i)_s
  \end{equation}
  for all $s \in S$.
  We obtain evident maps $(p_i)_s : D(i)_s \rightarrow X_s$ for all objects $i$ in $I$ and $s \in S$.
  Let $r : s_1 \times \dots \times s_n$ be a relation symbol.
  Then we define $r_X$ as union over the images of the $r_{D(i)}$.
  Thus,
  \begin{equation}
    r_X = \bigcup_{i \in I} p_i(r_{D(i)})
  \end{equation}
  where $p_i(r_{D(i)}) = ((p_i)_{s_1} \times \dots \times (p_i)_{s_n})(r_{D(i)})$.
\end{proof}

\begin{definition}
  \label{def:finite-relational-structure}
  Let $\mathfrak{S} = (S, R)$ be a relational signature.
  A relational $\mathfrak{S}$-structure $X$ is \emph{finite} if
  \begin{equation}
    \label{eq:finite-relational-structure}
    \sum_{s \in S} |X_s| + \sum_{r \in R} |r_X| < \infty,
  \end{equation}
  that is, if all the $X_s$ and $r_X$ are finite and almost always empty.
\end{definition}

\begin{proposition}
  \label{prop:finite-equiv-finitely-presentable}
  Let $\mathfrak{S} = (S, R)$ be a relational signature.
  Then a relational $\mathfrak{S}$-structure is finite if and only if it is a finitely presentable object in $\mathrm{Rel}(\mathfrak{S})$.
\end{proposition}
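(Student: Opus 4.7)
The plan is to prove each implication separately, using the explicit description of filtered colimits in $\mathrm{Rel}(\mathfrak{S})$ supplied by Proposition~\ref{prop:rel-cocomplete} in one direction, and the ``finite substructures'' trick in the other.

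For the implication finite $\Rightarrow$ finitely presentable, let $X$ be finite, let $D : I \rightarrow \mathrm{Rel}(\mathfrak{S})$ be a filtered diagram with colimit $(p_i : D(i) \rightarrow Y)_{i \in I}$, and let $f : X \rightarrow Y$ be a morphism. I would first handle the carriers: because $Y_s = \operatorname*{colim}_{i} D(i)_s$ in $\mathrm{Set}$ and $X_s$ is finite and nonempty for only finitely many $s$, each of the finitely many elements $f(x)$ for $x \in X_s$ has some preimage in some $D(i)$, and by filteredness I can find a single index $i_0$ and a map $g : X \rightarrow D(i_0)$ on carriers such that $p_{i_0} \circ g = f$ componentwise. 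Next I would turn $g$ into a morphism of relational structures: for each of the finitely many tuples $(x_1,\dots,x_n) \in r_X$, the tuple $(f(x_1),\dots,f(x_n)) \in r_Y$ lies in the image $p_j(r_{D(j)})$ for some $j$ (by the description of $r_Y$ in Proposition~\ref{prop:rel-cocomplete}), so by filteredness I can advance $i_0$ to a common index $i_1$ where all finitely many tuples of $g$ land in $r_{D(i_1)}$. A similar filteredness argument shows that any two factorizations $g, g'$ of $f$ eventually become equal, so $\mathrm{Hom}(X, -)$ preserves the filtered colimit.

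For the converse, I would use that every relational structure $X$ is the filtered colimit of the directed system of its finite substructures $X_i \hookrightarrow X$, where $X_i$ picks a finite subset of each carrier and the restricted relations. If $X$ is finitely presentable, then $\mathrm{id}_X \in \mathrm{Hom}(X, X) = \operatorname*{colim}_{i} \mathrm{Hom}(X, X_i)$ factors through some inclusion $\iota : X_i \hookrightarrow X$ via a section $s : X \rightarrow X_i$. Since $\iota$ is componentwise injective and $\iota \circ s = \mathrm{id}_X$, each component $\iota_t$ is a bijection, forcing $(X_i)_t = X_t$; and the morphism condition on $s$ together with $s = \iota^{-1}$ forces $r_X = r_{X_i}$ for every relation symbol. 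Hence $X = X_i$ is finite.

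The main obstacle I anticipate is the forward direction, specifically keeping track of several successive filteredness arguments: first to find a common index for the carrier factorization, then for the relation constraints, and finally for uniqueness up to later identification. Once these are arranged, the backward direction is a routine retract argument, relying only on the fact that monomorphisms in $\mathrm{Rel}(\mathfrak{S})$ are exactly the componentwise injective morphisms, which is immediate from the construction of limits in Proposition~\ref{prop:rel-cocomplete}.
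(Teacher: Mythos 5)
Your proof is correct, and the forward direction is essentially the paper's argument: pick representatives for the finitely many carrier elements, use filteredness to land in a single stage, then advance the index again so that the finitely many tuples of each relation are witnessed there; you additionally check that two factorizations are eventually identified, i.e.\ injectivity of the canonical map $\operatorname{colim}_i \mathrm{Hom}(X, D(i)) \rightarrow \mathrm{Hom}(X, Y)$, which the paper's proof leaves implicit — a welcome extra. The backward direction is where you take a mildly different route: the paper argues contrapositively, exhibiting for an infinite $X$ a strictly increasing chain $X_0 \subset X_1 \subset \dots$ of substructures with union $X$ and noting that the canonical isomorphism $X \cong \operatorname{colim}_n X_n$ factors through no $X_n$, whereas you take the canonical filtered diagram of \emph{all} finite substructures and run a retract argument: $\mathrm{id}_X = \iota \circ s$ with $\iota$ a componentwise-injective inclusion forces the carriers to coincide, and the morphism condition on $s = \iota^{-1}$ forces $r_X = r_{X_i}$, so $X$ is itself finite. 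Your version is slightly cleaner in that it needs no explicit construction of a cofinal countable chain (which the paper must implicitly justify even for uncountable structures), while the paper's version makes the failure of finite presentability vivid; both hinge on the same computation that a substructure through which the identity factors must equal $X$ on carriers and relations.
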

\begin{proof}
  Let $X$ be a finite relational structure and let 
  \begin{equation}
    f : X \rightarrow Y = \operatorname{colim} D = \coprod_{i \in I} D(i) / \sim
  \end{equation}
  be a map to a filtered colimit.
  Then the image of each element $x \in X$ is represented by some element $y_x \in D(i_x)$.
  Since $X$ contains only finitely many elements and $D$ is directed, we may assume that $i_x = i_{x'} = i$ is constant over all $x, x' \in X$.
  For each tuple $t = (x_1, \dots, x_n) \in r_X$ for some $r \in R$ we have that $([y_{x_1}], \dots, [y_{x_n}]) \in r_Y$.
  Since there are only finitely many $t$, we may again increase $i$ so that $(y_{x_1}, \dots, y_{x_n}) \in r_{D(i)}$.
  Now $X \rightarrow Y$ factors via $D(i)$.

  Conversely, if a relational structure $X$ is not finite, then there exists a strictly increasing sequence of relational substructures
  \begin{equation}
    X_0 \subset X_1 \subset X_2 \subset \dots \subset X
  \end{equation}
  such that $\bigcup_{n \geq 0} X_n = X$.
  Then the canonical map $X = \bigcup_{n \geq 0} X_n \cong \operatorname{colim}_{n \geq 0} X_n$ does not factor via any $X_n$, so $X$ is not finitely presentable.
\end{proof}

\subsection{Syntax and Semantics}
\label{subsec:rhl-semantics}

Fix a relational signature $\mathfrak{S} = (S, R)$.
We assume a countable supply of variable symbols $v$, each annotated with a sort $s \in S$.
\begin{definition}
  \label{def:relational-syntax}
  An \emph{RHL atom} is an expression of one of the following forms:
  \begin{enumerate}
    \item
      \label{itm:relation-atom}
      A \emph{relation atom} $r(v_1, \dots, v_n)$, where $r : s_1 \times \dots \times s_n$ is a relation symbol and the $v_i$ are variables of sort $s_i$ for all $i \in \{1, \dots, n\}$.
    \item
      A \emph{sort quantification atom} $v \downarrow$ where $v$ is a variable.
    \item
      An \emph{equality atom} $u \equiv v$, where $u$ and $v$ are variables of the same sort.
  \end{enumerate}
  An \emph{RHL formula} is a finite conjunction $\phi_1 \land \dots \land \phi_n$ of RHL atoms $\phi_i$.
  An \emph{RHL sequent} is an implication $\mathcal{F} \implies \mathcal{G}$ of RHL formulas $\mathcal{F}, \mathcal{G}$.
  An \emph{RHL theory} is a set of RHL sequents.
\end{definition}
Observe that, since we assume that each variable has an intrinsically associated sort, annotating sort quantification atoms $v \downarrow$ with a sort $s$ would be redundant.
We reserve the ${\equiv}$ symbol for RHL syntax, while the ${=}$ symbol is used for meta-theoretical equality.
We always assume that meta-theoretical equality binds weaker than RHL connectives, so that $\mathcal{F} = \phi_1 \land \phi_2$ states that $\mathcal{F}$ is equal to the syntactic object $\phi_1 \land \phi_2$, and $\mathcal{S} = \mathcal{F} \implies \mathcal{G}$ states that $\mathcal{S}$ is equal to the sequent $\mathcal{F} \implies \mathcal{G}$.

\begin{definition}
  \label{def:relational-semantics}
  Let $X$ be a relational structure.
  An \emph{interpretation of a set of variables $V$} in $X$ is a map $I$ that assigns to each variable $v \in V$ of sort $s$ an element $I(v) \in X_s$.
  An \emph{interpretation of an RHL atom $\phi$} in $X$ is an interpretation $I$ of the variables occurring in $\phi$ such that one of the following conditions holds:
  \begin{enumerate}
    \item
      $\phi = r(v_1, \dots, v_n)$ is a relation atom and $(I(v_1), \dots, I(v_n)) \in r_X$.
    \item
      $\phi = v \downarrow$ is a sort quantification atom, without further assumptions.
    \item
      $\phi = u \equiv v$ is an equality atom and $I(u) = I(v)$.
  \end{enumerate}
  An \emph{interpretation of an RHL formula $\mathcal{F} = \phi_1 \land \dots \land \phi_n$} in $X$ is an interpretation of the variables occurring in $\mathcal{F}$ that restricts to an interpretation of $\phi_i$ for each $i \in \{1, \dots, n\}$.

  A relational structure $X$ \emph{satisfies} an RHL sequent $\mathcal{F} \implies \mathcal{G}$ if each interpretation of $\mathcal{F}$ in $X$ can be extended to an interpretation of $\mathcal{F} \land \mathcal{G}$ in $X$.
  A \emph{model} of a theory $T$ is a relational structure that satisfies all sequents in $T$.
  The \emph{category of models} $\mathrm{Mod}(T)$ is the full subcategory of relational structures given by the models of $T$.
\end{definition}

\begin{definition}
  We associate to each RHL atom $\phi$ a \emph{classifying relational structure} $[\phi]$ and a \emph{generic interpretation} $I_\phi$ of $\phi$ in $[\phi]$ as follows:
  \begin{enumerate}
    \item
      If $\phi = r(v_1, \dots, v_n)$ where $r : s_1 \times \dots \times s_n$, then the carriers of $[\phi]$ are given by distinct elements $I_\phi(v_i) \in [\phi]_{s_i}$ and a single tuple $(I_\phi(v_1), \dots, I_\phi(v_n)) \in r_{[\phi]}$.
      The relations $r'_{[\phi]}$ for $r \neq r'$ are empty.
    \item
      If $\phi = v \downarrow$, where $v$ has sort $s$, then $[\phi]_s$ contains a single element $I_\phi(v)$.
      All other carrier sets and all relations are empty.
    \item
      If $\phi = v_1 \equiv v_2$, where $v_1$ and $v_2$ have sort $s$, then $[\phi]_s$ contains a single element $I_\phi(v_1) = I_\phi(v_2)$.
      All other carrier sets and all relations are empty.
  \end{enumerate}

  Let $\mathcal{F} = \phi_1 \land \dots \land \phi_n$ be an RHL formula.
  The \emph{classifying relational structure} $[\mathcal{F}]$ of $\mathcal{F}$ is the quotient
  \begin{equation}
    ([\phi_1] \amalg \dots \amalg [\phi_n]) / \sim
  \end{equation}
  where $\sim$ is the relation given by
  \begin{equation}
    I_{\phi_i}(v) \sim I_{\phi_j}(v)
  \end{equation}
  for all $i, j \in \{1, \dots, n\}$ and variables $v$ occurring in both $\phi_i$ and $\phi_j$, and the generic interpretation $I_\mathcal{F}$ is the amalgamation of the interpretations $I_{\phi_i}$.
\end{definition}

\begin{proposition}
  \label{prop:relational-formula-structure-universal}
  Let $\mathcal{F}$ be an RHL formula and let $X$ be a relational structure.
  Then there is a bijection between interpretations of $\mathcal{F}$ in $X$ and maps $[\mathcal{F}] \rightarrow X$.
\end{proposition}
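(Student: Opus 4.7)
The plan is to prove the bijection atom-by-atom and then assemble the formula case from the universal property of the quotient of coproduct defining $[\mathcal{F}]$.

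First I would verify the statement for each type of RHL atom $\phi$ directly from the definition of $[\phi]$. In each case the carriers of $[\phi]$ consist only of the elements named by the generic interpretation $I_\phi$, so specifying a morphism $f : [\phi] \rightarrow X$ is the same as choosing, for every variable $v$ in $\phi$, an element $f(I_\phi(v)) \in X_{s(v)}$. I then check that the compatibility constraints on $f$ match precisely the conditions that these elements form an interpretation of $\phi$: for a relation atom $r(v_1,\dots,v_n)$, preservation of the single tuple in $r_{[\phi]}$ is equivalent to $(f(I_\phi(v_1)),\dots,f(I_\phi(v_n))) \in r_X$; for a sort quantification atom $v \downarrow$ there are no relational constraints, matching the absence of semantic conditions; and for an equality atom $v_1 \equiv v_2$, the fact that $I_\phi(v_1) = I_\phi(v_2)$ in $[\phi]$ forces $f(I_\phi(v_1)) = f(I_\phi(v_2))$, matching $I(v_1) = I(v_2)$. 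The inverse assignment (send an interpretation $I$ to the unique $f$ with $f(I_\phi(v)) = I(v)$) is visibly well-defined in each case.

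For a general formula $\mathcal{F} = \phi_1 \land \dots \land \phi_n$, I would use that $[\mathcal{F}]$ is by definition the coequalizer expressing the quotient $([\phi_1] \amalg \dots \amalg [\phi_n])/\sim$. By Proposition~\ref{prop:rel-cocomplete}, $\mathrm{Rel}(\mathfrak{S})$ is cocomplete and colimits are computed on carriers in the expected way, so the universal property of this colimit identifies $\mathrm{Hom}([\mathcal{F}], X)$ with the set of families $(f_i : [\phi_i] \rightarrow X)_{i=1}^n$ such that $f_i(I_{\phi_i}(v)) = f_j(I_{\phi_j}(v))$ whenever a variable $v$ occurs in both $\phi_i$ and $\phi_j$. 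By the atom case, each family corresponds to a tuple $(I_i)_{i=1}^n$ of interpretations of the atoms $\phi_i$, and the agreement condition on shared variables is precisely what Definition~\ref{def:relational-semantics} demands of an interpretation of $\mathcal{F}$. Composing these bijections gives the claimed bijection between maps $[\mathcal{F}] \rightarrow X$ and interpretations of $\mathcal{F}$ in $X$.

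The only mildly delicate point is making the quotient/coproduct description of $[\mathcal{F}]$ align cleanly with the notion of interpretation. It is easy to be sloppy about whether the relation $\sim$ only identifies elements of the shared variables or forces additional identifications through the relational structure. Using Proposition~\ref{prop:rel-cocomplete}, which says that the forgetful functor to $\mathrm{Set}^S$ preserves colimits, I can compute the carriers of $[\mathcal{F}]$ as the set-level pushout that glues the $[\phi_i]$ along their common variables, with the relations generated only by the images of the $r_{[\phi_i]}$. This confirms that no spurious identifications or tuples are introduced, so the universal property I need really is the naive one. No other step should present any real obstacle.
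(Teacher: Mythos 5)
Your proposal is correct and follows essentially the same route as the paper: one direction is composition with the generic interpretation (equivalently, reading off values at the elements named by variables), and the other glues the atom-level maps $[\phi_i] \rightarrow X$ along shared variables via the quotient-of-coproduct presentation of $[\mathcal{F}]$, using that colimits in $\mathrm{Rel}(\mathfrak{S})$ are computed on carriers with relations given by images. Your atom-by-atom verification and the explicit appeal to the colimit's universal property just spell out what the paper leaves as ``evident.''
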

\begin{proof}
  If $f : [\mathcal{F}] \rightarrow X$, then $f \circ I_\mathcal{F}$ is an interpretation of $\mathcal{F}$ in $X$.
  Conversely, let $\mathcal{F} = \phi_1 \land \dots \land \phi_n$ for RHL atoms $\phi_i$.
  Then every interpretation $I$ of $\mathcal{F}$ restricts to an interpretation of $\phi_i$ for each $i$.
  The carrier sets of $[\phi_i]$ are defined using the variables of $\phi_i$, which defines an evident map $[\phi_i] \rightarrow X$.
  Since the restrictions of $I$ to the variables in each $\phi_i$ agree on variables that occur simultaneously in two atoms, the individual maps $[\phi_i] \rightarrow X$ glue to a map $[\mathcal{F}] \rightarrow X$.
\end{proof}

\begin{definition}
  Let $\mathcal{S} = \mathcal{F} \implies \mathcal{G}$ be an RHL sequent.
  The \emph{classifying morphism} of $\mathcal{S}$ is the map $[\mathcal{S}] : [\mathcal{F}] \rightarrow [\mathcal{F} \land \mathcal{G}]$ that is induced by the canonical interpretation of $\mathcal{F}$ in $[\mathcal{F} \land \mathcal{G}]$.
\end{definition}

\begin{proposition}
  \label{prop:relational-sequent-morphism-lifts}
  Let $\mathcal{S}$ be an RHL sequent and let $X$ be a relational structure.
  Then $X$ satisfies $\mathcal{S}$ if and only if $X$ is injective to $[\mathcal{S}]$.
\end{proposition}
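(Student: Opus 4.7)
The plan is to translate both sides of the claimed equivalence through Proposition~\ref{prop:relational-formula-structure-universal}, so that the condition on interpretations in Definition~\ref{def:relational-semantics} becomes literally the lifting diagram defining injectivity. First I would fix notation: write $\mathcal{S} = \mathcal{F} \implies \mathcal{G}$, so $[\mathcal{S}]$ is the morphism $[\mathcal{F}] \to [\mathcal{F} \land \mathcal{G}]$ that corresponds, under the bijection of Proposition~\ref{prop:relational-formula-structure-universal}, to the restriction of the generic interpretation $I_{\mathcal{F} \land \mathcal{G}}$ to the variables of $\mathcal{F}$.

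The key intermediate observation is that, for any relational structure $X$, the precomposition map
\begin{equation}
  (- ) \circ [\mathcal{S}] \colon \mathrm{Hom}([\mathcal{F} \land \mathcal{G}], X) \longrightarrow \mathrm{Hom}([\mathcal{F}], X)
\end{equation}
is, modulo the bijections of Proposition~\ref{prop:relational-formula-structure-universal}, exactly the map that restricts an interpretation of $\mathcal{F} \land \mathcal{G}$ in $X$ to an interpretation of $\mathcal{F}$. This is a direct unpacking of the definition of $[\mathcal{S}]$: if $b : [\mathcal{F} \land \mathcal{G}] \to X$ corresponds to the interpretation $J = b \circ I_{\mathcal{F} \land \mathcal{G}}$, then $b \circ [\mathcal{S}]$ corresponds to $b \circ [\mathcal{S}] \circ I_\mathcal{F} = b \circ I_{\mathcal{F} \land \mathcal{G}}|_\mathcal{F} = J|_\mathcal{F}$, since $[\mathcal{S}]$ is by definition the morphism whose composition with $I_\mathcal{F}$ recovers the canonical $\mathcal{F}$-interpretation inside $[\mathcal{F} \land \mathcal{G}]$.

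With this in hand, both directions of the equivalence are immediate. Injectivity of $X$ to $[\mathcal{S}]$ says that every morphism $a : [\mathcal{F}] \to X$ factors as $b \circ [\mathcal{S}]$ for some $b : [\mathcal{F} \land \mathcal{G}] \to X$; transported across the bijection, this says that every interpretation $I$ of $\mathcal{F}$ in $X$ is the restriction of some interpretation $J$ of $\mathcal{F} \land \mathcal{G}$ in $X$, which is precisely the satisfaction clause in Definition~\ref{def:relational-semantics}. Conversely, any such extension $J$ yields a lift $b$ via Proposition~\ref{prop:relational-formula-structure-universal}.

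I expect the only real obstacle to be bookkeeping: carefully checking that the restriction of an interpretation of $\mathcal{F} \land \mathcal{G}$ to the variables occurring in $\mathcal{F}$ is indeed an interpretation of $\mathcal{F}$, and that the bijection of Proposition~\ref{prop:relational-formula-structure-universal} is natural in exactly the sense needed so that precomposition with $[\mathcal{S}]$ implements this restriction. Both checks are routine once one writes out the coproduct-and-quotient construction of $[\mathcal{F}]$ and $[\mathcal{F} \land \mathcal{G}]$ and observes that the coequalizing relation on variables shared between atoms is exactly what ensures well-definedness of the restriction.
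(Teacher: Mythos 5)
Your proposal is correct and follows exactly the paper's argument: apply the universal property of classifying structures (Proposition~\ref{prop:relational-formula-structure-universal}) to both $[\mathcal{F}]$ and $[\mathcal{F} \land \mathcal{G}]$, and observe that precomposition with $[\mathcal{S}]$ corresponds to restricting an interpretation of $\mathcal{F} \land \mathcal{G}$ to the variables of $\mathcal{F}$, so that the lifting condition becomes the satisfaction clause verbatim. This is the same route the paper takes, with the same key observation.
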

\begin{proof}
  Let $\mathcal{S} = \mathcal{F} \implies \mathcal{G}$.
  By Proposition~\ref{prop:relational-formula-structure-universal}, interpretations $I$ of the premise $\mathcal{F}$ correspond to maps $\langle I \rangle : [\mathcal{F}] \rightarrow X$, and interpretations $J$ of $\mathcal{F} \land \mathcal{G}$ correspond to maps $\langle J \rangle : [\mathcal{F} \land \mathcal{G}] \rightarrow X$.
  The map $[\mathcal{S}] : [\mathcal{F}] \rightarrow [\mathcal{F} \land \mathcal{G}]$ is given by restriction of the generic interpretation of $\mathcal{F} \land \mathcal{G}$ in $[\mathcal{F} \land \mathcal{G}]$ to the variables occurring in $\mathcal{F}$.
  It follows that
  \begin{equation}
    \begin{tikzcd}
      \left[\mathcal{F}\right] \arrow[r, "\langle I \rangle"] \arrow[d, "{[\mathcal{S}]}"'] & X \\
      \left[\mathcal{F} \land \mathcal{G}\right] \arrow[ur, "\langle J \rangle"']
    \end{tikzcd}
  \end{equation}
  commutes if and only if $I$ is a restriction of $J$.
\end{proof}

\begin{proposition}
  \label{prop:composition-rhl-sequent-morphisms}
  Let $\mathcal{F}, \mathcal{G}$ and $\mathcal{H}$ be RHL formulas.
  Then
  \begin{equation}
    [\mathcal{F} \land \mathcal{G} \implies \mathcal{H}] \circ [\mathcal{F} \implies \mathcal{G}] \cong [\mathcal{F} \implies \mathcal{G} \land \mathcal{H}].
  \end{equation}
\end{proposition}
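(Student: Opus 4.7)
The plan is to verify the statement via the universal property of classifying structures from Proposition~\ref{prop:relational-formula-structure-universal}. Both morphisms in question have domain $[\mathcal{F}]$, and their codomains are classifying structures for the triple conjunction of $\mathcal{F}, \mathcal{G}, \mathcal{H}$: on the left, the conjunction is bracketed as $(\mathcal{F} \land \mathcal{G}) \land \mathcal{H}$, on the right as $\mathcal{F} \land (\mathcal{G} \land \mathcal{H})$. The $\cong$ in the statement accommodates this bracketing. I would first produce the canonical associativity isomorphism $[(\mathcal{F} \land \mathcal{G}) \land \mathcal{H}] \cong [\mathcal{F} \land (\mathcal{G} \land \mathcal{H})]$ by noting that, by Proposition~\ref{prop:relational-formula-structure-universal} applied iteratively, maps out of either structure into an arbitrary relational structure $X$ correspond to the same data, namely interpretations of $\mathcal{F}$, $\mathcal{G}$, and $\mathcal{H}$ in $X$ that agree on common variables. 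By the Yoneda lemma, this gives a canonical iso.

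Having identified the codomains, it suffices to check that both maps induce the same interpretation of $\mathcal{F}$ in the combined classifying structure. Unfolding definitions, $[\mathcal{F} \implies \mathcal{G}]$ sends $I_\mathcal{F}(v) \in [\mathcal{F}]$ to $I_{\mathcal{F} \land \mathcal{G}}(v) \in [\mathcal{F} \land \mathcal{G}]$ for every variable $v$ of $\mathcal{F}$, and then $[\mathcal{F} \land \mathcal{G} \implies \mathcal{H}]$ sends $I_{\mathcal{F} \land \mathcal{G}}(v)$ to $I_{\mathcal{F} \land \mathcal{G} \land \mathcal{H}}(v)$. The direct morphism $[\mathcal{F} \implies \mathcal{G} \land \mathcal{H}]$ sends $I_\mathcal{F}(v)$ directly to $I_{\mathcal{F} \land \mathcal{G} \land \mathcal{H}}(v)$. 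Hence both morphisms are induced by the same interpretation of $\mathcal{F}$, namely the restriction of the generic interpretation of $\mathcal{F} \land \mathcal{G} \land \mathcal{H}$ to the variables of $\mathcal{F}$, and Proposition~\ref{prop:relational-formula-structure-universal} gives that they agree.

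The only piece requiring care, and thus the main obstacle, is the bookkeeping for the associativity isomorphism: the quotient $([\mathcal{F}] \amalg [\mathcal{G}] \amalg [\mathcal{H}])/{\sim}$ that underlies the right-hand side (once conjunctions are flattened) versus the iterated quotient that underlies the left-hand side must be matched by tracking equivalence classes of generic interpretations. Once this routine identification is made, the computation of images above closes the proof.
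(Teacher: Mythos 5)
Your proposal is correct and is essentially an unfolding of the paper's one-line proof ("by the universal property of classifying relational structures"): you identify the codomains via the canonical associativity isomorphism and check that both maps are induced by the restriction of the generic interpretation of $\mathcal{F} \land \mathcal{G} \land \mathcal{H}$ to the variables of $\mathcal{F}$, which is exactly the intended argument.
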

\begin{proof}
  By the universal property of classifying relational structures.
\end{proof}

\begin{definition}
  An RHL theory $T$ is \emph{strong} if $[T] = \{ [\mathcal{S}] \mid \mathcal{S} \in T \}$ is strong.
\end{definition}

\begin{proposition}
  \label{prop:free-models}
  Let $T$ be an RHL theory.
  Then $\mathrm{Mod}(T) \subseteq \mathrm{Rel}(\mathfrak{S})$ is a weakly reflective category.
  If $T$ is strong, then $\mathrm{Mod}(T) \subseteq \mathrm{Rel}(\mathfrak{S})$ is a reflective subcategory.
\end{proposition}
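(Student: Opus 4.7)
The plan is to reduce the statement directly to the small object argument (Proposition~\ref{prop:small-object-argument-exist}). By Proposition~\ref{prop:relational-sequent-morphism-lifts}, a relational structure $X$ satisfies an RHL sequent $\mathcal{S}$ if and only if $X$ is injective to the classifying morphism $[\mathcal{S}]$. Consequently $\mathrm{Mod}(T) = [T]^\pitchfork$, where $[T] = \{[\mathcal{S}] \mid \mathcal{S} \in T\}$. Since $T$ is a set of sequents, $[T]$ is a set of morphisms, and Proposition~\ref{prop:rel-cocomplete} tells us $\mathrm{Rel}(\mathfrak{S})$ is cocomplete, so the hypotheses of the small object argument are almost in place.

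The remaining hypothesis to discharge is that every $[\mathcal{S}]$ has finitely presentable domain and codomain. I would verify this by combining Proposition~\ref{prop:finite-equiv-finitely-presentable} with an easy inspection of the construction of classifying structures: for each RHL atom $\phi$, the structure $[\phi]$ has at most two elements in its carriers and at most one tuple in a single relation, so $[\phi]$ is finite. For a formula $\mathcal{F} = \phi_1 \land \dots \land \phi_n$, the structure $[\mathcal{F}]$ is a quotient of the finite coproduct $[\phi_1] \amalg \dots \amalg [\phi_n]$, hence finite. Applied to both the premise $\mathcal{F}$ and the conjunction $\mathcal{F} \land \mathcal{G}$, this shows that the domain and codomain of $[\mathcal{S}]$ are finite and thus, by Proposition~\ref{prop:finite-equiv-finitely-presentable}, finitely presentable.

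With these preliminaries, Proposition~\ref{prop:small-object-argument-exist} directly yields that $[T]^\pitchfork = \mathrm{Mod}(T)$ is a weakly reflective subcategory of $\mathrm{Rel}(\mathfrak{S})$. For the second clause, by the definition of a strong theory, if $T$ is strong then $[T]$ is a strong set of morphisms; the same proposition then upgrades the weak reflection to an honest reflection, completing the argument.

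I do not anticipate any serious obstacle: the proof is essentially a bookkeeping exercise assembling the previously established machinery. The only mildly delicate step is the finiteness verification for $[\mathcal{F}]$, where one should be careful that quotients by the equivalence identifying variable interpretations across atoms do not introduce infinite carriers — but since the underlying coproduct is already finite, the quotient is automatically finite as well.
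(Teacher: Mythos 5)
Your proof is correct and follows essentially the same route as the paper, which likewise applies the small object argument (Propositions~\ref{prop:small-object-argument-prop} and \ref{prop:small-object-argument-exist}) to $M = \{[\mathcal{S}] \mid \mathcal{S} \in T\}$, using Proposition~\ref{prop:relational-sequent-morphism-lifts} to identify $\mathrm{Mod}(T)$ with the injectivity class and Proposition~\ref{prop:finite-equiv-finitely-presentable} for finite presentability of the classifying structures. (Only a cosmetic slip: $[\phi]$ for a relation atom $r(v_1,\dots,v_n)$ may have up to $n$ carrier elements, not two, but it is finite in any case, which is all that is needed.)
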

\begin{proof}
  By application of the small object argument (Propositions \ref{prop:small-object-argument-prop} and \ref{prop:small-object-argument-exist}) to $M = \{ [\mathcal{S}] \mid \mathcal{S} \in T \}$.
\end{proof}

\begin{remark}
  Let us reflect on the similarities between the proof of Proposition~\ref{prop:free-models} and Datalog evaluation.
  Note that Datalog is a strict subset of RHL, so we have to specialize Proposition~\ref{prop:free-models} to Datalog theories $T$ in order to compare.
  Thus, we assume that the conclusions of sequents in $T$ only contain atoms $r(v_1, \dots, v_n)$ for variables that occur in the premise (see Section~\ref{subsec:datalog-vs-rhl} for detailed discussion of the fragment of RHL corresponding to Datalog).

  Unfolding the small object argument, we see that the reflection of a relational structure $X$ into the category of models is given by the colimit of a chain
  \begin{equation}
    \label{eq:datalog-small-object-argument}
    \begin{tikzcd}
      X = X_0 \arrow[r] & X_1 \arrow[r] & X_2 \arrow[r] & \dots
    \end{tikzcd}
  \end{equation}
  of relational structures.
  The relational structure $X$ corresponds to the set of input facts of the Datalog program, and each $X_i$ represents the total set of derived facts after the $i$th iteration of Datalog evaluation.
  Because $T$ contains Datalog sequents only, the transition maps $X_i \rightarrow X_{i + 1}$ are bijective on carriers.
  The data of the sequence \ref{eq:datalog-small-object-argument} is thus equivalent to a sequence of inclusions
  \begin{equation}
    r_X = r_{X_0} \subseteq r_{X_1} \subseteq \dots
  \end{equation}
  on the carrier of $X$ for all relation symbols $r$, mirroring the monotonically growing relations during Datalog evaluation.

  Unfolding our existence proof of the small object argument (Proposition~\ref{prop:small-object-argument-exist}) and the universal property of classifying structures (Proposition~\ref{prop:relational-formula-structure-universal}), we see that $X_{i + 1}$ is obtained from $X_i$ via the following pushout square:
  \begin{equation}
    \label{eq:small-object-argument-step-datalog}
    \begin{tikzcd}
      \coprod_{(\mathcal{S}, I) \in K} [\mathcal{F}_\mathcal{S}] \arrow[r] \arrow[d] \arrow[dr, phantom, very near end, "\ulcorner"] & \coprod_{(\mathcal{S}, I) \in K} [\mathcal{G}_\mathcal{S}] \arrow[d] \\
      X_i \arrow[r] & X_{i + 1}
    \end{tikzcd}
  \end{equation}
  Here $K$ is the set of pairs of sequents $\mathcal{S} = \mathcal{F}_\mathcal{S} \implies \mathcal{G}_\mathcal{S}$ and interpretations $I : [\mathcal{F}_\mathcal{S}] \rightarrow X_i$.
  Thus the left vertical map corresponds to the set of matches of premises among the facts established after the $i$th iteration of Datalog evaluation.
  Defining $X_{i + 1}$ using the pushout square above has the effect of adjoining matches of the conclusion for each match of the premise.
  Since $T$ is a Datalog theory, the conclusions are relation atoms, hence $X_{i + 1}$ is obtained from $X_i$ by adjoining tuples to relations.
\end{remark}

\begin{remark}
  Semi-naive evaluation is an optimized version of Datalog evaluation, where we consider only matches of premises at the $i$th stage that have not been present already in the $(i - 1)$th stage.
  This does not change the result of Datalog evaluation since conclusions of matches that have been found in a previous iteration have already been adjoined.
  In terms of the small object argument, this optimization can be understood as a more economic choice of the set $K$:
  In diagram \ref{eq:small-object-argument-step-datalog}, we can replace $K$ by the set of interpretations $I : [\mathcal{F}_\mathcal{S}] \rightarrow X_i$ that do not factor via $X_{i - 1}$.
\end{remark}

\begin{remark}
  Still, there are properties of Datalog evaluation that Proposition~\ref{prop:free-models} does not entirely capture.
  First, the result of Datalog evaluation is determined uniquely via fixed point semantics, whereas Proposition~\ref{prop:free-models} guarantees uniqueness (up to ismorphism) only in the case of strong theories.
  Since classifying morphisms of Datalog sequents are epic, all Datalog theories are strong (Proposition~\ref{prop:epi-strong}).
  Thus, Proposition~\ref{prop:free-models} does indeed determine the result of Datalog computation uniquely.
  However, not all strong theories are Datalog theories or even contain epimorphisms only.
  For example, Proposition~\ref{prop:strongbization} allows extending every RHL theory to a strong theory.
  A syntactic characterization of strong RHL theories is the main purpose of Section~\ref{sec:phl}, where we discuss partial Horn logic.

  A second feature of Datalog evaluation we have not discussed is that it always terminates:
  Since Datalog evaluation monotonically increases the size of relations on a fixed carrier, it reaches a fixed point after a finite number of iterations.
  This is not generally true for RHL theories, since the carrier sets change during evaluation.
  We can, however, prove termination for \emph{surjective} theories, which subsume and generalize Datalog theories (Corollary~\ref{cor:surjective-rhl-theories-terminate}).
  In general, it is undecidable whether evaluating a given RHL theory terminates on some input.
\end{remark}

\subsection{Completeness Results}
\label{subsec:rhl-completeness}

In this section, we study the descriptive strength of RHL.
We show that every morphism of finite relational structures can be described as classifying morphism of an RHL sequent (Proposition~\ref{prop:image-of-classifying-rhl-sequents}).
We then define identify subsets of RHL corresponding to injections and surjections of finite relational structures.
Finally, we show that the sequence resulting from application of the small object argument applied to surjective RHL sequents reaches a fixed point after a finite number of steps (Corollary~\ref{cor:surjective-rhl-theories-terminate}).
This generalizes the fact that evaluation of Datalog programs always terminates.

\begin{proposition}
  \label{prop:image-of-classifying-rhl-sequents}
  Let $\mathcal{S}$ be an RHL sequent.
  Then the classifying morphism $[\mathcal{S}]$ is a morphism of finite relational structures.
  Conversely, for every morphism $f : X \rightarrow Y$ of finite relational structures, there exists an RHL sequent $\mathcal{S} = \mathcal{F} \implies \mathcal{G}$ such that $f$ and $[\mathcal{S}]$ are isomorphic, in the sense that there exists a commutative square of the form
  \begin{equation}
    \begin{tikzcd}
      X \arrow[r, "f"] \arrow[d, "\cong"'] & Y \arrow[d, "\cong"] \\
      \left[\mathcal{F}\right] \arrow[r, "{[\mathcal{S}]}"] & \left[\mathcal{F} \land \mathcal{G}\right].
    \end{tikzcd}
  \end{equation}
\end{proposition}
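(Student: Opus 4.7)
The plan is to handle the two directions separately, with the forward direction being essentially immediate and the reverse direction requiring an explicit syntactic encoding of the morphism $f$.

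For the forward direction, I would observe that the classifying structure $[\phi]$ of any single atom $\phi$ is finite: the relation atom case contributes at most $n$ elements and one tuple, while the sort quantification and equality atoms each contribute a single element. Since $[\mathcal{F}]$ is defined as a quotient of a finite coproduct of such structures for finitely many atoms, both $[\mathcal{F}]$ and $[\mathcal{F} \land \mathcal{G}]$ are finite, and hence so is the classifying morphism $[\mathcal{S}]$.

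For the reverse direction, the plan is to pick a variable $v_x$ of the appropriate sort for each element $x \in X$, and a fresh variable $u_y$ for each $y \in Y \setminus f(X)$. I would then define
\begin{equation}
  \mathcal{F} = \bigwedge_{x \in X} v_x \downarrow \;\land\; \bigwedge_{r,\,(x_1,\dots,x_n) \in r_X} r(v_{x_1}, \dots, v_{x_n})
\end{equation}
and set
\begin{equation}
  \mathcal{G} = \bigwedge_{y \in Y \setminus f(X)} u_y \downarrow \;\land\; \bigwedge_{f(x) = f(x')} v_x \equiv v_{x'} \;\land\; \bigwedge_{(y_1,\dots,y_n) \in r_Y} r(w_{y_1}, \dots, w_{y_n}),
\end{equation}
where $w_y$ denotes some chosen variable representing $y$ (either some $v_x$ with $f(x) = y$, or $u_y$ when $y \notin f(X)$). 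Since $X$ and $Y$ are finite, only finitely many atoms are needed, so each of $\mathcal{F}$ and $\mathcal{G}$ is a legitimate RHL formula.

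Next, I would verify by unfolding the definition of classifying structures that the assignment $v_x \mapsto x$ induces an isomorphism $[\mathcal{F}] \cong X$: the sort quantification atoms together with the quotient make the carriers of $[\mathcal{F}]$ biject with $X$, and the relation atoms in $\mathcal{F}$ produce exactly the tuples of $r_X$. A similar inspection, this time also taking into account the equality atoms that identify $v_x$ with $v_{x'}$ whenever $f(x) = f(x')$, shows that $[\mathcal{F} \land \mathcal{G}] \cong Y$ via $v_x \mapsto f(x)$, $u_y \mapsto y$. By construction the canonical map $[\mathcal{F}] \to [\mathcal{F} \land \mathcal{G}]$ sends the equivalence class of $v_x$ to the equivalence class of $v_x$ in the larger quotient, which corresponds to $f(x)$, so the required square of isomorphisms commutes.

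The main obstacle is really only bookkeeping: one must check that the relations $r_{[\mathcal{F} \land \mathcal{G}]}$ contain no spurious tuples beyond those in $r_Y$, and that the equivalence relation $\sim$ used in the definition of $[\mathcal{F} \land \mathcal{G}]$ identifies $v_x$ with $v_{x'}$ exactly when $f(x) = f(x')$. The first point follows because $r_{[\mathcal{F} \land \mathcal{G}]}$ is generated only by the explicitly listed relation atoms; the second requires noting that although $\sim$ is generated by the identifications coming from shared variables across atoms, these identifications are precisely the equality atoms $v_x \equiv v_{x'}$ we inserted in $\mathcal{G}$, together with trivial reflexivities, so the induced equivalence on $\{v_x\} \cup \{u_y\}$ is exactly the kernel of the map to $Y$.
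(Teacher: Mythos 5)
Your proposal is correct and follows essentially the same route as the paper: the same finiteness observation for the forward direction, and for the converse the identical encoding via $v_x \downarrow$ and relation atoms for $\mathcal{F}$, plus $u_y \downarrow$, equalities $v_x \equiv v_{x'}$ for $f(x) = f(x')$, and relation atoms $r(w_{y_1}, \dots, w_{y_n})$ for $\mathcal{G}$. The only cosmetic difference is that you verify $[\mathcal{F}] \cong X$ and $[\mathcal{F} \land \mathcal{G}] \cong Y$ by direct inspection of the quotient construction, whereas the paper appeals to the universal property of classifying morphisms; both verifications are sound.
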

\begin{proof}
  Since RHL formulas are finite, it follows from construction that classifying relational structures are finite.
  Conversely, let $f : X \rightarrow Y$ be a morphism of finite relational structures.
  Choose distinct variables $v_x$ of sort $s$ for every sort $s$ and element $x \in X_s$.
  The RHL formulas
  \begin{mathpar}
    \mathcal{F}_{\mathrm{car}} = \bigwedge_{x \in X} v_x \downarrow
    \and
    \mathcal{F}_{\mathrm{rel}} = \bigwedge_{\substack{r \in R\\(x_1, \dots, x_n) \in r_X}} r(v_{x_1}, \dots, v_{x_n})
  \end{mathpar}
  are finite (and hence well-defined) because $X$ is finite.
  The formula $\mathcal{F}_{\mathrm{car}}$ encodes the carrier sets of $X$, and $\mathcal{F}_{\mathrm{rel}}$ encodes the relations.
  Thus if $\mathcal{F} = \mathcal{F}_{\mathrm{car}} \land \mathcal{F}_{\mathrm{rel}}$, then $X \cong [\mathcal{F}]$.

  Let $u_y$ be a fresh variable for each sort $s$ and element $y \in Y_s \setminus \operatorname{Im} f_s$ that is not in the image of $f$.
  For arbitrary elements $y \in Y$, we let $w_y = v_x$ for some fixed choice of $x \in X$ such that $f(x) = y$ if such an element $x$ exists, and $w_y = u_y$ otherwise.

  Set
  \begin{mathpar}
    \mathcal{G}_\mathrm{car} = \bigwedge_{y \in Y \setminus \operatorname{Im} f} u_y \downarrow
    \and
    \mathcal{G}_\mathrm{eq} = \bigwedge_{\substack{x,y \in X \\ f(x) = f(y)}} v_{x} \equiv v_{y}
    \\
    \mathcal{G}_{\mathrm{rel}} = \bigwedge_{\substack{r \in R\\(y_1, \dots, y_n) \in r_Y}} r(w_{y_1}, \dots, w_{y_n})
  \end{mathpar}
  and let $\mathcal{G} = \mathcal{G}_{\mathrm{car}} \land \mathcal{G}_{\mathrm{rel}} \land \mathcal{G}_{\mathrm{eq}}$.
  Then $f$ has the universal property of the classifying morphism of $\mathcal{S} = \mathcal{F} \implies \mathcal{G}$, hence $f \cong [\mathcal{S}]$.
\end{proof}

\begin{definition}
  Let $f : X \rightarrow Y$ be a map of relational structures.
  \begin{enumerate}
    \item
      $f$ is \emph{injective} if $f_s : X_s \rightarrow Y_s$ is an injective function for all sorts $s \in S$.
    \item
      $f$ is \emph{surjective} if $f_s : X_s \rightarrow Y_s$ is a surjective function for all sorts $s \in S$.
  \end{enumerate}
\end{definition}

\begin{proposition}
  \label{prop:monic-epic-rel-morphisms}
  Let $f  : X \rightarrow Y$ be a morphism of relational structures.
  \begin{enumerate}
    \item
      \label{itm:injective-iff-mono}
      $f$ is injective if and only if $f$ is a monomorphism in $\mathrm{Rel}$.
    \item
      \label{itm:surjective-iff-epi}
      $f$ is surjective if and only if $f$ is an epimorphism in $\mathrm{Rel}$.
  \end{enumerate}
\end{proposition}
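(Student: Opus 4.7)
I would prove both equivalences by pushing the question down to $\mathrm{Set}^S$ via the forgetful functor $U : \mathrm{Rel}(\mathfrak{S}) \to \mathrm{Set}^S$ arising from the signature inclusion $(S, \emptyset) \subseteq (S, R)$. By Proposition~\ref{prop:adjoints-signature-inclusion}, $U$ admits both a left and a right adjoint, so it preserves both limits and colimits, and hence in particular preserves monomorphisms and epimorphisms. Moreover $U$ is faithful, since a morphism of relational structures is by definition determined by its underlying tuple of functions. In $\mathrm{Set}^S$, monomorphisms are precisely the componentwise injections and epimorphisms are precisely the componentwise surjections, so "$f$ injective" is the same as "$U(f)$ mono in $\mathrm{Set}^S$", and analogously for surjective/epi.

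For part~\ref{itm:injective-iff-mono}, if $f$ is mono in $\mathrm{Rel}$ then $U(f)$ is mono in $\mathrm{Set}^S$ (since $U$ preserves kernel pairs), so $f$ is injective. Conversely, if $f$ is injective and $g_1, g_2 : Z \to X$ satisfy $f g_1 = f g_2$, then $U(f) U(g_1) = U(f) U(g_2)$ forces $U(g_1) = U(g_2)$, and faithfulness of $U$ yields $g_1 = g_2$. Part~\ref{itm:surjective-iff-epi} follows by the exact dual argument, using that $U$ preserves colimits (hence epimorphisms) in place of limits, and that a surjective underlying map of tuples lifts to equality of morphisms via faithfulness.

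There is no substantial obstacle once the adjoints of Proposition~\ref{prop:adjoints-signature-inclusion} are in hand. As a sanity check, I would keep in mind the more concrete constructions underlying the easy directions: "mono implies injective" can be verified by probing $f$ with the two morphisms $[v \downarrow] \to X$ that select two supposedly colliding elements of $X_s$, and "epi implies surjective" by constructing two distinct morphisms $Y \to Z$ that agree on the image of $f$, where $Z$ is obtained by applying the right adjoint of $U$ to a target that distinguishes a point outside the image. Both reduce to the same content as the adjoint argument, but the adjoint formulation is shorter and more uniform, so I would take it.
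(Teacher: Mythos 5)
Your proposal is correct and matches the paper's own argument: both prove the statement by noting that the carrier functor $\mathrm{Rel}(\mathfrak{S}) \to \mathrm{Set}^S$ preserves limits and colimits (via Proposition~\ref{prop:adjoints-signature-inclusion}), hence preserves monomorphisms and epimorphisms, and is faithful, hence reflects them, with monos/epis in $\mathrm{Set}^S$ being the componentwise injections/surjections. No gaps.
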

\begin{proof}
  \ref{itm:injective-iff-mono}.
  In general, a morphism $f : X \rightarrow Y$ in a complete category $\mathcal{C}$ is a monomorphism if and only if
  \begin{equation}
    \begin{tikzcd}
      X \arrow[r, "="] \arrow[d, "="'] & X \arrow[d, "f"] \\
      X \arrow[r, "f"] & Y
    \end{tikzcd}
  \end{equation}
  is a pullback square.
  Because the carrier functor, i.e. forgetful functor from relational structures to $S$-indexed families of sets preserves limits, it follows that it preserves monomorphisms.
  Thus, every monomorphism of relational structures is injective.
  The same forgetful functor is faithful, hence reflects monomorphisms, so every injective morphism of relational structures is a monomorphism.

  \ref{itm:surjective-iff-epi}.
  Analogously to \ref{itm:injective-iff-mono}, since the carrier functor also preserves colimits.
\end{proof}

\begin{definition}
  \label{def:monic-epic-rhl-sequents}
  Let $\mathcal{S} = \mathcal{F} \implies \mathcal{G}$ be an RHL sequent.
  \begin{enumerate}
    \item
      $\mathcal{S}$ is \emph{injective} if the conclusion $\mathcal{G}$ does not contain an equality atom.
    \item
      $\mathcal{S}$ is \emph{surjective} if every variable in the conclusion $\mathcal{G}$ also occurs in the premise $\mathcal{F}$.
  \end{enumerate}
\end{definition}

\begin{proposition}
  \label{prop:classifying-monic-epic}
  The classifying morphisms of RHL sequents with the properties of Definition~\ref{def:monic-epic-rhl-sequents} can be characterized up to isomorphism as follows:
  \begin{enumerate}
    \item
      \label{itm:classifying-injective}
      The classifying morphisms of injective RHL sequents are precisely the injections of finite relational structures.
    \item
      \label{itm:classifying-surjective}
      The classifying morphisms of surjective RHL sequents are precisely the surjections of finite relational structures.
  \end{enumerate}
\end{proposition}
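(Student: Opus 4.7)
My plan is to split the statement into its four implications: for each of \ref{itm:classifying-injective} and \ref{itm:classifying-surjective}, prove that the syntactic condition on $\mathcal{S}$ forces the corresponding carrier-level property of $[\mathcal{S}]$, and that every morphism of finite relational structures with that property arises (up to isomorphism) as the classifying morphism of a sequent with the desired shape. Both converse directions will reuse the construction of Proposition~\ref{prop:image-of-classifying-rhl-sequents} verbatim, observing only that its output automatically satisfies the correct syntactic constraint when the input morphism is injective (resp.\ surjective).

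For the forward direction of \ref{itm:classifying-injective}, I would unfold the definition of $[\mathcal{F} \land \mathcal{G}]$ as a quotient of the coproduct $\coprod_i [\phi_i]$ of the atomic classifying structures. The only identifications that arise are those generated by variables shared across two atoms and those imposed directly by equality atoms. If $\mathcal{G}$ contains no equality atoms, then the restriction of the generating relation to the elements indexed by variables of $\mathcal{F}$ is exactly the one already used to build $[\mathcal{F}]$, so $[\mathcal{S}] : [\mathcal{F}] \to [\mathcal{F} \land \mathcal{G}]$ is injective on carriers. For the converse, feed an injection $f : X \to Y$ into the construction of Proposition~\ref{prop:image-of-classifying-rhl-sequents}: since $f$ has no two distinct preimages with a common image, the conjunct $\mathcal{G}_{\mathrm{eq}}$ is empty, so the resulting conclusion contains no equality atom.

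For the forward direction of \ref{itm:classifying-surjective}, if every variable of $\mathcal{G}$ already occurs in $\mathcal{F}$, then no atom of $\mathcal{G}$ contributes an element to the coproduct presentation of $[\mathcal{F} \land \mathcal{G}]$ that is not identified with some element already coming from $[\mathcal{F}]$. Any equality atoms in $\mathcal{G}$ simply further quotient $[\mathcal{F}]$, and passage to a quotient is always surjective on carriers, so $[\mathcal{S}]$ is surjective. For the converse, apply Proposition~\ref{prop:image-of-classifying-rhl-sequents} to a surjection $f$: since $Y \setminus \operatorname{Im} f = \emptyset$, no auxiliary variables $u_y$ are introduced, $\mathcal{G}_{\mathrm{car}}$ is trivial, and each $w_y$ has the form $v_x$ for some $x \in X$ appearing in $\mathcal{F}$; thus every variable in the resulting $\mathcal{G}$ already occurs in $\mathcal{F}$.

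The main technical care lies in the forward direction of \ref{itm:classifying-injective}: one has to verify that the equivalence relation used to glue the coproduct into $[\mathcal{F} \land \mathcal{G}]$ does not secretly identify two elements that were distinct in $[\mathcal{F}]$. Concretely, one must argue that if the only new identifications coming from $\mathcal{G}$ are those arising from variable reuse (no equality atoms), then these identifications either stay within a single atom's classifying structure or already act inside $[\mathcal{F}]$, so no fresh collapse is induced on the image of $[\mathcal{F}]$. Once this confluence-style observation is in hand, the rest of the argument is bookkeeping against the construction of Proposition~\ref{prop:image-of-classifying-rhl-sequents}.
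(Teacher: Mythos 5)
Your proposal is correct, and its converse halves coincide with the paper's proof: the paper likewise feeds an injection (resp.\ surjection) into the construction of Proposition~\ref{prop:image-of-classifying-rhl-sequents} and observes that the offending atoms become redundant. One small inaccuracy there: with the paper's definition, $\mathcal{G}_{\mathrm{eq}}$ for an injective $f$ is not empty but a conjunction of trivial atoms $v_x \equiv v_x$ (the diagonal pairs always satisfy $f(x)=f(y)$); these must be explicitly omitted, which is harmless since each $v_x$ already occurs in $\mathcal{F}_{\mathrm{car}}$. Where you genuinely diverge is in the forward direction. The paper first factors the classifying morphism of $\mathcal{F} \implies \mathcal{G}$ atom by atom using Proposition~\ref{prop:composition-rhl-sequent-morphisms}, reducing to sequents $\mathcal{F} \implies \phi$ with $\phi$ a single atom, where injectivity/surjectivity of the classifying morphism is immediate by inspection and is inherited under composition. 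You instead analyze the quotient presentation of $[\mathcal{F} \land \mathcal{G}]$ directly, which works but puts the entire weight on your ``confluence-style'' claim: one must argue that any chain of generating identifications merging two elements of the image of $[\mathcal{F}]$ can only change variable names inside equality atoms, so in the absence of equality atoms in $\mathcal{G}$ every such chain factors through identifications already present in $[\mathcal{F}]$. You correctly identify this as the delicate step and sketch the right reason, but it remains a sketch; the paper's reduction buys you a proof in which each case is a one-line check, at the cost of invoking the composition lemma, while your direct route is self-contained but needs that chain argument spelled out (say, by induction on the length of the identification chain) to be complete.
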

\begin{proof}
  The verification that the classifying morphisms of the sequents in question have the desired properties can be reduced to sequents of the form $\mathcal{F} \implies \phi$ where $\phi$ is an RHL atom by Proposition~\ref{prop:composition-rhl-sequent-morphisms}, and then follows by case distinction on $\phi$.

  Conversely, the construction of sequents $\mathcal{S}$ with the respective property given a morphism $f : X \rightarrow Y$ such that $f \cong [\mathcal{S}]$ is analogous to the proof of Proposition~\ref{prop:image-of-classifying-rhl-sequents}.
  In both cases, the atoms in the conclusion $\mathcal{G}$ that violate the condition on the sequent are redundant because of the assumed property of $f$:
  If $f$ is injective, then $\mathcal{G}_\mathrm{eq}$ is a conjunction of equality atoms of the form $v \equiv v$ and hence can be omitted.
  If $f$ is surjective, then $Y \setminus \operatorname{Im} f$ is empty, so the case $w_y = u_y$ for some $y$ that is not in the image of $f$ does not occur.
\end{proof}

\begin{proposition}
  \label{prop:surjective-small-object-argument}
  Let $M$ be a finite set of epimorphisms of finite relational structures.
  Let
  \begin{equation}
    \begin{tikzcd}
      X_0 \arrow[r, "x_0"] & X_1 \arrow[r, "x_1"] & \dots
    \end{tikzcd}
  \end{equation}
  be any sequence of maps of relational structures satisfying the conditions of Proposition~\ref{prop:small-object-argument-prop} such that furthermore $X_0$ is finite.
  Then the sequence is eventually stationary, in the sense that $x_n$ is an isomorphism for all sufficiently large $n$.
\end{proposition}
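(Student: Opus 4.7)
My strategy is to show first that every transition $x_n$ is surjective on carriers, so carrier cardinalities can only decrease, and then that once the carriers stabilize the relations can only grow; both processes must terminate because $X_0$ is finite and $M$ is finite.

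The first step is to establish that every relative $M$-cell complex is an epimorphism of relational structures. By Proposition~\ref{prop:monic-epic-rel-morphisms}, epimorphisms in $\mathrm{Rel}(\mathfrak{S})$ are precisely the sort-wise surjections, and by Proposition~\ref{prop:rel-cocomplete} the forgetful functor $\mathrm{Rel}(\mathfrak{S}) \to \mathrm{Set}^S$ preserves colimits. Since sort-wise surjections are closed under coproducts, pushouts, and sequential colimits in $\mathrm{Set}^S$, a routine induction through the closure properties of Definition~\ref{def:relative-cell-complex} shows that every map in $\mathrm{Cell}(M)$ is sort-wise surjective. In particular, by condition~(1) of Proposition~\ref{prop:small-object-argument-prop}, each $x_n$ is surjective on every carrier set.

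Next I would bound the carriers and then the relations. Only finitely many sorts can ever be relevant: the ones with $(X_0)_s \neq \emptyset$ and the ones occurring in the finitely many morphisms of $M$. For each such $s$, the sequence $(X_0)_s \twoheadrightarrow (X_1)_s \twoheadrightarrow \dots$ is a chain of quotients of a finite set, hence can strictly decrease only finitely many times, so there exists $N$ with $x_n$ bijective on every carrier for all $n \geq N$. Identifying carriers via these bijections past stage $N$, the morphism property of $x_n$ gives $r_{X_n} \subseteq r_{X_{n+1}}$ for every relation symbol $r$, and each $r_{X_n}$ sits inside the fixed finite set $(X_N)_{s_1} \times \cdots \times (X_N)_{s_k}$. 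Because $X_0$ is finite and $M$ consists of finitely many morphisms between finite structures, only finitely many relation symbols can ever acquire nonempty interpretation in any $X_n$; the finiteness of $M$ is used precisely here. Hence the sum $\sum_r |r_{X_n}|$ is a bounded monotone-nondecreasing sequence of non-negative integers, which stabilizes at some $N' \geq N$. For $n \geq N'$ the map $x_n$ is bijective on every carrier and on every relation, and is therefore an isomorphism.

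The main obstacle is the bookkeeping rather than any deep categorical content: one must verify that the finiteness of $X_n$ in the sense of Definition~\ref{def:finite-relational-structure} is preserved along the entire sequence, so that only finitely many sorts and relation symbols can ever be involved and the cardinal-theoretic invariants that must stabilize are genuinely bounded. Once this invariant is in hand, the termination argument is just the stabilization of a decreasing chain of quotients of a finite set, followed by the stabilization of a bounded nondecreasing chain of finite relations.
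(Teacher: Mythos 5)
Your proposal is correct and follows essentially the same route as the paper's proof: show that every relative $M$-cell complex is surjective on carriers, conclude that the carrier cardinalities (nonzero only for finitely many sorts) stabilize, and then observe that past that point the relations grow monotonically inside a fixed finite product while only finitely many relation symbols can ever be nonempty (using finiteness of $X_0$ and of $M$), so the chain becomes stationary. The bookkeeping you flag as the ``main obstacle'' is handled in the paper at the same level of detail, so no gap remains.
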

\begin{proof}
  Since all maps in $M$ are surjective and colimits of relational structures commute with colimits on carrier sets, it follows that all maps in $\mathrm{Cell}(M)$ are surjective.
  Thus the cardinality of the carriers of the $X_n$ decreases monotonically with $n$.
  Since $X$ is finite, the carriers $X_s$ are empty for almost all sorts $s$.
  Eventually, the sum of the cardinalities of the carriers of $X_n$ must thus become stable, say after $n_0 \in \mathbb{N}$.
  Without loss of generality, we may assume that $x_n$ is the identity map on carriers for $n \geq n_0$.
  Let $r \in R$.
  Then for all $n \geq n_0$, we have that
  \begin{equation}
    r_{X_n} \subseteq r_{X_{n + 1}} \subseteq (X_{n_0})_{s_1} \times \dots \times (X_{n_0})_{s_n}
  \end{equation}
  and the latter is a finite set.
  Thus, eventually $r_{X_n} = r_{X_{n + 1}}$ is stationary.
  Even when $R$ is infinite, we have $r_X = r_{X_n}$ for all $n$ and almost all $r$, since only finitely many relations are non-empty in any of the involved relational structures ($X_{n_0}$ or a domain or codomain of a map in $M$).
  For sufficiently large $n_1 \in \mathbb{N}$ and all $n \geq n_1$ we thus have $r_{X_n} = r_{X_{n + 1}}$ for all $r$ and hence $X_n = X_{n + 1}$.
\end{proof}

\begin{corollary}
  \label{cor:surjective-rhl-theories-terminate}
  Let $T$ be an RHL theory containing only surjective sequents.
  Then the reflection of a finite relational structure into $\mathrm{Mod}(T)$ is a finite relational structure.
  \qed
\end{corollary}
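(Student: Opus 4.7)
The plan is to recognize the corollary as a direct consequence of Proposition~\ref{prop:surjective-small-object-argument} applied to the set of classifying morphisms $M = [T] = \{[\mathcal{S}] \mid \mathcal{S} \in T\}$. I would verify the hypotheses of that proposition, construct the reflection via the small object argument, and then conclude via stabilization of the sequence.

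First I would check that $M$ consists of epimorphisms of finite relational structures. By Proposition~\ref{prop:classifying-monic-epic}\ref{itm:classifying-surjective}, the classifying morphism $[\mathcal{S}]$ of any surjective RHL sequent $\mathcal{S}$ is, up to isomorphism, a surjection of finite relational structures. Proposition~\ref{prop:monic-epic-rel-morphisms}\ref{itm:surjective-iff-epi} then identifies such surjections with the epimorphisms in $\mathrm{Rel}(\mathfrak{S})$, so each $[\mathcal{S}] \in M$ is an epimorphism between finite relational structures, as required.

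Next, given the finite structure $X$, I would invoke Proposition~\ref{prop:small-object-argument-exist} to obtain a sequence $X = X_0 \rightarrow X_1 \rightarrow \dots$ satisfying the hypotheses of Proposition~\ref{prop:small-object-argument-prop}, whose colimit $X_\infty$ serves as a weak reflection of $X$ into $M^\pitchfork = \mathrm{Mod}(T)$. Proposition~\ref{prop:surjective-small-object-argument} then shows that this sequence is eventually stationary, so $X_\infty \cong X_n$ for some $n$. Each $X_n$ is obtained from the finite $X_0$ by finitely many pushouts along finite morphisms in $M$, and pushouts of finite relational structures along maps with finite domain and codomain remain finite, so $X_n$ and hence the reflection is finite.

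The main obstacle is that Proposition~\ref{prop:surjective-small-object-argument} is stated for a \emph{finite} set $M$, whereas $[T]$ need not be finite if $T$ is infinite. The first half of the stabilization argument—monotone decrease of carrier cardinalities under surjective pushouts, bounded below by zero—goes through unchanged for any surjective $M$, so the carriers stabilize after finitely many steps. The delicate part is the second half, where finiteness of $M$ was used to bound the set of relation symbols that can ever become populated; for infinite $T$ one would need to argue separately (for instance by induction over the stages of the small object argument, using that each $X_n$ is finite and introduces new non-empty relation symbols only via premise matches of sequents whose conclusions are known) that only finitely many relation symbols contribute to $X_\infty$, so that $X_\infty$ indeed satisfies Definition~\ref{def:finite-relational-structure}.
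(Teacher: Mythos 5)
Your main argument is exactly the paper's intended route: the corollary is stated with \qed because it follows immediately from Proposition~\ref{prop:surjective-small-object-argument} once one knows (Propositions~\ref{prop:classifying-monic-epic} and~\ref{prop:monic-epic-rel-morphisms}) that classifying morphisms of surjective sequents are epimorphisms of finite relational structures; your verification of the hypotheses and the finiteness of each stage $X_n$ is correct, and since these morphisms are epic the set $[T]$ is strong (Proposition~\ref{prop:epi-strong}), so $X_\infty$ is indeed \emph{the} reflection, not merely a weak one.

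One caution about your final paragraph: the extension to infinite $T$ that you sketch cannot work, because the statement itself fails in that generality. Take a signature with one sort $s$ and infinitely many unary relation symbols $r_k : s$, and let $T = \{\, v \downarrow \implies r_k(v) \mid k \in \mathbb{N} \,\}$; every sequent is surjective, yet the reflection of the one-point structure with all relations empty has $* \in (r_k)$ for every $k$, so infinitely many relations are non-empty and the result is not finite in the sense of Definition~\ref{def:finite-relational-structure}. So ``only finitely many relation symbols contribute to $X_\infty$'' is exactly what can fail when $T$ is infinite: already at the first stage infinitely many conclusions may fire. The correct reading of the corollary is therefore with $T$ finite (or at least with only finitely many relation symbols occurring in the sequents of $T$; if the signature itself has finitely many relation symbols, infinite $T$ is harmless, since the carriers of $X_\infty$ are quotients of those of $X_0$ and each relation lives in a finite product). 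With that restriction your proof is complete and coincides with the paper's.
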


\subsection{Datalog and Relational Horn Logic}
\label{subsec:datalog-vs-rhl}

In this section, we study the subset of RHL that corresponds to Datalog.
We show that RHL can be reduced to \emph{Datalog with choice} by way of the \emph{setoid transformation}.

\begin{definition}
  Let $\mathcal{S}$ be an RHL sequent.
  \begin{enumerate}
    \item
      \label{itm:plain-datalog}
      $\mathcal{S}$ is a \emph{Datalog sequent} if all atoms in $\mathcal{S}$ are of the form $r(v_1, \dots, v_n)$ and all variables in the conclusion of $\mathcal{S}$ also occur in the premise.
    \item
      $\mathcal{S}$ is a \emph{Datalog sequent with sort quantification} if all atoms in $\mathcal{S}$ are of the form $r(v_1, \dots, v_n) \downarrow$ or $v \downarrow$, and all variables in the conclusion of $\mathcal{S}$ also occur in the premise.
    \item
      \label{itm:datalog-with-choice}
      $\mathcal{S}$ is a \emph{Datalog sequent with choice} if all atoms in $\mathcal{S}$ are of the form $r(v_1, \dots, v_n) \downarrow$ or $v \downarrow$.
  \end{enumerate}
\end{definition}

Note that in standard Datalog, usually only sequents with a single atom as conclusion are allowed.
However, our generalized Datalog sequents \ref{itm:plain-datalog} have the same descriptive power as standard Datalog, since a single sequent with $n$ conclusions can equivalently be replaced by $n$ sequents with single conclusions.
The name \emph{Datalog with choice} in \ref{itm:datalog-with-choice} alludes to the choice construct in Souffle \citep{souffle-choice} with similar semantics.

\begin{definition}
  An element $x \in X$ in a relational structure is \emph{unbound} if it does not appear in any tuple $t \in r_X$ for all $r \in R$.
\end{definition}

\begin{proposition}
  The classifying morphisms of Datalog sequents can be characterized up to isomorphism as follows:
  \begin{enumerate}
    \item
      \label{itm:semantics-datalog}
      The classifying morphisms of Datalog sequents are precisely the injective surjective morphisms of finite relational structures that do not contain unbound variables.
    \item
      \label{itm:semantics-datalog-with-sort-quantification}
      The classifying morphisms of Datalog sequents with sort quantification are precisely the injective surjective morphisms of finite relational structures.
    \item
      \label{itm:semantics-datalog-with-choice}
      The classifying morphisms of Datalog sequents with choice are precisely the injective surjective morphisms of finite relational structures.
  \end{enumerate}
\end{proposition}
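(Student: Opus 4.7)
The plan is to mirror the two-sided approach of Propositions~\ref{prop:image-of-classifying-rhl-sequents} and~\ref{prop:classifying-monic-epic}: for each of the three items, I would first verify that the classifying morphism of a sequent of the given form has the stated property, then conversely that every morphism with that property arises as such a classifying morphism.

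For the forward direction, I would use Proposition~\ref{prop:composition-rhl-sequent-morphisms} to decompose $[\mathcal{F} \implies \mathcal{G}]$ into a composition of classifying morphisms of single-atom conclusions, and then do a case analysis on the conclusion atom. None of the three fragments permits equality atoms in conclusions, so no quotienting of carriers ever occurs and the composition is always injective on carriers. For a conclusion atom $r(v_1,\dots,v_n)$, the classifying morphism is the identity on carriers and merely adjoins a tuple to $r$; for $v \downarrow$ with $v$ already in the premise, it is again the identity on carriers; so in both of these situations the morphism is also surjective. For $v \downarrow$ with a fresh $v$, which is permitted only in Datalog with choice, the morphism adjoins a single unbound element and is injective but no longer surjective. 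For item~\ref{itm:semantics-datalog} I would additionally observe that since no $v \downarrow$ atom occurs and every conclusion variable appears in the premise, every carrier element of $[\mathcal{F}]$ and $[\mathcal{F} \land \mathcal{G}]$ is introduced by some relation atom, so neither structure contains an unbound element.

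For the reverse direction, given $f : X \rightarrow Y$ of finite relational structures with the stated property, I would reuse the construction of $\mathcal{F} = \mathcal{F}_{\mathrm{car}} \land \mathcal{F}_{\mathrm{rel}}$ and $\mathcal{G} = \mathcal{G}_{\mathrm{car}} \land \mathcal{G}_{\mathrm{rel}} \land \mathcal{G}_{\mathrm{eq}}$ from Proposition~\ref{prop:image-of-classifying-rhl-sequents} and simplify under the hypothesis. Injectivity of $f$ collapses $\mathcal{G}_{\mathrm{eq}}$ to trivial identities that can be discarded, so the sequent contains no equality atom. For items~\ref{itm:semantics-datalog} and~\ref{itm:semantics-datalog-with-sort-quantification}, surjectivity of $f$ makes $Y \setminus \operatorname{Im} f$ empty, so $\mathcal{G}_{\mathrm{car}}$ vanishes and all conclusion variables already appear in the premise. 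For item~\ref{itm:semantics-datalog}, the absence of unbound elements further lets me omit $\mathcal{F}_{\mathrm{car}}$ entirely, since every carrier element of $X$ is already introduced by some relation atom of $\mathcal{F}_{\mathrm{rel}}$; the resulting sequent contains only relation atoms with conclusion variables bound in the premise, as required.

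The main obstacle I anticipate is the unbound-element elimination in item~\ref{itm:semantics-datalog}: after dropping $\mathcal{F}_{\mathrm{car}}$ one must verify that $[\mathcal{F}_{\mathrm{rel}}]$ is still isomorphic to $X$ and not to some strictly smaller structure. This rests on the observation that the quotient in the definition of $[\mathcal{F}]$ identifies occurrences of the same variable across different relation atoms, which under the unbound-element hypothesis exactly recovers the carriers of $X$. The remaining verifications are routine bookkeeping parallel to Propositions~\ref{prop:image-of-classifying-rhl-sequents} and~\ref{prop:classifying-monic-epic}.
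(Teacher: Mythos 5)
Your treatment of items \ref{itm:semantics-datalog} and \ref{itm:semantics-datalog-with-sort-quantification} is essentially the paper's argument: the paper proves this proposition only by declaring it analogous to Propositions~\ref{prop:image-of-classifying-rhl-sequents} and~\ref{prop:classifying-monic-epic}, and your decomposition via Proposition~\ref{prop:composition-rhl-sequent-morphisms} with a case analysis on the conclusion atom, together with the pruned construction of $\mathcal{F}$ and $\mathcal{G}$ in the converse direction, is exactly how that analogy is meant to be filled in. Your worry about recovering $X \cong [\mathcal{F}_{\mathrm{rel}}]$ after dropping $\mathcal{F}_{\mathrm{car}}$ is the right check for item~\ref{itm:semantics-datalog}, and your resolution of it (every element of $X$ occurs in some tuple, so every variable $v_x$ occurs in some relation atom and the quotient identifies its occurrences) is correct.

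The gap is item~\ref{itm:semantics-datalog-with-choice}. Your own case analysis notes that a conclusion atom $v \downarrow$ with fresh $v$ yields a classifying morphism that is injective but \emph{not} surjective (and the same happens for a conclusion atom $r(v_1, \dots, v_n)$ containing a fresh variable, a case your analysis silently excludes by asserting that such atoms give the identity on carriers). This contradicts the characterization you set out to prove, which claims that Datalog-with-choice classifying morphisms are precisely the injective \emph{surjective} morphisms: already $u \downarrow \implies v \downarrow$ with $v$ fresh is a Datalog sequent with choice whose classifying morphism is a non-surjective inclusion of a one-element structure into a two-element one. So the forward inclusion of item~\ref{itm:semantics-datalog-with-choice} as stated cannot be established, and your proposal neither flags this tension nor resolves it; you also never give the converse direction for this item. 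What is actually provable (and what the ``choice'' terminology intends, since choice is exactly the license to introduce fresh conclusion variables) drops surjectivity: the classifying morphisms of Datalog sequents with choice are precisely the injective morphisms of finite relational structures. Proving that converse requires \emph{retaining} $\mathcal{G}_{\mathrm{car}}$, i.e.\ keeping the atoms $u_y \downarrow$ for $y \notin \operatorname{Im} f$ and discarding only $\mathcal{G}_{\mathrm{eq}}$, a situation your reverse-direction argument, which treats only surjective $f$, does not cover. A complete solution should either prove this corrected form or explicitly note the discrepancy with the stated item.
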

\begin{proof}
  Analogously to the proofs of Propositions \ref{prop:image-of-classifying-rhl-sequents} and \ref{prop:classifying-monic-epic}.
\end{proof}

\begin{definition}
  A \emph{setoid} consists of a set $X$ and an equivalence relation $\sim_X$ on $X$.
  A morphism $f : X \rightarrow Y$ is a map of underlying sets that respects the equivalence relations.
  Two morphisms $f, g : X \rightarrow Y$ of setoids are equal if $f(x) \sim_Y g(x)$ for all $x \in X$.
  The category of setoids is denoted by $\mathrm{Setoid}$.
\end{definition}

\begin{proposition}
  \label{prop:setoid-quotient-equivalence}
  The categories $\mathrm{Setoid}$ and $\mathrm{Set}$ are equivalent.
  An equivalence is given by the quotient functor $\mathrm{Setoid} \rightarrow \mathrm{Set}, (X, \sim_X) \mapsto X / \sim_X$ and the diagonal functor $\mathrm{Set} \rightarrow \mathrm{Setoid}, X \mapsto (X, \{(x, x) \mid x \in X\})$.
  \qed
\end{proposition}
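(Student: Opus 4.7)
The plan is to verify that the two stated functors are an equivalence by showing that each is fully faithful and essentially surjective, rather than by explicitly constructing a quasi-inverse on the nose. Denote the quotient functor by $Q$ and the diagonal functor by $D$. First I would check both functors are well-defined: $D$ sends a set map $f \colon X \rightarrow Y$ to itself regarded as a morphism of discrete setoids (which vacuously respects the identity relation), and $Q$ sends a setoid morphism $f \colon (X, \sim_X) \rightarrow (Y, \sim_Y)$ to the induced map $\bar f \colon X/{\sim_X} \rightarrow Y/{\sim_Y}$, well-defined on equivalence classes by the assumption that $f$ respects the relations and independent of the chosen representative in $\mathrm{Hom}_{\mathrm{Setoid}}((X, \sim_X), (Y, \sim_Y))$ because setoid-equivalent maps agree up to $\sim_Y$.

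Next I would check essential surjectivity. For $Q$ this is automatic because $Q(D(X)) = X/\Delta_X$ is canonically in bijection with $X$. For $D$ I would exhibit, for each setoid $(X, \sim_X)$, the quotient map $q \colon (X, \sim_X) \rightarrow D(X/{\sim_X}) = (X/{\sim_X}, \Delta)$ as a setoid isomorphism. Choosing a section $s \colon X/{\sim_X} \rightarrow X$ of $q$ (using the axiom of choice), one has $q \circ s = \mathrm{id}$ strictly, and $s \circ q$ sends $x$ to some representative of its class, hence $s(q(x)) \sim_X x$ for every $x$, so $s \circ q$ equals $\mathrm{id}_X$ \emph{as a morphism of setoids}. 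Different choices of $s$ give the same arrow in $\mathrm{Setoid}$ precisely because of the quotient relation on hom-sets, so the iso is canonical.

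For full faithfulness, the map
\begin{equation}
\mathrm{Hom}_{\mathrm{Set}}(X, Y) \longrightarrow \mathrm{Hom}_{\mathrm{Setoid}}(D(X), D(Y))
\end{equation}
is a bijection because the equivalence relation on $D(Y)$ is equality, so two representatives are setoid-equal iff they are literally equal. Faithfulness of $Q$ follows from unwinding the definition of equality of setoid morphisms: $f, g \colon (X, \sim_X) \rightarrow (Y, \sim_Y)$ induce the same map on quotients iff $f(x) \sim_Y g(x)$ for all $x$, which is exactly the equality relation in $\mathrm{Setoid}$. Fullness of $Q$ requires, given $\bar f \colon X/{\sim_X} \rightarrow Y/{\sim_Y}$, lifting it to a setoid morphism $f \colon X \rightarrow Y$; again by choice, pick for each equivalence class in $Y/{\sim_Y}$ a representative, compose with $\bar f$ and the quotient $X \rightarrow X/{\sim_X}$, and verify the result respects $\sim_X$ and $\sim_Y$ as required. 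The resulting lift clearly satisfies $Q(f) = \bar f$.

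The only real subtlety is bookkeeping around the use of the axiom of choice in constructing the sections and the lifts; once this is set up, naturality of the two unit/counit iso's follows from the fact that equality of morphisms in $\mathrm{Setoid}$ is already equality modulo $\sim$, so the various choices do not matter in the target category. The conclusion is then immediate from the standard criterion that a fully faithful and essentially surjective functor exhibits an equivalence.
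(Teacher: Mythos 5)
Your proof is correct. The paper gives no argument for this proposition (it is stated with an immediate \qed as a standard fact), and your verification is the canonical one: the only real content is that every setoid $(X, \sim_X)$ is isomorphic \emph{in} $\mathrm{Setoid}$ to the discrete setoid on $X/{\sim_X}$, which you establish via the quotient map and a choice-dependent section, noting that $s \circ q = \mathrm{id}$ holds because hom-sets in $\mathrm{Setoid}$ are taken modulo $\sim$; the axiom of choice is indeed genuinely needed here (equivalently in the fullness of $Q$), so flagging it is appropriate. One small remark: proving that \emph{both} $D$ and $Q$ are fully faithful and essentially surjective is more than necessary — once $D$ is an equivalence, it suffices to observe that the quotient maps give a strictly natural isomorphism $\mathrm{Id}_{\mathrm{Setoid}} \rightarrow D \circ Q$ and that $Q \circ D \cong \mathrm{Id}_{\mathrm{Set}}$ canonically, which you in effect already do, so the redundancy is harmless.
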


\begin{definition}
  \label{def:setoid-transformation}
  The \emph{setoid transformation} of an RHL theory $T$ defined on a signature $\mathfrak{S} = (S, R)$ is a Datalog with choice theory $T'$ defined on a relational signature $\mathfrak{S}'$ as follows.
  The signature $\mathfrak{S}'$ extends $\mathfrak{S}$ by a relation symbol $\mathrm{Eq}_s : s \times s$ for each sort $s \in S$.
  The sequents of $T'$ are given as follows:
  \begin{enumerate}
    \item
      \label{itm:setoid-theory-equivalence}
      For each sort $s$, sequents asserting that $\mathrm{Eq}_s$ is an equivalence relation:
      \begin{mathpar}
        x! \implies \mathrm{Eq}_s(x)
        \and
        \mathrm{Eq}_s(x, y) \implies \mathrm{Eq}_s(y, x)
        \\
        \mathrm{Eq}_s(x, y) \land \mathrm{Eq}_s(y, z) \implies \mathrm{Eq}_s(x, z)
      \end{mathpar}
    \item
      \label{itm:setoid-theory-congruence}
      For each relation $r : s_1 \times \dots \times s_n$, a sequent asserting that the equivalence relations $\mathrm{Eq}_s$ behave as congruences with respect to $r$:
      \begin{equation}
        r(v_1, \dots, v_n) \land \mathrm{Eq}_{s_1}(v_1, u_1) \land \dots \land \mathrm{Eq}_{s_n}(v_n, u_n) \implies r(u_1, \dots, u_n)
      \end{equation}
    \item
      For each sequent $\mathcal{S}$ in $T$, the sequent which is obtained from $\mathcal{S}$ by replacing each equality atom $u \equiv v$ with the atom $\mathrm{Eq}_s(u, v)$, where $s$ is the sort of $u$ and $v$.
  \end{enumerate}
  The \emph{category of setoid models} $\mathrm{Mod}_\mathrm{Setoid}(\mathfrak{S}, T)$ is given by the models of $(\mathfrak{S}', T')$, where we consider morphisms of setoid models $f, g : X \rightarrow Y$ as equal if $f_s, g_s : (X_s, \mathrm{Eq}_s) \rightarrow (Y_s, \mathrm{Eq}_s)$ are equal as setoid morphisms for all sorts $s$.
\end{definition}

\begin{proposition}
  \label{prop:setoid-model-equivalence}
  Let $(\mathfrak{S}, T)$ be an RHL theory.
  Then $\mathrm{Mod}_\mathrm{Setoid}(\mathfrak{S}, T)$ and $\mathrm{Mod}(\mathfrak{S}, T)$ are equivalent categories.

  An equivalence is given as follows.
  The functor $F : \mathrm{Mod}(\mathfrak{S}, T) \rightarrow \mathrm{Mod}_\mathrm{Setoid}(\mathfrak{S}, T)$ extends a relational $\mathfrak{S}$-structure $X$ to a relational $\mathfrak{S}'$-structure $F(X)$ on the same carrier by $(\mathrm{Eq}_s)_{F(X)} = \{(x, x) \mid x \in X\}$ for all sorts $s \in S$.
  The functor $G : \mathrm{Mod}_\mathrm{Setoid}(\mathfrak{S}, T)$ assigns to a setoid model $Y$ the relational $\mathfrak{S}$-structure with carriers $X_s = Y_s / \mathrm{Eq}_s$ and relations $r_X = \{ ([y_1], \dots, [y_n]) \mid (y_1, \dots, y_n) \in r_Y \}$.
\end{proposition}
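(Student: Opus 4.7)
The plan is to verify that $F$ and $G$ are well-defined functors, then exhibit natural isomorphisms $G \circ F \cong \mathrm{Id}$ and $F \circ G \cong \mathrm{Id}$, essentially upgrading Proposition~\ref{prop:setoid-quotient-equivalence} from sets to relational structures.

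First I would check that $F$ lands in $\mathrm{Mod}_\mathrm{Setoid}(\mathfrak{S}, T)$: the equivalence axioms of \ref{itm:setoid-theory-equivalence} and the congruence axioms of \ref{itm:setoid-theory-congruence} are immediate because $(\mathrm{Eq}_s)_{F(X)}$ is literally the diagonal, so congruence of $r$ with respect to equality is trivial. For each translated sequent from $T$, the replacement of $u \equiv v$ by $\mathrm{Eq}_s(u,v)$ is semantically indistinguishable in $F(X)$ from the original atom $u \equiv v$, so $F(X)$ satisfies it precisely because $X$ does. Dually, I would check that $G(Y)$ is well-defined as a relational $\mathfrak{S}$-structure; functoriality on morphisms follows because a setoid morphism descends to quotients and because the identification of parallel setoid morphisms is exactly the collapse under quotienting.

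The substantive verification is that $G(Y)$ satisfies every sequent $\mathcal{S} = \mathcal{F} \implies \mathcal{G}$ of $T$. Given an interpretation $I$ of $\mathcal{F}$ in $G(Y)$, lift each variable to a chosen representative in $Y$ to obtain $\tilde I$. For each relation atom $r(v_1,\dots,v_n)$ of $\mathcal{F}$, $I$ being an interpretation only guarantees some tuple equivalent to $(\tilde I(v_1),\dots,\tilde I(v_n))$ lies in $r_Y$, so I would invoke the congruence axiom \ref{itm:setoid-theory-congruence} to replace that tuple by $(\tilde I(v_1),\dots,\tilde I(v_n))$ itself; for equality atoms, the quotient definition gives $(\tilde I(u),\tilde I(v)) \in \mathrm{Eq}_s$ directly. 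Thus $\tilde I$ interprets the translated premise in $Y$, and since $Y$ is a setoid model, $\tilde I$ extends to an interpretation of the translated $\mathcal{F}' \land \mathcal{G}'$. Projecting through the quotient produces the required extension of $I$ in $G(Y)$, with equality atoms handled by the fact that $\mathrm{Eq}_s$ in $Y$ descends to literal equality in $G(Y)$.

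For the unit and counit: $G \circ F$ is the identity on carriers and relations on the nose, since the quotient by the diagonal is a no-op. For $F \circ G \cong \mathrm{Id}$, the quotient map $\eta_Y : Y \to FG(Y)$ with $\eta_Y(y) = [y]$ is a natural morphism of setoid models; any section $\sigma$ picking representatives is a morphism of setoid models in the other direction, and the composites agree with identities modulo the setoid equivalence on morphisms built into $\mathrm{Mod}_\mathrm{Setoid}$. The main obstacle I expect is the congruence step in verifying that $G(Y) \models T$: the congruence axioms \ref{itm:setoid-theory-congruence} are exactly what makes representative choices immaterial, and without them the lift-and-project argument would fail for relation atoms in premises.
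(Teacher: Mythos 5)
Your proposal is correct and follows essentially the same route as the paper: verify well-definedness of $F$ and $G$ via the lift-and-project argument (using the congruence axioms \ref{itm:setoid-theory-congruence} to repair relation atoms after lifting to representatives), observe $G \circ F = \mathrm{Id}$ on the nose, and invert the quotient map $Y \rightarrow FG(Y)$ by a choice of representatives, with the composites identified using the setoid equality of morphisms built into $\mathrm{Mod}_\mathrm{Setoid}$. The only point worth making explicit is that your representative-choosing section preserves the relations $r$ precisely because $r_Y$ is closed under $\mathrm{Eq}$ in each argument --- the same congruence fact you already invoke for the premise verification, and exactly what the paper notes at this step.
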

\begin{proof}
  We must first verify that $F$ and $G$ are well-defined, i.e., that the relational structures in their images are indeed models of the respective theories.
  This is clear for $F$.

  Let $X = G(Y)$ for $Y \in \mathrm{Mod}_\mathrm{Setoid}(\mathfrak{S}, T)$.
  Let $\mathcal{F}$ be a formula for the signature $\mathfrak{S}$ and let $I$ be an interpretation of $\mathcal{F}$ in $X$.
  Since the carriers of $X$ are defined as quotients of the carriers of $Y$, every interpretation $I$ of $\mathcal{F}$ in $X$ lifts to an interpretation $I'$ of the same set of variables in $Y$, so that we have $I(v) = [I'(v)]$ for all variables $v$.
  Note that $I'$ is an interpretation of the set of variables of $\mathcal{F}$, but not always of the formula $\mathcal{F}$.
  Let $\mathcal{F}'$ be the formula obtained from $\mathcal{F}$ by replacing every equality atom $u \equiv v$ by the atom $\mathrm{Eq}_s(u, v)$, where $s$ is the sort of $u$ and $v$.
  We claim that $I'$ is an interpretation of $\mathcal{F}'$ in $Y$.
  To show this, it suffices to consider the case where $\mathcal{F}$ is an atom.
  \begin{itemize}
    \item
      If $\mathcal{F} = u \equiv v$, then $I(u) = I(v)$, so $[I'(u)] = I(u) = I(v) = [I'(v)]$.
      Thus $I'(u)$ and $I'(v)$ are in the same equivalence class, that is, $(I'(u), I'(v)) \in (\mathrm{Eq}_s)_Y$.
    \item
      If $\mathcal{F} = r(v_1, \dots, v_n)$ for some relation symbol $r$, then $(I(v_1), \dots, I(v_n)) \in r_Y$.
      By definition of $r_Y$, there exist $y_1, \dots, y_n \in Y$ such that $[y_i] = I(v_i)$ and $(y_1, \dots, y_n) \in r_X$.
      Thus $[I'(v_i)] = [y_i]$, so we have $(I'(v_i), y_i) \in (\mathrm{Eq}_{s_i})_Y$ for all $i$.
      Since $Y$ satisfies the congruence sequents \ref{itm:setoid-theory-congruence}, $r_Y$ is closed under equivalence in each argument, hence $(I'(v_1), \dots, I'(v_n)) \in r_Y$.
    \item
      The case $\mathcal{F} = v \downarrow$ is trivial.
  \end{itemize}
  Conversely, every interpretation $I'$ of $\mathcal{F}'$ in $X$ descends to an interpretation of $I$ of $\mathcal{F}$ in $Y$ by setting $I'(v) = [I(v)]$.
  
  Now, let $\mathcal{F} \implies \mathcal{G}$ be a sequent in $T$, and let $I$ be an interpretation of $\mathcal{F}$ in $X$.
  We have just shown that  $I'$ lifts to an interpretation of $\mathcal{F}'$ in $Y$.
  Because $Y$ satisfies $\mathcal{F}' \implies \mathcal{G}'$, we can extend $I'$ to an interpretation $J'$ of $\mathcal{G}'$ in $Y$, and then $J'$ descends to an interpretation of $\mathcal{G}$ in $X$ that extends $I$.
  Thus $G$ is well-defined.

  The composition $G \circ F$ is equivalent to the identity functor since a quotient by the diagonal does not change the original set.
  As for $F \circ G$, note that there is a canonical map $f : Y \rightarrow F(G(Y))$ for all setoid models $Y$.
  The restriction $f_s$ of $f$ to a setoid carrier $(X_s, \mathrm{Eq}_s)$ is an isomorphism of setoids for all sorts $s$, with inverses $g_s$ given by a choice of representative in each equivalence class.
  Since the relations of $X$ are closed under equivalence in each argument, it follows that $g$ is a morphism of relational structures.
  Thus $f$ and $g$ are isomorphisms.
\end{proof}

\begin{corollary}
  \label{cor:setoid-model-free}
  Let $(\mathfrak{S}, T)$ be an RHL theory with setoid transformation $(\mathfrak{S}', T')$.
  Then the reflection $\mathrm{Rel}(\mathfrak{S}) \rightarrow \mathrm{Mod}(\mathfrak{S}, T)$ can be computed as composite
  \begin{equation}
    \mathrm{Rel}(\mathfrak{S}) \xrightarrow{F_1} \mathrm{Rel}(\mathfrak{S}') \xrightarrow{F_2} \mathrm{Mod}(\mathfrak{S}', T') \xrightarrow{F_3} \mathrm{Mod}(\mathfrak{S}, T)
  \end{equation}
  where
  \begin{itemize}
    \item
      $F_1$ is the functors that extends relational $\mathfrak{S}$-structures $X$ to relational $\mathfrak{S}'$-structures with empty relations $\mathrm{Eq}_s$,
    \item
      $F_2$ is the free $T'$-model functor, and
    \item
      $F_3$ is one half of the equivalence constructed in Proposition~\ref{prop:setoid-model-equivalence}.
  \end{itemize}
\end{corollary}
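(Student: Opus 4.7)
The plan is to exhibit each of the three factors $F_1, F_2, F_3$ as a (weak) left adjoint and then to identify the composite of the corresponding right adjoints with the inclusion $\mathrm{Mod}(\mathfrak{S}, T) \hookrightarrow \mathrm{Rel}(\mathfrak{S})$. Once this is established, the universal property of the composite follows immediately by stringing together the universal properties of each factor.

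First, $F_1$ is the left adjoint to the forgetful functor $U : \mathrm{Rel}(\mathfrak{S}') \to \mathrm{Rel}(\mathfrak{S})$ supplied by Proposition~\ref{prop:adjoints-signature-inclusion} for the signature extension $\mathfrak{S} \subseteq \mathfrak{S}'$; concretely, this left adjoint is precisely the functor that equips an $\mathfrak{S}$-structure with empty $\mathrm{Eq}_s$ relations. Second, $F_2$ is the (weak) reflector of $\mathrm{Rel}(\mathfrak{S}')$ into $\mathrm{Mod}(\mathfrak{S}', T')$ produced by Proposition~\ref{prop:free-models} applied to the RHL theory $T'$, with right adjoint the inclusion $\iota : \mathrm{Mod}(\mathfrak{S}', T') \hookrightarrow \mathrm{Rel}(\mathfrak{S}')$. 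Third, $F_3$ is one half of the equivalence of Proposition~\ref{prop:setoid-model-equivalence}, so it is adjoint in both directions to the other half $F$, which sends $Y \in \mathrm{Mod}(\mathfrak{S}, T)$ to the $\mathfrak{S}'$-structure on the same carriers obtained by taking $\mathrm{Eq}_s$ to be the diagonal $\{(y, y) \mid y \in Y_s\}$.

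To identify the right adjoint of the composite, I trace $U \circ \iota \circ F$ on an object $Y \in \mathrm{Mod}(\mathfrak{S}, T)$: the functor $F$ places the diagonal $\mathrm{Eq}_s$ on top of $Y$'s existing carriers and relations, $\iota$ just views the result as a plain relational $\mathfrak{S}'$-structure, and $U$ then discards $\mathrm{Eq}_s$ again, returning $Y$ itself. Hence $U \circ \iota \circ F$ equals the inclusion $\mathrm{Mod}(\mathfrak{S}, T) \hookrightarrow \mathrm{Rel}(\mathfrak{S})$. By composing adjunctions in the opposite order, $F_3 \circ F_2 \circ F_1$ is (weakly) left adjoint to this inclusion, and is therefore the claimed reflector.

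No serious technical obstacle is expected; the argument is a routine chase through adjunctions together with the equivalence of Proposition~\ref{prop:setoid-model-equivalence}. The only subtlety worth flagging is that Proposition~\ref{prop:free-models} gives a strong reflector $F_2$ only when the transformed theory $T'$ is strong, so the composite correspondingly yields only a weak reflection in general; this is acceptable because the small object argument still supplies explicit functorial choices at each stage and the pointwise bijection between extensions $F_1(X) \to F(Y)$ and maps $X \to Y$ together with the (weak) universal property of $F_2(F_1(X))$ carries through to a (weak) universal property of $F_3(F_2(F_1(X)))$.
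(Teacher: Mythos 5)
Your proposal is correct and follows the same route as the paper: identify $F_1$ as the left adjoint from Proposition~\ref{prop:adjoints-signature-inclusion}, $F_2$ as the (weak) reflector from Proposition~\ref{prop:free-models}, $F_3$ as half of the equivalence of Proposition~\ref{prop:setoid-model-equivalence}, and observe that the composite of the corresponding right adjoints is exactly the inclusion $\mathrm{Mod}(\mathfrak{S}, T) \subseteq \mathrm{Rel}(\mathfrak{S})$. Your explicit trace through the right adjoints and the caveat about the weak case are just a more detailed rendering of the paper's one-line argument.
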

\begin{proof}
  $F_1$ and $F_2$ are left adjoints and $F_3$ is an equivalence.
  The composite of the respective right adjoints is the inclusion $\mathrm{Mod}(\mathfrak{S}, T) \subseteq \mathrm{Rel}(\mathfrak{S})$, so the composite of the $F_i$ is the reflection into $\mathrm{Mod}(\mathfrak{S}, T)$.
\end{proof}

\begin{remark}
  Proposition~\ref{prop:setoid-model-equivalence} shows that RHL can be reduced to Datalog with minor extensions.
  In practice, however, using the resulting Datalog programs to compute free models is often unfeasible even for small inputs.

  One issue is that storing the equivalence relations $\mathrm{Eq}_s$ naively requires quadratic memory with respect to the size of equivalence classes.
  This problem can be largely addressed by using a union-find data structure, which only requires linear memory.
  Union-find data structures are available in the Souffle Datalog engine \citep{souffle-union-find}.

  A more significant issue are the congruence axioms \ref{itm:setoid-theory-congruence}, which result in an exponential increase in memory requirements with respect to the arity of relations.
  Every equality inferred during evaluation can significantly increase the total size of the relational structure in the next stage.
  Semantically, however, every inferred equality should in fact \emph{reduce} the size of the relational structure at the next stage.
  The Eqlog engine, which evaluates RHL theories directly, takes advantage of this observation by maintaining a union-find data structure on each sort to keep track of a canonical representative for each equivalence class.
  The relations then contain entries only for these canonical representatives.
  An inferred equality results in a merge of two equivalence classes, with one of the canonical representatives ceasing to be representative.
  Eqlog then \emph{canonicalizes} all tuples by replacing each occurrence of the old representative with the new representative.
  Since relations are stored without duplicates, this often results in a decrease in the size of the relations, so that later stages can be computed faster.
\end{remark}

\begin{remark}
  The following alternative \emph{sparse} setoid transformation $(\mathfrak{S}', T'')$ of an RHL theory $(\mathfrak{S}, T)$ can result in a more efficient Datalog program.
  The signature $\mathfrak{S}'$ of the sparse setoid transformation is the same as in the standard setoid transformation (Definition~\ref{def:setoid-transformation}), so $\mathfrak{S}'$ contains additional equivalence relations $\mathrm{Eq}_s$ for all sorts $s$.
  As before, $T''$ contains the equivalence relation axioms \ref{itm:setoid-theory-equivalence}.
  However, the congruence axioms \ref{itm:setoid-theory-congruence} are omitted.

  Instead, we modify the premise of each sequent $\mathcal{F} \implies \mathcal{G}$ in $T$ so that the different occurrences of a variable can be interpreted by distinct but equivalent elements.
  As before, we replace equality atoms $u \equiv v$ by atoms $\mathrm{Eq}_s(u, v)$ in premise and conclusion.
  Next, for each variable $v$ that occurs $n > 0$ times in the premise $\mathcal{F}$, we choose a list $v = v^1, v^2, \dots, v^n$ of variables of the same sort, where $v^2, \dots, v^n$ are fresh.
  We now replace the $i$th occurrence of $v$ in $\mathcal{F}$ with $v^i$, and add the atoms $\mathrm{Eq}(v, v^i)$ for all $i = 2, \dots, n$ to $\mathcal{F}$.

  The transformed premises $\mathcal{F}''$ have the following property:
  If $X$ is a relational $\mathfrak{S}'$-structure that satisfies the equivalence relation axioms and $X'$ is the relational structure over $X$ that furthermore satisfies the congruence axioms, then maps $[\mathcal{F}'] \rightarrow X'$ are in bijection to maps $[\mathcal{F}''] \rightarrow X$ up to setoid morphism equality.
  From this it follows that Corollary~\ref{cor:setoid-model-free} holds also for the sparse transformation $(\mathfrak{S}', T'')$.

  The sparse transformation avoids duplication in many cases, but data that can be inferred twice for distinct but equivalent elements is still duplicated.
  Furthermore, the transformed premises $\mathcal{F}''$ can be more computationally expensive to match because all elements in an equivalence class must be considered for every occurrence of a variable in the original premise $\mathcal{F}$.
\end{remark}

\section{Partial Horn Logic}
\label{sec:phl}

Partial Horn logic is one of the many equivalent notions of essentially algebraic theory.
It was initially defined by Palmgren and Vickers \citep{phl}, who proved its equivalence to essentially algebraic theories.
Here we shall understand PHL as a syntactic extension, as \emph{syntactic sugar}, over RHL.

Observe that it cannot be read off from the individual sequents whether or not an RHL theory is strong or not.
Instead, one has to consider the interplay between the different sequents of the theory.
As a result, it is computationally undecidable whether or not a given RHL theory is strong.

Our application for the semantics developed in this paper are tools that allow computations based on the small object argument.
Such tools compute (fragments of) free models of user-defined theories that encode problem domains.
It is highly desirable that these theories are strong, since otherwise the result of the computation is not uniquely determined.
As strong RHL theories are difficult to recognize for both humans and computers, we argue that RHL is not directly suitable as an input theory language for this purpose.
What is needed, then, is a language with the same descriptive power as RHL, but where an easily recognizable subset allows axiomatizing all strong theories.

PHL is indeed such a language:
Proposition~\ref{prop:strong-rhl-to-epic-phl} shows that every strong RHL theory is equivalent to a PHL theory containing \emph{epic} sequents only.
PHL sequents are epic if no new variables are introduced in the conclusion, which is a criterion that can be easily checked separately for each sequent without regard for the theory the sequent appears in.
If tools wish to allow only strong theories, they can accept PHL as input language but reject non-epic sequents.
Note that there exist PHL theories containing non-epic sequents which are nevertheless strong, but such theories can be equivalently axiomatized as epic PHL theories.
Thus, no generality is lost compared to general strong theories when rejecting non-epic PHL theories.

\subsection{Algebraic Structures}

\begin{definition}
  An \emph{algebraic signature} is a relational signature $(S, R)$ equipped with a partition $R = P \sqcup F$ of the set of relation symbol into disjoint sets $P$ of \emph{predicate symbols} and $F$ of \emph{function symbols} such that the arity of every function symbol is non-empty.
  If $f \in F$ is a function symbol, then we write $f : s_1 \times \dots \times s_n \rightarrow s$ if the arity of $f$ as a relation symbol is $f : s_1 \times \dots \times s_n \times s$.
\end{definition}

\begin{definition}
  An \emph{algebraic $\mathfrak{S}$-structure} is a relational $\mathfrak{S}$-structure $X$ such that $f_X$ is the graph of a partial function for all $f \in F$.
  Thus if $(x_1, \dots, x_n, y) \in f_X$ and $(x_1, \dots, x_n, z) \in f_X$, then $y = z$.
  We use $f_X(x_1, \dots, x_n)$ to denote the unique element $y$ such that $(x_1, \dots, x_n, y) \in f_X$, and we write $f_X(x_1, \dots, x_n) \downarrow$ to denote that such an element $y$ exists.
  A \emph{morphism of algebraic structures} is a morphism of underlying relational structures.
  The \emph{category of algebraic structures} is denoted by $\mathrm{Alg}(\mathfrak{S})$.
\end{definition}
If the algebraic signature $\mathfrak{S}$ is clear from context, we abbreviate $\mathrm{Alg}(\mathfrak{S})$ as $\mathrm{Alg}$.

\begin{proposition}
  \label{prop:functionality-axioms}
  Let $X$ be a relational structure.
  Then $X$ is an algebraic structure if and only if it satisfies the RHL sequent
  \begin{equation}
    \label{eq:functionality-axiom}
    f(v_1, \dots, v_n, u_0) \land f(v_1, \dots, v_n, u_1) \implies u_0 \equiv u_1
  \end{equation}
  for each function symbol $f : s_1 \times \dots \times s_n \rightarrow s$.
  \qed
\end{proposition}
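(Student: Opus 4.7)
The plan is to unfold both sides of the equivalence directly using the definitions: the condition that $f_X$ is the graph of a partial function, and Definition~\ref{def:relational-semantics} of satisfaction of an RHL sequent. Since the variables $v_1, \dots, v_n, u_0, u_1$ that occur in the conclusion of~\eqref{eq:functionality-axiom} already appear in its premise, an interpretation of the premise extends to an interpretation of the premise conjoined with the conclusion if and only if it itself interprets the conclusion, i.e.\ if and only if $I(u_0) = I(u_1)$. This observation is the only nontrivial step, and it makes both directions essentially routine.

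For the forward direction I would assume $X$ is algebraic. Let $I$ be an arbitrary interpretation of the premise $f(v_1, \dots, v_n, u_0) \land f(v_1, \dots, v_n, u_1)$ in $X$. Unfolding the semantics, this means $(I(v_1), \dots, I(v_n), I(u_0)) \in f_X$ and $(I(v_1), \dots, I(v_n), I(u_1)) \in f_X$. Algebraicity then forces $I(u_0) = I(u_1)$, so $I$ itself interprets $u_0 \equiv u_1$, hence the premise $\land$ conclusion. Thus $X$ satisfies the sequent.

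For the converse direction I would assume $X$ satisfies the sequent~\eqref{eq:functionality-axiom} for $f$ and show functionality of $f_X$. Given tuples $(x_1, \dots, x_n, y), (x_1, \dots, x_n, z) \in f_X$, define an interpretation $I$ on $\{v_1, \dots, v_n, u_0, u_1\}$ by $I(v_i) = x_i$, $I(u_0) = y$, $I(u_1) = z$. By construction $I$ interprets the premise, so by satisfaction it extends to an interpretation $J$ of the conclusion. Since $u_0$ and $u_1$ already appear in the premise, $J$ agrees with $I$ on them, and $J(u_0) = J(u_1)$ gives $y = z$, as required.

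There is no real obstacle here; the only subtlety to keep track of is that the notion of satisfaction in Definition~\ref{def:relational-semantics} speaks of extending interpretations of the premise to interpretations of premise conjoined with conclusion, but for this particular sequent the conclusion introduces no new variables, so ``extension'' collapses to ``the premise interpretation already validates the conclusion.'' Once that is noted, both implications are one-line unfoldings of definitions, which is why the proof in the paper can likely be omitted entirely (and indeed the statement ends with \verb|\qed|).
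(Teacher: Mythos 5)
Your proof is correct and is exactly the definitional unfolding the paper has in mind when it omits the argument with a bare \qed; the one point you flag --- that the conclusion introduces no new variables, so ``extension'' of an interpretation of the premise collapses to the premise interpretation already validating $u_0 \equiv u_1$ --- is indeed the only thing worth noting. Nothing is missing.
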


\begin{corollary}
  The category of algebraic structures is a reflective subcategory of the category of relational structures.
  The reflections $X \rightarrow X'$ of relational structures $X$ into $\mathrm{Alg}$ are surjections.
\end{corollary}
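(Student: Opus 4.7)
The plan is to apply Proposition~\ref{prop:free-models} to the RHL theory $T_{\mathrm{func}}$ consisting of the functionality axiom~\eqref{eq:functionality-axiom} for each function symbol $f \in F$. By Proposition~\ref{prop:functionality-axioms}, we have a literal equality $\mathrm{Alg}(\mathfrak{S}) = \mathrm{Mod}(T_{\mathrm{func}})$ as full subcategories of $\mathrm{Rel}(\mathfrak{S})$, so the reflectivity claim reduces to checking that $T_{\mathrm{func}}$ is strong.

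First I would argue that each functionality sequent $\mathcal{S}$ is surjective in the sense of Definition~\ref{def:monic-epic-rhl-sequents}: every variable in the conclusion $u_0 \equiv u_1$ already occurs in the premise. By Proposition~\ref{prop:classifying-monic-epic}~\ref{itm:classifying-surjective}, the classifying morphism $[\mathcal{S}]$ is then a surjection of finite relational structures, and by Proposition~\ref{prop:monic-epic-rel-morphisms}~\ref{itm:surjective-iff-epi} it is therefore an epimorphism in $\mathrm{Rel}(\mathfrak{S})$. Proposition~\ref{prop:epi-strong} now tells us that the class $\{[\mathcal{S}] \mid \mathcal{S} \in T_{\mathrm{func}}\}$ is strong, so $T_{\mathrm{func}}$ is a strong RHL theory in the sense of the definition preceding Proposition~\ref{prop:free-models}. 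Applying Proposition~\ref{prop:free-models} yields that $\mathrm{Alg}(\mathfrak{S}) \subseteq \mathrm{Rel}(\mathfrak{S})$ is reflective.

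For the second assertion, I would unfold the small object argument used in the proof of Proposition~\ref{prop:free-models}. The reflection $\eta \colon X \to X'$ is obtained as an infinite composition of pushouts of coproducts of classifying morphisms $[\mathcal{S}]$, and thus as a relative $[T_{\mathrm{func}}]$-cell complex. By Proposition~\ref{prop:rel-cocomplete}, colimits in $\mathrm{Rel}(\mathfrak{S})$ are computed on carriers, where surjectivity is preserved by coproducts, pushouts, and sequential colimits. Since each $[\mathcal{S}]$ is surjective, it follows by induction along the closure properties of Definition~\ref{def:relative-cell-complex} that every relative $[T_{\mathrm{func}}]$-cell complex is surjective; this is the same observation used at the start of the proof of Proposition~\ref{prop:surjective-small-object-argument}. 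Hence $\eta$ is a surjection.

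No step presents a real obstacle: the only mild subtlety is making sure that surjectivity on carriers is preserved through the iterated colimit construction, but this is immediate from Proposition~\ref{prop:rel-cocomplete} and the fact that in $\mathrm{Set}$ surjections are stable under pushout and filtered colimit. The construction here is a special case of the reflections of Section~\ref{subsec:rhl-completeness}, and in fact Corollary~\ref{cor:surjective-rhl-theories-terminate} gives the additional information that when $X$ is finite the sequence terminates and $X'$ is finite as well, though that is not required for the statement at hand.
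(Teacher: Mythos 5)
Your proposal is correct and follows essentially the same route as the paper: identify $\mathrm{Alg}$ with the models of the functionality theory, observe that the classifying morphisms of the functionality sequents are epimorphisms (so the theory is strong and the small object argument gives a reflection), and conclude surjectivity of the reflection from stability of surjections/epimorphisms under the colimits used in relative cell complexes. No gaps to report.
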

\begin{proof}
  That every reflection is surjective follows from the small object argument and the fact that the classifying morphisms of functionality axioms \eqref{eq:functionality-axiom} are epimorphisms of relational structures, hence so are all coproducts, pushouts and (infinite) compositions thereof.
\end{proof}
We denote the free algebraic structure functor by $\mathrm{FAlg} : \mathrm{Rel} \rightarrow \mathrm{Alg}$.

\begin{corollary}
  The category of algebraic structures is complete and cocomplete.
\end{corollary}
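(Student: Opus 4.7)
The plan is to leverage the preceding corollary, which establishes $\mathrm{Alg}(\mathfrak{S})$ as a reflective subcategory of $\mathrm{Rel}(\mathfrak{S})$ via the reflector $\mathrm{FAlg}$, together with the completeness and cocompleteness of $\mathrm{Rel}(\mathfrak{S})$ (Proposition~\ref{prop:rel-cocomplete}). These are exactly the ingredients for the standard fact that a reflective subcategory of a (co)complete category is itself (co)complete.

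For cocompleteness, I would show that colimits in $\mathrm{Alg}$ are obtained by applying $\mathrm{FAlg}$ to colimits in $\mathrm{Rel}$. Given a diagram $D : I \to \mathrm{Alg}$, let $C = \operatorname{colim}_{i \in I} D(i)$ in $\mathrm{Rel}$. For any $A \in \mathrm{Alg}$, the adjunction $\mathrm{FAlg} \dashv (\mathrm{Alg} \hookrightarrow \mathrm{Rel})$ and the universal property of $C$ yield natural isomorphisms
\begin{equation*}
  \mathrm{Hom}_{\mathrm{Alg}}(\mathrm{FAlg}(C), A) \cong \mathrm{Hom}_{\mathrm{Rel}}(C, A) \cong \lim_{i \in I} \mathrm{Hom}_{\mathrm{Rel}}(D(i), A) = \lim_{i \in I} \mathrm{Hom}_{\mathrm{Alg}}(D(i), A),
\end{equation*}
where the last equality uses that $\mathrm{Alg} \hookrightarrow \mathrm{Rel}$ is a full subcategory. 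Hence $\mathrm{FAlg}(C)$, equipped with the induced cocone, is the colimit of $D$ in $\mathrm{Alg}$.

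For completeness, I would instead show that limits of algebraic diagrams can already be computed in $\mathrm{Rel}$. Given $D : I \to \mathrm{Alg}$, let $L = \lim_{i \in I} D(i)$ in $\mathrm{Rel}$. By Proposition~\ref{prop:rel-cocomplete}, the carriers $L_s$ and relation sets $r_L$ are just the limits of the corresponding sets in $\mathrm{Set}$. I would then verify that $L$ satisfies the functionality axiom~\eqref{eq:functionality-axiom}: if $(x_1, \dots, x_n, y_0)$ and $(x_1, \dots, x_n, y_1)$ both lie in $f_L$, then applying each projection $p_i : L \to D(i)$ gives $p_i(y_0) = p_i(y_1)$ by algebraicity of $D(i)$, so $y_0 = y_1$ since limits in $\mathrm{Set}$ are jointly monic in their projections. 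Thus $L$ lies in $\mathrm{Alg}$, and it inherits the universal property of the limit in $\mathrm{Alg}$ because the inclusion is full.

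No step here is genuinely difficult; the closest thing to an obstacle is the verification that the functionality axiom is preserved under limits, but this is immediate from the pointwise construction of limits in $\mathrm{Rel}$ together with the fact that limits in $\mathrm{Set}$ may be realized as subsets of products.
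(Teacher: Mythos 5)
Your proof is correct and follows essentially the same route as the paper, which simply invokes the general facts that a reflective subcategory is closed under limits of the ambient category and that its colimits are obtained by applying the reflector to ambient colimits; you have merely spelled out these two standard arguments (the functionality-axiom check for limits and the adjunction/fullness computation for colimits) in detail.
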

\begin{proof}
  This follows from general facts about reflective subcatgories:
  They are stable under limits, and colimits are computed by reflecting colimits of the ambient category.
\end{proof}

\subsection{Syntax and Semantics}
\label{subsec:phl-semantics}

Fix an algebraic signature $\mathfrak{S} = (S, P \sqcup F)$.

\begin{definition}
  \label{def:phl}
  The set of \emph{terms} and a sort assigned to each term is given by the following recursive definition:
  \begin{enumerate}
    \item
      If $v$ is a variable of sort $s$, then $v$ is a term of sort $s$.
    \item
      If $f : s_1 \times \dots \times s_n \rightarrow s$ is a function symbol and $t_1, \dots, t_n$ are terms such that $t_i$ has sort $s_i$ for all $i = 1, \dots, n$, then $f(t_1, \dots, t_n)$ is a term of sort $s$.
  \end{enumerate}
  \emph{PHL atoms, PHL formulas and PHL sequents} are defined as in Definition~\ref{def:relational-syntax}, but with three changes:
  \begin{enumerate}
    \item
      In each type of atom, also composite terms of the same sort are allowed in place of only variables.
    \item
      A PHL atom $r(t_1, \dots, t_n)$ is valid only if $r = p$ is a predicate symbol, but not if $r$ is a function symbol.
      We refer to such atoms as \emph{predicate atoms}.
    \item
      An atom of the form $t \downarrow$ is called a \emph{term assertion atom}, whereas \emph{sort quantification atom} is reserved for atoms of the form $v \downarrow$ with $v$ a variable.
  \end{enumerate}
\end{definition}

\begin{definition}
  Let $X$ be an algebraic structure.
  An \emph{interpretation of a term $t$} in $X$ is an interpretation $I$ of the variables occurring in $t$ such that the following recursive extension of $I$ to composite terms is well-defined on $t$:
  \begin{equation}
    I(f(t_1, \dots, t_n)) = f_X(I(t_1), \dots, I(t_n)).
  \end{equation}
  Note that the right-hand side might not be defined; in this case also the left-hand side is undefined.

  An \emph{interpretation of a PHL atom} in $X$ is defined analogously to the interpretation of an RHL atom, but with the additional condition that the interpretation is defined on all (possibly composite) terms occurring in the atom.

  An \emph{interpretation of a PHL formula $\mathcal{F} = \phi_1 \land \dots \land \phi_n$} is an interpretation of the variables occurring in $\mathcal{F}$ that restricts to an interpretation of $\phi_i$ for each $i \in \{1, \dots, n\}$.
  An algebraic structure $X$ \emph{satisfies} a PHL sequent $\mathcal{F} \implies \mathcal{G}$ if each interpretation of $\mathcal{F}$ in $X$ can be extended to an interpretation of $\mathcal{F} \land \mathcal{G}$ in $X$.
\end{definition}

\begin{remark}
  There are some differences between our notion of partial Horn logic and the notion introduced by \citet{phl}:
  \begin{enumerate}
    \item
      \label{itm:difference-downarrow}
      In \citet{phl}, the notation $t \downarrow$ is a meta-theoretic abbreviation for self-equality atoms $t \equiv t$.
    \item
      \label{itm:difference-context}
      In \citet{phl}, sequents are annotated with a \emph{context}, i.e. a set of variables.
      Sequents may only refer to variables listed in the context.
      Interpretations of the premise of the sequent must always also interpret variables in the context.
      A sequent $\mathcal{F} \implies \mathcal{G}$ with context $C = \{v_1, \dots, v_n\}$ as in \citet{phl} is thus semantically equivalent to our PHL sequent $\mathcal{F} \land v_1 \downarrow \land \dots \land v_n \downarrow \implies \mathcal{G}$.
    \item
      \label{itm:difference-existentials}
      Our notion of sequent allows variables in the conclusion that do not occur in the premise (or context).
      According to our semantics, we may think of such variables as implicitly existentially quantified:
      A sequent is satisfied if for every interpretation of the premise, there exists a suitable extension of the interpretation to the conclusion.
      This does not have a counterpart in the PHL considered in \citet{phl}.
  \end{enumerate}
  \ref{itm:difference-downarrow} and \ref{itm:difference-context} are minor syntactic differences, but \ref{itm:difference-existentials} increases the descriptive strength of PHL non-trivially.
  We later consider \emph{epic} PHL (Section~\ref{subsec:phl-completeness}), where all variables in the conclusion must also occur in the premise.
  Apart from \ref{itm:difference-downarrow} and \ref{itm:difference-context} and above, our notion of epic PHL agrees with PHL as defined in \citet{phl}.
\end{remark}

\begin{definition}
  \label{def:flattening}
  Let $t$ be a term.
  The \emph{flattening} of $t$ consists of an RHL formula $\mathrm{Flat}(t)$ and a result variable $v_\mathrm{Flat}(t)$.
  Flattening is defined recursively as follows:
  \begin{enumerate}
    \item
      If $t = v$ is a variable, then $\mathrm{Flat}(t) = \top$ is the empty conjunction and $v_\mathrm{Flat}(t) = v$.
    \item
      \label{itm:flattening-composite}
      If $t = f(t_1, \dots, t_n)$, then
      \begin{equation}
        \mathrm{Flat}(t) = \mathrm{Flat}(t_1) \land \dots \land \mathrm{Flat}(t_n) \land f(v_\mathrm{Flat}(t_1), \dots, v_\mathrm{Flat}(t_n), u)
      \end{equation}
      where $u = v_\mathrm{Flat}(t)$ is a fresh variable.
  \end{enumerate}

  Let $\phi$ be a PHL atom.
  The flattening $\mathrm{Flat}(\phi)$ is an RHL formula which is defined depending on the type of $\phi$ as follows:
  \begin{enumerate}
    \item
      If $\phi = p(t_1, \dots, t_n)$ is a predicate atom, then
      \begin{equation}
        \mathrm{Flat}(\phi) = \mathrm{Flat}(t_1) \land \dots \land \mathrm{Flat}(t_n) \land p(v_\mathrm{Flat}(t_1), \dots, v_\mathrm{Flat}(t_n)).
      \end{equation}
    \item
      If $\phi = t \downarrow$ is a term assertion atom, then
      \begin{equation}
        \mathrm{Flat}(\phi) = \mathrm{Flat}(t) \land v_\mathrm{Flat}(t) \downarrow.
      \end{equation}
    \item
      If $\phi = t_1 \equiv t_2$ is an equality atom, then
      \begin{equation}
        \mathrm{Flat}(\phi) = \mathrm{Flat}(t_1) \land \mathrm{Flat}(t_2) \land v_\mathrm{Flat}(t_1) \equiv v_\mathrm{Flat}(t_2).
      \end{equation}
  \end{enumerate}
  The flattening of a PHL formula is the conjunction of the flattenings of each atom making up the formula.
  The flattening of a PHL sequent is the RHL sequent given by flattening premise and conclusion.
\end{definition}

\begin{remark}
  \label{rem:flattening}
  The flattening of a composite term $t = f(t_1, \dots, t_n)$ involves the choice of a ``fresh'' variable $u$.
  This notion can be made precise as follows:
  The flattening operations take as additional parameter a sequence $(u_n)_{n \in \mathbb{N}}$ of variables such that the variables $u_i$ do not occur in the syntactic objects that should be flattened.
  Choosing a ``fresh'' variable $u$ now means that we set $u = u_0$, and for all further flattening operations we pass the sequence $(u_{n + 1})_{n \in \mathbb{N}}$.

  This means that a term $t$ that appears twice in the same formula $\mathcal{F}$ will be flattened twice with different choices of fresh variables.
  For example, if $f$ is a binary function symbol and $x_1, x_2$ are variables, then the flattening of the PHL formula
  \begin{equation}
    \mathcal{F} = f(x_1, x_2) \equiv x_1 \land f(x_1, x_2) \equiv x_2
  \end{equation}
  is the RHL formula
  \begin{equation}
    \mathrm{Flat}(\mathcal{F}) = f(x_1, x_2, u_0) \land u_0 \equiv x_1 \land f(x_1, x_2, u_1) \equiv x_2
  \end{equation}
  where $u_0 \neq u_1$.
\end{remark}

\begin{proposition}
  \label{prop:flattening-versus-interpretation}
  Let $X$ be an algebraic structure.
  \begin{enumerate}
    \item
      Let $t$ be a term and let $I$ be an interpretation of the variables of $t$ in $X$.
      Then $I$ can be extended to the term $t$ if and only if $I$ can be extended to an interpretation of the RHL formula $\mathrm{Flat}(t)$.
      In either case, if an extension $J$ exists, then it exists uniquely, and $J(v_\mathrm{Flat}(t)) = I(t)$.
    \item
      Let $\phi$ be a PHL atom and let $I$ be an interpretation of the variables of $\phi$ in $X$.
      Then $I$ is an interpretation of $\phi$ if and only if $I$ can be extended to an interpretation $J$ of the RHL formula $\mathrm{Flat}(\phi)$.
      If $J$ exists, then it exists uniquely.
    \item
      Let $\mathcal{S}$ be a PHL sequent.
      Then $X$ satisfies $\mathcal{S}$ if and only if it satisfies the RHL sequent $\mathrm{Flat}(\mathcal{S})$.
  \end{enumerate}
\end{proposition}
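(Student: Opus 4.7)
The plan is to prove the three parts in order, with part (1) being the engine that drives everything.

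For part (1), I would proceed by induction on the structure of the term $t$. The base case $t = v$ is immediate: $\mathrm{Flat}(t) = \top$, $v_\mathrm{Flat}(t) = v$, and $I$ itself is the unique (trivial) extension, with $I(v_\mathrm{Flat}(t)) = I(v) = I(t)$ by definition. For the inductive step $t = f(t_1,\dots,t_n)$, I would apply the induction hypothesis to each $t_i$: the interpretation $I$ restricted to the variables of $t_i$ extends to $t_i$ iff it extends uniquely to $\mathrm{Flat}(t_i)$, with the extension sending $v_\mathrm{Flat}(t_i)$ to $I(t_i)$. Since the fresh variables introduced in each $\mathrm{Flat}(t_i)$ are pairwise disjoint (by Remark~\ref{rem:flattening}) and disjoint from $u = v_\mathrm{Flat}(t)$, these per-subterm extensions glue into a single partial extension defined on all variables of $\mathrm{Flat}(t)$ except possibly $u$. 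By definition, $I(t) = f_X(I(t_1),\dots,I(t_n))$ is defined iff there exists $y \in X_s$ with $(I(t_1),\dots,I(t_n),y) \in f_X$, and because $X$ is algebraic such a $y$ is unique when it exists (Proposition~\ref{prop:functionality-axioms}). This is exactly the condition that the partial extension can be further extended to the atom $f(v_\mathrm{Flat}(t_1),\dots,v_\mathrm{Flat}(t_n),u)$, and such an extension sends $u$ uniquely to $I(t)$.

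For part (2), I would do a case distinction on the form of $\phi$, applying part (1) to every term appearing in $\phi$. If $\phi = p(t_1,\dots,t_n)$, then $I$ interprets $\phi$ iff $I$ extends to each $t_i$ and $(I(t_1),\dots,I(t_n)) \in p_X$; by (1) this is iff $I$ extends uniquely to $\mathrm{Flat}(t_1)\land\dots\land\mathrm{Flat}(t_n)$ with $v_\mathrm{Flat}(t_i) \mapsto I(t_i)$, and then the extended interpretation interprets $p(v_\mathrm{Flat}(t_1),\dots,v_\mathrm{Flat}(t_n))$ iff the same membership holds. The cases $\phi = t\downarrow$ and $\phi = t_1 \equiv t_2$ are analogous, using that the sort-quantification atom $v_\mathrm{Flat}(t)\downarrow$ is vacuously satisfied whenever $v_\mathrm{Flat}(t)$ is assigned a value, and that equality $v_\mathrm{Flat}(t_1) \equiv v_\mathrm{Flat}(t_2)$ under the unique extension matches $I(t_1) = I(t_2)$. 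The flattening of a general PHL formula then handles correctly because the fresh variables for different atoms are chosen disjointly, so the uniqueness and existence assertions combine conjunctively.

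For part (3), let $\mathcal{S} = \mathcal{F} \implies \mathcal{G}$. Because $\mathrm{Flat}(\mathcal{F})$ and $\mathrm{Flat}(\mathcal{G})$ share exactly the original variables of $\mathcal{F}$ and $\mathcal{G}$ (the newly introduced variables are fresh per-occurrence), part (2) applied to each atom gives that interpretations of $\mathcal{F}$ in $X$ correspond bijectively to interpretations of $\mathrm{Flat}(\mathcal{F})$ in $X$, and similarly an extension to $\mathcal{F} \land \mathcal{G}$ corresponds to an extension to $\mathrm{Flat}(\mathcal{F} \land \mathcal{G}) = \mathrm{Flat}(\mathcal{F}) \land \mathrm{Flat}(\mathcal{G})$. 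The satisfaction conditions on both sides are thus equivalent.

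The main obstacle I expect is the bookkeeping around the freshly introduced variables in part (1): one must verify that the inductive extensions produced for the subterms never collide, that they extend rather than conflict with $I$, and that the uniqueness at each stage propagates without assuming any global property beyond the functionality of each $f_X$. Once that is carefully handled by appealing to the disjointness convention of Remark~\ref{rem:flattening}, the remaining cases are routine.
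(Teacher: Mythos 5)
Your proof is correct and is essentially the paper's argument: the paper disposes of this proposition with ``by construction,'' and your induction on terms (using functionality of $f_X$ for uniqueness of the fresh-variable assignment), case analysis on atoms, and the bijection between interpretations of $\mathcal{F}$ and of $\mathrm{Flat}(\mathcal{F})$ is exactly the routine verification that remark leaves implicit. The bookkeeping about freshness via Remark~\ref{rem:flattening} that you flag is indeed the only delicate point, and you handle it correctly.
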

\begin{proof}
  By construction.
\end{proof}

\begin{definition}
  We associate to each PHL formula $\mathcal{F}$ the following \emph{classifying algebraic structure}:
  \begin{equation}
    [\mathcal{F}] = \mathrm{FAlg}([\mathrm{Flat}(\mathcal{F})])
  \end{equation}
  The composition of the generic interpretation $I_{\mathrm{Flat}(\mathcal{F})}$ in $[\mathrm{Flat}(\mathcal{F})]$ with the reflection into $\mathrm{Alg}$ induces an interpretation of $\mathrm{Flat}(\mathcal{F})$ in $[\mathcal{F}]$, which then restricts to an interpretation $I_\mathcal{F}$ of $\mathcal{F}$ in $[\mathcal{F}]$.
  We call $I_\mathcal{F}$ the  \emph{generic interpretation} of $\mathcal{F}$.
\end{definition}

\begin{proposition}
  \label{prop:algebraic-formula-structure-universal}
  Let $\mathcal{F}$ be a PHL formula and let $X$ be an algebraic structure.
  Then there is a bijection between interpretations of $\mathcal{F}$ in $X$ and maps $[\mathcal{F}] \rightarrow X$.
\end{proposition}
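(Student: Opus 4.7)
The plan is to chain together the three main ingredients already in hand: the correspondence between PHL interpretations and their flattenings (Proposition~\ref{prop:flattening-versus-interpretation}), the universal property of classifying relational structures for RHL formulas (Proposition~\ref{prop:relational-formula-structure-universal}), and the adjunction between $\mathrm{Rel}$ and $\mathrm{Alg}$ afforded by the free algebraic structure functor $\mathrm{FAlg}$.

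First, I would unwind the definitions. Given an algebraic structure $X$ and a PHL formula $\mathcal{F}$, Proposition~\ref{prop:flattening-versus-interpretation} tells us that interpretations of $\mathcal{F}$ in $X$ are in bijection with interpretations of the RHL formula $\mathrm{Flat}(\mathcal{F})$ in $X$ (regarded as a relational structure via the forgetful functor $\mathrm{Alg} \hookrightarrow \mathrm{Rel}$): every interpretation of $\mathcal{F}$ extends uniquely to the fresh variables introduced by flattening, and conversely every interpretation of $\mathrm{Flat}(\mathcal{F})$ restricts to one of $\mathcal{F}$ because $X$ is algebraic and so the functionality axioms force the fresh variables to take the intended values.

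Next, I would apply Proposition~\ref{prop:relational-formula-structure-universal} to identify interpretations of $\mathrm{Flat}(\mathcal{F})$ in $X$ with morphisms $[\mathrm{Flat}(\mathcal{F})] \to X$ in $\mathrm{Rel}$. Finally, I would invoke the adjunction $\mathrm{FAlg} \dashv (\mathrm{Alg} \hookrightarrow \mathrm{Rel})$ from the corollary to Proposition~\ref{prop:functionality-axioms}: since $X \in \mathrm{Alg}$, such morphisms correspond bijectively to morphisms $\mathrm{FAlg}([\mathrm{Flat}(\mathcal{F})]) \to X$ in $\mathrm{Alg}$, and the left-hand side is by definition $[\mathcal{F}]$. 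Composing the three bijections yields the desired correspondence, and tracing the unit of the adjunction shows that it is implemented by precomposition with the generic interpretation $I_\mathcal{F}$.

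The only substantive checkpoint is the first step, namely that the bijection between interpretations of $\mathcal{F}$ and of $\mathrm{Flat}(\mathcal{F})$ handles the subtlety noted in Remark~\ref{rem:flattening}: when a composite term $t$ occurs several times in $\mathcal{F}$, flattening introduces several distinct fresh result variables that a priori only need to interpret terms, not coincide. Because $X$ is algebraic, however, the functionality axioms force these fresh variables to be interpreted by the same element, so the extension of any interpretation of $\mathcal{F}$ to $\mathrm{Flat}(\mathcal{F})$ is unique; this is precisely the uniqueness clause of Proposition~\ref{prop:flattening-versus-interpretation}. Once this is noted, the remaining steps are formal consequences of already-established universal properties.
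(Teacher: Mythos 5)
Your proposal is correct and follows exactly the paper's intended argument: it chains Proposition~\ref{prop:flattening-versus-interpretation}, the universal property of classifying relational structures (Proposition~\ref{prop:relational-formula-structure-universal}), and the adjunction for $\mathrm{FAlg}$, which is precisely the combination the paper's proof invokes. Your added care about the fresh variables of Remark~\ref{rem:flattening} is a welcome elaboration of the same route, not a deviation from it.
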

\begin{proof}
  This follows by combining Proposition~\ref{prop:flattening-versus-interpretation}, Proposition~\ref{prop:algebraic-formula-structure-universal} and the universal property of the free algebraic structure functor.
\end{proof}

\begin{definition}
  Let $\mathcal{S} = \mathcal{F} \implies \mathcal{G}$ be a PHL sequent.
  The \emph{classifying morphism} of $\mathcal{S}$ is the morphism $[\mathcal{S}] : [\mathcal{F}] \rightarrow [\mathcal{F} \land \mathcal{G}]$ of classifying algebraic structures that is induced by the canonical interpretation of $\mathcal{F}$ in $[\mathcal{F} \land \mathcal{G}]$.
\end{definition}

\begin{proposition}
  Let $\mathcal{S}$ be a PHL sequent and let $X$ be an algebraic structure.
  Then $X$ satisfies $\mathcal{S}$ if and only if $X$ is injective to $[\mathcal{S}]$.
\end{proposition}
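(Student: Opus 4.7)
The plan is to mimic the proof of Proposition~\ref{prop:relational-sequent-morphism-lifts} (its RHL counterpart), using the universal property of classifying algebraic structures (Proposition~\ref{prop:algebraic-formula-structure-universal}) in place of the corresponding RHL universal property. Writing $\mathcal{S} = \mathcal{F} \implies \mathcal{G}$, I first invoke Proposition~\ref{prop:algebraic-formula-structure-universal} twice to obtain bijections between interpretations $I$ of $\mathcal{F}$ in $X$ and morphisms $\langle I\rangle : [\mathcal{F}] \to X$, and between interpretations $J$ of $\mathcal{F} \land \mathcal{G}$ and morphisms $\langle J\rangle : [\mathcal{F} \land \mathcal{G}] \to X$. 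The goal is then to verify that $J$ extends $I$ as an interpretation exactly when $\langle J\rangle \circ [\mathcal{S}] = \langle I\rangle$, after which the equivalence between satisfaction of $\mathcal{S}$ and injectivity to $[\mathcal{S}]$ is immediate.

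The crux, then, is the compatibility of the classifying morphism with restriction of interpretations. This is where I expect the only real work. By construction, $[\mathcal{S}] : [\mathcal{F}] \to [\mathcal{F} \land \mathcal{G}]$ is the unique morphism induced by the canonical interpretation of $\mathcal{F}$ in $[\mathcal{F} \land \mathcal{G}]$, which is nothing but the restriction of $I_{\mathcal{F} \land \mathcal{G}}$ to the variables of $\mathcal{F}$. Thus under the bijection of Proposition~\ref{prop:algebraic-formula-structure-universal}, the morphism $[\mathcal{S}]$ corresponds on the nose to the restriction map on interpretations. Precomposing with $[\mathcal{S}]$ therefore computes the restriction of $J$ to the variables of $\mathcal{F}$, so $\langle J\rangle \circ [\mathcal{S}] = \langle I\rangle$ holds if and only if $J|_{\mathrm{Var}(\mathcal{F})} = I$.

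An alternative (and in some ways cleaner) route is to reduce through flattening: since $[\mathcal{F}] = \mathrm{FAlg}([\mathrm{Flat}(\mathcal{F})])$ and $\mathrm{FAlg}$ is left adjoint to the inclusion $\mathrm{Alg} \hookrightarrow \mathrm{Rel}$, morphisms $[\mathcal{F}] \to X$ in $\mathrm{Alg}$ correspond to morphisms $[\mathrm{Flat}(\mathcal{F})] \to X$ in $\mathrm{Rel}$, and similarly for $\mathcal{F} \land \mathcal{G}$. Under these bijections, $[\mathcal{S}]$ is the $\mathrm{FAlg}$-image of $[\mathrm{Flat}(\mathcal{S})]$, so injectivity of $X$ to $[\mathcal{S}]$ in $\mathrm{Alg}$ is equivalent to injectivity of $X$ to $[\mathrm{Flat}(\mathcal{S})]$ in $\mathrm{Rel}$. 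By Proposition~\ref{prop:relational-sequent-morphism-lifts} the latter is equivalent to $X$ satisfying $\mathrm{Flat}(\mathcal{S})$ as an RHL sequent, and by Proposition~\ref{prop:flattening-versus-interpretation} this is equivalent to $X$ satisfying $\mathcal{S}$.

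Either way, the verification is essentially a diagram chase; the only mildly delicate point is making sure that the unit of the adjunction $\mathrm{FAlg}$ induces an interpretation of $\mathcal{F}$ in $[\mathcal{F}]$ (not merely of $\mathrm{Flat}(\mathcal{F})$), which is already handled by the definition of $I_\mathcal{F}$. I would write the proof in the second, adjunction-based style since it isolates the new PHL content in a single appeal to Proposition~\ref{prop:flattening-versus-interpretation} and delegates the lifting argument to the already-established RHL case.
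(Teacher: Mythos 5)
Your proposal is correct, and your first route is exactly what the paper does: its proof is literally ``analogous to the proof of Proposition~\ref{prop:relational-sequent-morphism-lifts}'', i.e.\ rerun the same diagram chase with Proposition~\ref{prop:algebraic-formula-structure-universal} supplying the bijection between interpretations and morphisms, and with the observation that $[\mathcal{S}]$ corresponds to restriction of interpretations along the variables of $\mathcal{F}$. Your preferred adjunction-based route is a genuinely different (and legitimate) decomposition: since $X$ lies in the full subcategory $\mathrm{Alg}$, lifting problems against $\mathrm{FAlg}(g)$ with target $X$ correspond bijectively to lifting problems against $g$ in $\mathrm{Rel}$, so you can delegate the chase to Proposition~\ref{prop:relational-sequent-morphism-lifts} and convert back with Proposition~\ref{prop:flattening-versus-interpretation}. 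The one point you should make explicit there is the identification $[\mathcal{S}] \cong \mathrm{FAlg}([\mathrm{Flat}(\mathcal{S})])$: the codomain of $\mathrm{FAlg}([\mathrm{Flat}(\mathcal{S})])$ is $\mathrm{FAlg}([\mathrm{Flat}(\mathcal{F}) \land \mathrm{Flat}(\mathcal{G})])$, which need not equal $[\mathrm{Flat}(\mathcal{F} \land \mathcal{G})]$ before applying $\mathrm{FAlg}$ because flattening chooses distinct fresh variables for repeated terms (Remark~\ref{rem:flattening}); the two only become canonically isomorphic after $\mathrm{FAlg}$, by comparing their universal properties on algebraic structures. With that remark added, either write-up is a complete proof; the paper's version avoids this bookkeeping by never leaving the level of interpretations, while yours isolates the PHL-specific content in a single appeal to flattening.
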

\begin{proof}
  Analogous to the proof of Proposition~\ref{prop:relational-sequent-morphism-lifts}.
\end{proof}

\begin{proposition}
  \label{prop:phl-models}
  Let $T$ be a PHL theory.
  Denote the functionality sequent \ref{eq:functionality-axiom} for function symbols $f \in F$ by $f_\mathrm{func}$.
  Then $\mathrm{Mod}(T)$ is equivalent to the following injectivity classes:
  \begin{enumerate}
    \item
      \label{itm:phl-models-in-alg}
      $(M_1)^\pitchfork \subseteq \mathrm{Alg}(\mathfrak{S})$, where $M_1 = \{ [\mathcal{S}] \mid \mathcal{S} \in T \}$.
    \item
      \label{itm:phl-models-in-rel-alg-sequents}
      $(M_2)^\pitchfork \subseteq \mathrm{Rel}(\mathfrak{S})$, where $M_2 = \{ [f_\mathrm{func}] \mid f \in F \} \cup \{ [\mathcal{S}] \mid \mathcal{S} \in T \}$.
    \item
      \label{itm:phl-models-in-rel-rel-sequents}
      $(M_3)^\pitchfork \subseteq \mathrm{Rel}(\mathfrak{S})$, where $M_3 = \{ [f_\mathrm{func}] \mid f \in F \} \cup \{ [\mathrm{Flat}(\mathcal{S})] \mid \mathcal{S} \in T \}$.
  \end{enumerate}
  Here $[\mathcal{S}]$ in the definition of $M_2$ denotes the classifying morphism of the PHL sequent $\mathcal{S}$, which is a morphism of algebraic structures, hence in particular a morphism of relational structures.
  $[\mathrm{Flat}(\mathcal{S})]$ denotes the classifying morphism of the RHL sequent $\mathrm{Flat}(\mathcal{S})$.

  In particular, $\mathrm{Mod}(T)$ is a weakly reflective subcategory of both $\mathrm{Rel}(\mathfrak{S})$ and $\mathrm{Alg}(\mathfrak{S})$.
  If any one of $M_1, M_2$ or $M_3$ are strong, then all of them are strong, and $\mathrm{Mod}(T)$ is a reflective subcategory of $\mathrm{Rel}(\mathfrak{S})$ and $\mathrm{Alg}(\mathfrak{S})$.
\end{proposition}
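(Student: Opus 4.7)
The plan is to prove the three injectivity characterizations in order, derive the weak reflectivity statements, and finally show the strongness equivalence.

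For (1), I would simply combine the preceding proposition (which states that an algebraic structure $X$ satisfies $\mathcal{S}$ iff $X$ is injective to $[\mathcal{S}]$) with the definition of $\mathrm{Mod}(T)$ as structures satisfying every $\mathcal{S} \in T$. For (2), I would invoke Proposition~\ref{prop:functionality-axioms} to identify $\mathrm{Alg}(\mathfrak{S})$ with $\{[f_\mathrm{func}] \mid f \in F\}^\pitchfork \subseteq \mathrm{Rel}(\mathfrak{S})$, and then observe that for $X \in \mathrm{Alg}$ the hom-sets $\mathrm{Hom}_\mathrm{Rel}([\mathcal{F}], X)$ and $\mathrm{Hom}_\mathrm{Alg}([\mathcal{F}], X)$ coincide (since the domain $[\mathcal{F}]$ is itself algebraic), so injectivity to $[\mathcal{S}]$ in $\mathrm{Rel}$ reduces to injectivity in $\mathrm{Alg}$ once $X$ lies in $\mathrm{Alg}$.

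For (3), the crucial observation is that by construction $[\mathcal{F}] = \mathrm{FAlg}([\mathrm{Flat}(\mathcal{F})])$, and since $\mathrm{FAlg}$ is left adjoint to the inclusion $\mathrm{Alg} \hookrightarrow \mathrm{Rel}$, the PHL classifying morphism $[\mathcal{S}] : [\mathcal{F}] \to [\mathcal{F} \land \mathcal{G}]$ is precisely the $\mathrm{Alg}$-reflection of the RHL classifying morphism $[\mathrm{Flat}(\mathcal{S})]$. The classifying morphisms $[f_\mathrm{func}]$ are epimorphisms (they identify two distinct generators into one), so $\{[f_\mathrm{func}] \mid f \in F\}$ is strong by Proposition~\ref{prop:epi-strong}. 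Applying Proposition~\ref{prop:reflection-vs-injectivity} then lets me replace each $[\mathrm{Flat}(\mathcal{S})]$ in $M_3$ by its reflection $[\mathcal{S}]$, yielding $M_2^\pitchfork = M_3^\pitchfork$ and likewise $M_2^\perp = M_3^\perp$, in $\mathrm{Rel}$.

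For the weak reflectivity claims, I would apply the small object argument (Propositions~\ref{prop:small-object-argument-prop} and \ref{prop:small-object-argument-exist}) to $M_3$ in the cocomplete category $\mathrm{Rel}(\mathfrak{S})$, whose morphisms have finite domains and codomains and are hence finitely presentable by Proposition~\ref{prop:finite-equiv-finitely-presentable}. Composing this reflection with the forgetful functor into $\mathrm{Alg}$ (equivalently, running the small object argument in $\mathrm{Alg}$ with $M_1$) yields the reflection of $\mathrm{Mod}(T)$ into $\mathrm{Alg}$. For the strongness equivalences, note that since each $[f_\mathrm{func}]$ is epic, injectivity and orthogonality to $[f_\mathrm{func}]$ coincide on all objects, so the arguments for (2) lift verbatim from injectivity to orthogonality, giving $M_1^\perp = M_2^\perp$ and hence $M_1$ strong iff $M_2$ strong; the $M_2 \leftrightarrow M_3$ case was already handled above using both halves of Proposition~\ref{prop:reflection-vs-injectivity}. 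When any $M_i$ is strong, the second halves of Propositions~\ref{prop:small-object-argument-prop} and \ref{prop:small-object-argument-exist} upgrade the weak reflection to a genuine reflection.

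The main obstacle I expect is pinning down precisely that $[\mathcal{S}] = \mathrm{FAlg}([\mathrm{Flat}(\mathcal{S})])$ as a morphism of algebraic structures (not merely that domain and codomain match), since the definition of $[\mathcal{S}]$ goes through the ``canonical interpretation of $\mathcal{F}$ in $[\mathcal{F} \land \mathcal{G}]$'' rather than directly through flattening. This is a matter of chasing universal properties: the canonical interpretation arises by restricting the generic interpretation of $\mathcal{F} \land \mathcal{G}$, and by Proposition~\ref{prop:algebraic-formula-structure-universal} this corresponds under the $\mathrm{FAlg}$ adjunction to the restriction of the generic interpretation of $\mathrm{Flat}(\mathcal{F} \land \mathcal{G})$, which (up to canonical isomorphism identifying $\mathrm{Flat}(\mathcal{F} \land \mathcal{G})$ with $\mathrm{Flat}(\mathcal{F}) \land \mathrm{Flat}(\mathcal{G})$) is exactly the RHL classifying morphism $[\mathrm{Flat}(\mathcal{S})]$.
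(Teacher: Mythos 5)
Your proposal is correct and follows essentially the same route as the paper: part (1) via the satisfaction-iff-injectivity characterization coming from Proposition~\ref{prop:algebraic-formula-structure-universal}, and parts (2) and (3) by treating the functionality classifying morphisms as a strong set of epimorphisms whose injectivity class is $\mathrm{Alg}(\mathfrak{S})$ and invoking Proposition~\ref{prop:reflection-vs-injectivity} (together with fullness of $\mathrm{Alg}$ in $\mathrm{Rel}$), with the small object argument supplying the (weak) reflectivity. Your closing remark identifying $[\mathcal{S}]$ with $\mathrm{FAlg}([\mathrm{Flat}(\mathcal{S})])$ is exactly the point the paper leaves implicit, and your universal-property chase settles it correctly.
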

\begin{proof}
  \ref{itm:phl-models-in-alg} follows from Proposition~\ref{prop:algebraic-formula-structure-universal}, and then \ref{itm:phl-models-in-rel-alg-sequents} and \ref{itm:phl-models-in-rel-rel-sequents} follow from Proposition~\ref{prop:reflection-vs-injectivity}.
\end{proof}

\begin{proposition}
  \label{prop:unflatten}
  Let $\phi$ be an RHL formula.
  Then there exists a PHL formula $\mathrm{Unflat}(\phi)$ with the following properties:
  \begin{enumerate}
    \item
      Every variable occurs in $\phi$ if and only if it occurs in $\mathrm{Unflat}(\phi)$.
    \item
      Let $I$ be an interpretation of the variables of $\phi$ in an algebraic structure $X$.
      Then $I$ is an interpretation of the PHL formula $\phi$ if and only if it is an interpretation of the RHL formula $\mathrm{Unflat}(\phi)$.
  \end{enumerate}
\end{proposition}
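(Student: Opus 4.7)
The plan is to define $\mathrm{Unflat}$ by a local atom-wise rewriting. Predicate atoms $p(v_1, \dots, v_n)$, sort quantification atoms $v \downarrow$, and equality atoms $u \equiv v$ are already valid PHL atoms and are left unchanged. The only RHL atom that is not a valid PHL atom is a relation atom $f(v_1, \dots, v_n, u)$ whose head symbol $f : s_1 \times \dots \times s_n \rightarrow s$ is a function symbol; this I rewrite to the PHL equality atom $f(v_1, \dots, v_n) \equiv u$. I then extend to conjunctions componentwise by $\mathrm{Unflat}(\phi_1 \land \dots \land \phi_n) = \mathrm{Unflat}(\phi_1) \land \dots \land \mathrm{Unflat}(\phi_n)$. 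Since RHL atoms only ever have variables as arguments, no recursive term traversal is needed.

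\textbf{Verification of the two properties.} Property (1) is immediate by inspection of the four rewriting cases above, since each preserves the variable set of the atom, and the variable set of a conjunction is just the union over its atoms. For property (2), an interpretation of a conjunction is in both logics exactly the common refinement of the atom-wise interpretations, so it suffices to verify the equivalence atom by atom. The three cases in which the atom is unchanged are tautologies. In the remaining case, $I$ interprets the RHL atom $f(v_1, \dots, v_n, u)$ in the algebraic structure $X$ iff $(I(v_1), \dots, I(v_n), I(u)) \in f_X$; by the partial-function property of $X$ (Proposition~\ref{prop:functionality-axioms}) this is equivalent to $f_X(I(v_1), \dots, I(v_n))$ being defined and equal to $I(u)$, which is precisely the PHL interpretation condition of $f(v_1, \dots, v_n) \equiv u$ in $X$.

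\textbf{Main subtlety.} The thing to keep straight is the implicit definedness requirement carried by PHL equality atoms: $t_1 \equiv t_2$ asks for both terms to be interpretable and equal, whereas an RHL relation atom merely asks for tuple membership. In the function-atom case these two conditions line up precisely because $f_X$ is the graph of a partial function. It is this algebraic-structure hypothesis on $X$, not any syntactic trick, that makes the rewriting sound. Beyond this single case analysis, the proof is routine bookkeeping.
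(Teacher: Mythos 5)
Your proposal is correct and matches the paper's proof: the paper likewise defines $\mathrm{Unflat}$ by replacing each relation atom $f(x_1, \dots, x_n, x)$ with a function-symbol head by the PHL equality atom $f(x_1, \dots, x_n) \equiv x$, leaving all other atoms untouched. Your atom-wise verification, resting on $f_X$ being the graph of a partial function, is exactly the routine check the paper leaves implicit.
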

\begin{proof}
  Replace every relation atom of the form $f(x_1, \dots, x_n, x)$ for some function symbol $f : s_1 \times \dots \times s_n \rightarrow s$ with the PHL atom $f(x_1, \dots, x_n) \equiv x$.
\end{proof}

\begin{proposition}
  \label{prop:composition-algebraic-sequent-morphisms}
  Let $\mathcal{F}, \mathcal{G}$ and $\mathcal{H}$ be PHL formulas.
  Then
  \begin{equation}
    [\mathcal{F} \land \mathcal{G} \implies \mathcal{H}] \circ [\mathcal{F} \implies \mathcal{G}] \cong [\mathcal{F} \implies \mathcal{G} \land \mathcal{H}].
  \end{equation}
\end{proposition}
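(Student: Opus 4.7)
The plan is to mirror the proof of Proposition~\ref{prop:composition-rhl-sequent-morphisms} at the algebraic level, using the universal property of classifying algebraic structures (Proposition~\ref{prop:algebraic-formula-structure-universal}) in place of its relational analogue. First I would observe that the two morphisms in the asserted isomorphism are parallel maps with common domain $[\mathcal{F}]$ and common codomain $[\mathcal{F} \land \mathcal{G} \land \mathcal{H}]$; here I tacitly identify $[(\mathcal{F} \land \mathcal{G}) \land \mathcal{H}]$ with $[\mathcal{F} \land \mathcal{G} \land \mathcal{H}]$ via the evident canonical isomorphism, which is available because the classifying structure only depends on the multiset of atoms in the formula and on the variables shared between them.

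By Proposition~\ref{prop:algebraic-formula-structure-universal}, morphisms $[\mathcal{F}] \to X$ out of the classifying algebraic structure of $\mathcal{F}$ are in bijection with interpretations of $\mathcal{F}$ in $X$. Hence it suffices to check that the two sides of the equation correspond to the \emph{same} interpretation of $\mathcal{F}$ in $[\mathcal{F} \land \mathcal{G} \land \mathcal{H}]$. By construction, $[\mathcal{F} \implies \mathcal{G} \land \mathcal{H}]$ corresponds to the restriction of the generic interpretation $I_{\mathcal{F} \land \mathcal{G} \land \mathcal{H}}$ to the variables of $\mathcal{F}$.

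For the composite, I would unfold each factor in turn. The first factor $[\mathcal{F} \implies \mathcal{G}]$ corresponds to the restriction of $I_{\mathcal{F} \land \mathcal{G}}$ to the variables of $\mathcal{F}$, while the second factor $[\mathcal{F} \land \mathcal{G} \implies \mathcal{H}]$, when applied to the generic interpretation of $\mathcal{F} \land \mathcal{G}$ in its domain, produces the restriction of $I_{\mathcal{F} \land \mathcal{G} \land \mathcal{H}}$ to $\mathcal{F} \land \mathcal{G}$. Composing these, and then restricting to $\mathcal{F}$, gives exactly the restriction of $I_{\mathcal{F} \land \mathcal{G} \land \mathcal{H}}$ to $\mathcal{F}$, matching the other side.

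The only mildly delicate point I anticipate is bookkeeping with the extra layer of the free-algebra reflection in the definition $[\mathcal{F}] = \mathrm{FAlg}([\mathrm{Flat}(\mathcal{F})])$: one needs to verify that restriction of interpretations is compatible with the reflection units and with the flattening operation. This follows from naturality of the unit, functoriality of $\mathrm{FAlg}$, and the already-established relational counterpart Proposition~\ref{prop:composition-rhl-sequent-morphisms} applied to the flattened formulas, so no substantive new work is required.
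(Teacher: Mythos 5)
Your proposal is correct and follows essentially the same route as the paper, whose proof is simply an appeal to the universal property of classifying morphisms (via Proposition~\ref{prop:algebraic-formula-structure-universal}); you merely unfold that appeal by checking that both composites correspond to the restriction of the generic interpretation of $\mathcal{F} \land \mathcal{G} \land \mathcal{H}$ to the variables of $\mathcal{F}$, with the expected bookkeeping for flattening and the $\mathrm{FAlg}$ reflection.
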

\begin{proof}
  This follows from the universal property of classifying morphisms.
\end{proof}

\subsection{Completeness Results}
\label{subsec:phl-completeness}

In this section, we study the descriptive strength of PHL.
As with RHL and relational structures, every map of finite algebraic structures can be described as classifying morphism of a PHL sequent (Proposition~\ref{prop:image-of-classifying-phl-sequents}).
We classify epimorphisms of algebraic structures and show that epimorphisms of finite algebraic structures correspond to \emph{epic} PHL sequents, where variables cannot be introduced in the conclusion (Proposition~\ref{prop:classifying-alg-epic}).
Finally, we show that every strong RHL theory is semantically equivalent to an epic PHL theory (Proposition~\ref{prop:strong-rhl-to-epic-phl}).

\begin{proposition}
  \label{prop:image-of-classifying-phl-sequents}
  Let $\mathcal{S}$ be a PHL sequent.
  Then the classifying morphism $[\mathcal{S}]$ is a morphism of finite algebraic structures.
  Conversely, for every morphism $f : X \rightarrow Y$ of finite algebraic structures, there exists a PHL sequent $\mathcal{S} = \mathcal{F} \implies \mathcal{G}$ such that $f$ and $[\mathcal{S}]$ are isomorphic.
\end{proposition}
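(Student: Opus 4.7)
The plan is to mimic the proof of Proposition~\ref{prop:image-of-classifying-rhl-sequents}, now exploiting that PHL permits composite terms in atoms and that we work in $\mathrm{Alg}(\mathfrak{S})$ rather than $\mathrm{Rel}(\mathfrak{S})$.

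For the forward direction I would proceed in two steps. First, the flattening $\mathrm{Flat}(\mathcal{S})$ of any PHL sequent is a finite RHL sequent, because each term and each atom contributes only finitely many atoms to its flattening; hence $[\mathrm{Flat}(\mathcal{S})]$ is a morphism of finite relational structures by Proposition~\ref{prop:image-of-classifying-rhl-sequents}. Second, the PHL classifying structures are given by $[\mathcal{F}] = \mathrm{FAlg}([\mathrm{Flat}(\mathcal{F})])$, and by the corollary following Proposition~\ref{prop:functionality-axioms} the reflection into $\mathrm{Alg}(\mathfrak{S})$ is surjective on carriers, so $\mathrm{FAlg}$ cannot enlarge a finite structure. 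Thus $[\mathcal{S}]$ is a morphism of finite algebraic structures.

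For the converse I would construct $\mathcal{F}$ and $\mathcal{G}$ directly, in analogy with the RHL proof. Pick distinct variables $v_x$ for each $x \in X$, and take $\mathcal{F}$ to be the conjunction of the atoms $v_x \downarrow$ for $x \in X$, of $p(v_{x_1}, \dots, v_{x_n})$ for each $(x_1, \dots, x_n) \in p_X$ with $p$ a predicate symbol, and of $f(v_{x_1}, \dots, v_{x_n}) \equiv v_y$ for each $(x_1, \dots, x_n, y) \in f_X$ with $f$ a function symbol. For the conclusion, introduce fresh variables $u_y$ for $y \in Y \setminus \operatorname{Im} f$, and let $w_y = v_x$ for some chosen $x$ with $f(x) = y$ if such $x$ exists, and $w_y = u_y$ otherwise. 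Then let $\mathcal{G}$ be the conjunction of the atoms $u_y \downarrow$, of $p(w_{y_1}, \dots, w_{y_n})$ for each tuple in $p_Y$, of $f(w_{y_1}, \dots, w_{y_n}) \equiv w_y$ for each tuple in $f_Y$, and of $v_x \equiv v_{x'}$ whenever $f(x) = f(x')$. Verification then goes through Proposition~\ref{prop:algebraic-formula-structure-universal}: interpretations of $\mathcal{F}$ in an algebraic $Z$ are exactly assignments $v_x \mapsto z_x \in Z$ that preserve the predicate and function graphs of $X$, i.e.\ morphisms $X \to Z$, so $[\mathcal{F}] \cong X$; the same analysis applied to $\mathcal{F} \land \mathcal{G}$ gives $[\mathcal{F} \land \mathcal{G}] \cong Y$, and the classifying morphism $[\mathcal{S}]$ is then identified with $f$ under these isomorphisms.

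The main obstacle is the coherent handling of the function equations $f(v_{x_1}, \dots, v_{x_n}) \equiv v_y$: one must check that applying $\mathrm{FAlg}$ to the presentation produced by these atoms does not collapse more than intended. The point is that $X$ is already algebraic, so the functionality axioms impose no further identifications on the presentation and the free-algebra quotient is trivial; this is what makes $[\mathcal{F}] \cong X$ rather than a proper quotient. A secondary subtlety is the coherent choice of representatives $w_y$ in $\mathcal{G}$: the equality atoms $v_x \equiv v_{x'}$ for $f(x) = f(x')$ ensure that any alternative choice agrees with the given one in $[\mathcal{F} \land \mathcal{G}]$, so the construction is independent of the choices made.
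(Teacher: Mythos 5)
Your proposal is correct and follows essentially the same route as the paper, which simply combines Proposition~\ref{prop:image-of-classifying-rhl-sequents} with Proposition~\ref{prop:unflatten}: your direct construction of $\mathcal{F}$ and $\mathcal{G}$ with atoms $f(v_{x_1}, \dots, v_{x_n}) \equiv v_y$ is exactly the unflattening of the RHL formulas from the relational case, and your finiteness argument for the forward direction (flattening is finite, $\mathrm{FAlg}$ is surjective so preserves finiteness) matches the paper's reasoning.
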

\begin{proof}
  By Propositions \ref{prop:image-of-classifying-rhl-sequents} and \ref{prop:unflatten}.
\end{proof}

\begin{definition}
  Let $f$ be a function symbol.
  The \emph{totality sequent} $f \downarrow$ is given by
  \begin{equation}
    v_1 \downarrow \land \dots \land v_n \downarrow \implies f(v_1, \dots, v_n)\downarrow.
  \end{equation}
  We denote by
  \begin{equation}
    \mathrm{Tot} = \{ [f \downarrow] \mid f \in F \}
  \end{equation}
  the set of classifying morphisms of totality sequents.
\end{definition}

\begin{proposition}
  Let $f$ be a function symbol. 
  \begin{enumerate}
    \item
      An algebraic structure $X$ satisfies $f \downarrow$ if and only if $f_X$ is a total function.
    \item
      $[f \downarrow]$ is an epimorphism of algebraic structures.
      \qed
  \end{enumerate}
\end{proposition}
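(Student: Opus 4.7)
The plan is to unpack the classifying structures of $f \downarrow$ and use the defining property of algebraic structure morphisms, namely that the graph of $f$ must be preserved and $f$ is a partial function.

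For part (1), the premise $v_1 \downarrow \land \dots \land v_n \downarrow$ of $f \downarrow$ imposes no condition on an interpretation beyond assigning elements $x_i \in X_{s_i}$ to the variables $v_i$; interpretations of the premise are thus in bijection with tuples $(x_1, \dots, x_n) \in X_{s_1} \times \dots \times X_{s_n}$. Satisfying $f \downarrow$ means each such interpretation extends to one of the conclusion, which by definition requires that $f_X(x_1, \dots, x_n)$ is defined. Thus $X \models f\downarrow$ precisely when $f_X$ is defined on all of $X_{s_1} \times \dots \times X_{s_n}$, i.e.\ when $f_X$ is total.

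For part (2), I first describe the two classifying algebraic structures. The premise $\mathcal{F} = v_1 \downarrow \land \dots \land v_n \downarrow$ has $[\mathcal{F}]$ consisting of one element $I_\mathcal{F}(v_i)$ in each sort $s_i$ (with sorts coalesced if some $s_i$ agree) and no nontrivial function or predicate data. The conclusion-augmented formula $\mathcal{F} \land f(v_1, \dots, v_n) \downarrow$ flattens to $\mathcal{F} \land f(v_1, \dots, v_n, u)$ for a fresh $u$, and after applying $\mathrm{FAlg}$ we obtain $[\mathcal{F} \land f(v_1, \dots, v_n) \downarrow]$ whose carriers contain the generic elements $I(v_i)$ together with one additional element $I(u) = f_{[\cdot]}(I(v_1), \dots, I(v_n))$ in sort $s$, with $f$ defined on that single tuple. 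The classifying morphism $[f\downarrow]$ is the obvious inclusion.

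To show $[f\downarrow]$ is epi in $\mathrm{Alg}$, suppose $g, h : [\mathcal{F} \land f(v_1, \dots, v_n) \downarrow] \rightarrow Y$ agree after precomposition with $[f\downarrow]$; then $g(I(v_i)) = h(I(v_i))$ for all $i$. Since $g$ and $h$ are algebraic structure morphisms and $f$ is (the graph of) a partial function on $Y$, we have
\begin{equation}
  g(I(u)) = f_Y(g(I(v_1)), \dots, g(I(v_n))) = f_Y(h(I(v_1)), \dots, h(I(v_n))) = h(I(u)),
\end{equation}
where both sides are defined because $g$ and $h$ preserve the unique $f$-tuple present in the domain. As the elements $I(v_i)$ and $I(u)$ exhaust the carriers, $g = h$, which is the statement that $[f\downarrow]$ is an epimorphism. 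The only subtle point is keeping track of what the carriers of the two classifying structures actually are after flattening and reflecting into $\mathrm{Alg}$; once this bookkeeping is done the argument is immediate from the functional nature of $f$ in algebraic structures.
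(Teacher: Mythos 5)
Your proof is correct; the paper states this proposition without proof (the \qed marks it as immediate), and your argument is exactly the routine verification intended: unpack the classifying structures of the premise and conclusion and use single-valuedness of $f_Y$ together with the fact that the codomain's carriers are exhausted by $I(v_1), \dots, I(v_n), I(u)$. One small caution on wording: distinct variables of the same sort still yield distinct elements of $[\mathcal{F}]$, so ``coalesced'' should refer only to their sharing a carrier set, not to any identification of elements --- though nothing in your argument depends on this.
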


\begin{definition}
  Let $g : X \rightarrow Y$ be  a map of algebraic structures.
  We say that $X$ is \emph{total over Y} (with respect to $g$) if and only if for all function symbols $f$ and elements $x_1, \dots, x_n \in X$, if $f_Y(g(x_1), \dots, g(x_n))$ is defined, then $f_X(x_1, \dots, x_n)$ is defined.
\end{definition}

\begin{proposition}
  \label{prop:saturation}
  Let $g : X \rightarrow Y$ be a map of algebraic structures.
  Then there exists a factorization
  \begin{equation}
    \begin{tikzcd}
      X \arrow[dr, "g"'] \arrow[rr, "h"] & & X' \arrow[dl, "g'"] \\
      & Y
    \end{tikzcd}
  \end{equation}
  such that $h$ is a relative $\mathrm{Tot}$-cell complex and $X'$ is total over $Y$.
  Moreover, the triple $(X', h, g')$ is uniquely determined by $g$ up to unique isomorphism.
\end{proposition}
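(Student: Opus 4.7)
The plan is to adapt the small object argument (Proposition~\ref{prop:small-object-argument-exist}) to the slice category $\mathrm{Alg}/Y$: instead of freely adjoining a value of $f$ at every tuple in the base, I only adjoin such a value at tuples $(x_1,\dots,x_k)$ for which $f_Y(g(x_1),\dots,g(x_k))$ is already defined, and I use that pre-existing value in $Y$ to extend the structure map to $Y$ along the pushout.

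Concretely, I would inductively build a chain $X = X_0 \to X_1 \to \dots$ of algebraic structures equipped with compatible maps $g_n : X_n \to Y$, starting with $g_0 = g$. At stage $n$, let $K_n$ be the set of pairs $(f, (x_1,\dots,x_k))$ with $f : s_1 \times \dots \times s_k \to s$, $x_i \in X_n$, such that $f_Y(g_n(x_1),\dots,g_n(x_k))$ is defined in $Y$ but $f_{X_n}(x_1,\dots,x_k)$ is not. I form $X_{n+1}$ as the pushout of $\coprod_{K_n} [f \downarrow]$ along the evident map on domains that sends the $i$-th generic variable to $x_i$. The universal property of the pushout, applied to $g_n$ together with the map on codomains sending each new element to its predetermined image $f_Y(g_n(x_1),\dots,g_n(x_k))$, produces a unique $g_{n+1} : X_{n+1} \to Y$ extending $g_n$. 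Taking $X' = \operatorname*{colim}_n X_n$ with the induced $h : X \to X'$ and $g' : X' \to Y$, the closure properties of Definition~\ref{def:relative-cell-complex} guarantee that $h$ is a relative $\mathrm{Tot}$-cell complex. Totality of $X'$ over $Y$ then follows from finite presentability (Proposition~\ref{prop:finite-equiv-finitely-presentable}): if $f_Y(g'(x_1),\dots,g'(x_k))$ is defined, all $x_i$ already lie in some $X_n$ and are handled at stage $n+1$ if not earlier.

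For uniqueness, the key point is that every $[f\downarrow]$ is an epimorphism in $\mathrm{Alg}$, because the adjoined element $u$ is pinned down by functionality of $f$ in the target. Hence $\mathrm{Tot}$ is strong by Proposition~\ref{prop:epi-strong}. Given two factorizations $(X'_1, h_1, g'_1)$ and $(X'_2, h_2, g'_2)$, totality of $X'_2$ over $Y$ is precisely the statement that $(X'_2, g'_2)$ is injective to every lifting problem against $[f\downarrow]$ in $\mathrm{Alg}/Y$; strongness then upgrades this to orthogonality. Applying Proposition~\ref{prop:factoring-via-cell-complexes} to the relative $\mathrm{Tot}$-cell complex $h_1$ inside $\mathrm{Alg}/Y$ produces a unique morphism $X'_1 \to X'_2$ commuting simultaneously with the $h_i$ and the $g'_i$. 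Symmetry gives a unique morphism in the reverse direction, and uniqueness forces the two composites to be identities, yielding the required unique isomorphism.

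The main obstacle is the uniqueness half: one must verify carefully that Proposition~\ref{prop:factoring-via-cell-complexes} and the strongness of $\mathrm{Tot}$ transport correctly from $\mathrm{Alg}$ to the slice $\mathrm{Alg}/Y$, and that totality over $Y$ really coincides with slice-injectivity to $\mathrm{Tot}$. Once this is checked — the essential input being that the new element introduced by a pushout of $[f\downarrow]$ has a uniquely determined image in $Y$ — the remainder is routine unfolding of the small object construction.
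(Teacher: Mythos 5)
Your proposal is correct and follows essentially the same route as the paper: the paper also works in the slice category $\mathrm{Alg}_{/Y}$ with the set $\mathrm{Tot}_Y$ of totality cells equipped with maps to $Y$, observes that these are epimorphisms (hence strong) so that totality over $Y$, being injectivity to $\mathrm{Tot}_Y$, upgrades to orthogonality and yields existence and uniqueness via the reflection into $(\mathrm{Tot}_Y)^\pitchfork$, and then transports the cell-complex property along the forgetful functor to $\mathrm{Alg}$. Your explicit stage-by-stage construction is just the concrete unwinding of that slice small object argument, and the transport issue you flag is exactly the point the paper handles with the functor $\mathrm{Alg}_{/Y} \rightarrow \mathrm{Alg}$.
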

\begin{proof}
  Consider the set $\mathrm{Tot}_Y$ of all triples $(f, a, b)$ of function symbols $f$ and pairs of morphisms $a$ and $b$ making up commuting triangles
  \begin{equation}
    \begin{tikzcd}
      \cdot \arrow[dr, "a"'] \arrow[rr, "{[f \downarrow]}"] & & \cdot \arrow[dl, "b"] \\
      & Y
    \end{tikzcd}
  \end{equation}
  $\mathrm{Tot}_Y$ is a set of epimorphisms in the slice category $\mathrm{Alg}_{/ Y}$.
  It follows that $\mathrm{Tot}_Y$ is strong, so $(\mathrm{Tot}_Y)^\pitchfork$ is a reflective subcategory of $\mathrm{Alg}_{/ Y}$.
  Partial algebras over $Y$ are total over $Y$ if and only if they are injective to $\mathrm{Tot}_Y$.
  It follows that a triple $(X', h, g')$ with $h$ a relative $\mathrm{Tot}_Y$-complex and $g$ relatively total exists and is unique up to unique isomorphism.
  
  It remains to show that $h$ is a relative $\mathrm{Tot}$-cell complex.
  By definition, the class of relative $M$-cell complexes is obtained from $M$ by closure under certain classes of colimits.
  The forgetful functor $\mathrm{Alg}_{/ Y} \rightarrow \mathrm{Alg}$ preserves colimits and maps $\mathrm{Tot}_Y$ into $\mathrm{Tot}$.
  From this it follows that the image of a relative $\mathrm{Tot}_Y$-cell complex in $\mathrm{Alg}$ is a relative $\mathrm{Tot}$-complex.
  In particular, $h$ is a relative $\mathrm{Tot}$-cell complex.
\end{proof}

\begin{definition}
  Let $g : X \rightarrow Y$ be a map of algebraic structures.
  We call the unique factorization $g = g' h$ as in Proposition~\ref{prop:saturation} the \emph{relative totalization} of $X$.
\end{definition}

\begin{proposition}
  \label{prop:relatively-total-epis}
  Let $g : X \rightarrow Y$ be a map of algebraic structures such that $X$ is total over $Y$.
  Then $g$ is an epimorphism in $\mathrm{Alg}$ if and only if it is surjective.
\end{proposition}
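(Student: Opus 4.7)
The plan is to prove the two implications separately. The forward direction (surjective implies epi) follows immediately from results already established, whereas the converse requires an explicit construction that exploits the totality hypothesis.

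For the forward direction, I invoke Proposition~\ref{prop:monic-epic-rel-morphisms} to see that a surjective $g$ is an epimorphism in $\mathrm{Rel}(\mathfrak{S})$. Since $\mathrm{Alg}(\mathfrak{S})$ is a \emph{full} subcategory of $\mathrm{Rel}(\mathfrak{S})$, any parallel pair $f_1, f_2 : Y \to Z$ in $\mathrm{Alg}$ coequalizing $g$ is also a parallel pair in $\mathrm{Rel}$ coequalizing $g$, and therefore the pair coincides.

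For the converse, I argue the contrapositive: assuming there exists some $y_0 \in Y_s \setminus g(X_s)$, I construct an algebraic structure $W$ together with two distinct morphisms $i_1, i_2 : Y \to W$ satisfying $i_1 g = i_2 g$. The structure $W$ is the \emph{double} of $Y$ along the image of $g$: the carriers are $W_s = (Y_s \sqcup Y_s)/{\sim_s}$, where $\sim_s$ identifies the two copies $y^1$ and $y^2$ precisely when $y \in g(X_s)$; each predicate $p_W$ is the union of the images of $p_Y$ under the two inclusions; and each function symbol is defined by $f_W(y_1^k, \dots, y_n^k) = f_Y(y_1, \dots, y_n)^k$ for a uniform choice of $k \in \{1, 2\}$, whenever the right-hand side is defined in $Y$. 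The injections $i_k : Y \to W$, $y \mapsto y^k$, then agree on the image of $g$ by construction while satisfying $i_1(y_0) \neq i_2(y_0)$, witnessing that $g$ fails to be epic.

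The main obstacle, and the step where totality is indispensable, is verifying that $f_W$ is a well-defined partial function. The only potential ambiguity arises when the tuple $(w_1, \dots, w_n)$ admits representations mixing both copies, which forces each $w_j$ to be the common image of some $g(x_j) \in g(X_{s_j})$. If $f_Y(g(x_1), \dots, g(x_n)) = y$ is defined, totality of $X$ over $Y$ ensures that $f_X(x_1, \dots, x_n)$ is defined as well, and since $g$ is a morphism $y = g(f_X(x_1, \dots, x_n)) \in g(X_s)$. Consequently $y^1 = y^2$ in $W$, so the value of $f_W$ is independent of the choice of $k$. With $W$ thereby established as an algebraic structure, the checks that $i_1$ and $i_2$ are morphisms and satisfy $i_1 g = i_2 g$ are routine, closing the contrapositive.
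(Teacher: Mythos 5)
Your proof is correct and takes essentially the same route as the paper: the paper likewise forms the double of $Y$ along (the image of) $g$ --- phrased there as the pushout $Y \amalg^{\mathrm{Rel}}_X Y$ in $\mathrm{Rel}$ after reducing to injective $g$ via image factorization --- and uses totality in exactly the way you do to show the doubled structure is algebraic, concluding from the two canonical maps $Y \rightarrow Y \amalg^{\mathrm{Rel}}_X Y$. Your contrapositive formulation with the explicit quotient $W$ and the inclusions $i_1, i_2$ is just a hands-on presentation of that same argument.
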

\begin{proof}
  Since $\mathrm{Alg}$ is a subcategory of $\mathrm{Rel}$, every morphism in $\mathrm{Alg}$ which is an epimorphism in $\mathrm{Rel}$ is also an epimorphism in $\mathrm{Alg}$.
  Thus, every surjective morphism of algebraic structures is an epimorphism.

  Conversely, suppose that $g : X \rightarrow Y$ is an epimorphism of algebraic structures such that $X$ is total over $Y$.
  Let $X \twoheadrightarrow \operatorname{Im}^\mathrm{Rel} g \hookrightarrow Y$ be the image factorization of $g$ in $\mathrm{Rel}$.
  Thus, $(\operatorname{Im}^\mathrm{Rel} g)_s = \{ g_s(x) \mid x \in X_s \}$ for all sorts $s$, and $r_{(\operatorname{Im}^\mathrm{Rel} g)} = \{ (g(x_1), \dots, g(x_n)) \mid (x_1, \dots, x_n) \in r_X \}$ for all relations $r$.
  Since $Y$ is an algebraic structure, the relational substructure $\operatorname{Im}^\mathrm{Rel} g$ is an algebraic structure.
  Because $X$ is relatively total over $Y$, also $\operatorname{Im}^\mathrm{Rel}$ is relatively total over $Y$.
  Since $g$ is surjective if and only if $\operatorname{Im}^\mathrm{Rel} \hookrightarrow Y$ is surjective (that is, a bijection on carriers), we may henceforth assume that $g : X = \operatorname{Im}^\mathrm{Rel} g \hookrightarrow Y$ is injective.

  Now consider the pushout $Z = Y \amalg^\mathrm{Rel}_X Y$ in $\mathrm{Rel}$.
  There are inclusions $Y \cong Y_0 \subseteq Z$ and $Y \cong Y_1 \subseteq Z$ corresponding to the two components of $Z$ such that $Y_0 \cup Y_1 = Z$, and inclusions $X \subseteq Y_0, X \subseteq Y_1$.
  We have $(Y_0)_s \cap (Y_1)_s = X_s$ for all sorts $s$, but note that the analogous equation does not necessarily hold for relations.

  We claim that $Z$ is an algebraic structure.
  Thus let $f$ be a function symbol, and let $z_1, \dots, z_n, z, z' \in Z$ such that $\bar z = (z_1, \dots, z_n, z) \in f_Z$ and $\bar z' = (z_1, \dots, z_n, z') \in f_Z$.
  We need to show that $z = z'$.

  If $\bar z, \bar z' \in Y_0$ or $\bar z, \bar z' \in Y_1$ this follows from the fact that $Y_0 \cong Y \cong Y_1$ is an algebraic structure.
  We may thus (by symmetry) assume that $\bar z \in Y_0$ and $\bar z' \in Y_1$.
  Because the first $n$ projections of $\bar z$ and $\bar z'$ agree, we have $z_i \in Y_0 \cap Y_1$, hence $z_i \in X$ for $i \in \{ 1, \dots, n\}$.
  Because $X$ is total over the $Y_i$ with respect to the inclusions $X \subseteq Y_i$, we have $f_X(z_1, \dots, z_n) = z''$ for some $z'' \in X$.
  It follows that $z = z''$ and $z' = z''$, hence $z = z'$.

  As in every cocomplete category, $g$ is an epimorphism of algebraic structures if and only if the two maps $Y \rightarrow Y \amalg^\mathrm{Alg}_X Y \eqqcolon Z'$ to the pushout of algebraic structures agree.
  But we have just shown that $Z' = Z$, so also the two maps $Y \rightarrow Y \amalg^\mathrm{Rel}_X Y$ agree.
  Thus $g$ is an epimorphism in $\mathrm{Rel}$, hence surjective.
\end{proof}

\begin{proposition}
  \label{prop:monic-epic-alg-morphisms}
  Let $g : X \rightarrow Y$ be a map of algebraic structures.
  \begin{enumerate}
    \item
      \label{itm:algebraic-monomorphisms}
      $g$ is a monomorphism in $\mathrm{Alg}$ if and only if it is injective.
    \item
      \label{itm:algebraic-epimorphisms}
      Let $g' h = g$ be the relative totalization of $X$.
      Then $g$ is an epimorphism in $\mathrm{Alg}$ if and only if $g'$ is a surjection.
  \end{enumerate}
\end{proposition}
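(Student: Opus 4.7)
The plan is to prove the two parts separately, each time reducing to the analogous characterization for relational structures (Proposition~\ref{prop:monic-epic-rel-morphisms}) by exploiting the fact that $\mathrm{Alg}$ is a reflective subcategory of $\mathrm{Rel}$.

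For part~\ref{itm:algebraic-monomorphisms}, I would argue both directions directly. Since $\mathrm{Alg}$ is reflective in $\mathrm{Rel}$, the inclusion is a right adjoint and hence preserves limits; in particular, a monomorphism in $\mathrm{Alg}$ remains a monomorphism in $\mathrm{Rel}$ and is therefore injective by Proposition~\ref{prop:monic-epic-rel-morphisms}. Conversely, if $g$ is injective and $a, b : Z \rightarrow X$ are morphisms in $\mathrm{Alg}$ with $ga = gb$, then on each carrier $g_s \circ a_s = g_s \circ b_s$ forces $a_s = b_s$, so $a = b$.

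For part~\ref{itm:algebraic-epimorphisms}, the key observation is that $h : X \rightarrow X'$ is itself an epimorphism in $\mathrm{Alg}$. Indeed, $h$ is a relative $\mathrm{Tot}$-cell complex, every morphism in $\mathrm{Tot}$ is an epimorphism (by the preceding proposition), and the class of epimorphisms is closed under coproducts, pushouts, and sequential compositions — the closure under sequential composition holds because the colimit cocone maps $\iota_n : X_n \rightarrow X_\infty$ satisfy $\iota_{n+1} f_n = \iota_n$, so right-cancellation through each $f_n$ propagates agreement from $\iota_0$ to every $\iota_n$ and hence to $X_\infty$. Given that $h$ is epic, right-cancellation yields that $g = g' h$ is an epimorphism if and only if $g'$ is an epimorphism. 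Since $X'$ is total over $Y$, Proposition~\ref{prop:relatively-total-epis} identifies the epimorphisms out of $X'$ with the surjections, completing the equivalence.

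The main obstacle is the verification that the class of epimorphisms in $\mathrm{Alg}$ is closed under the cell-complex operations, especially infinite composition; the argument above via compatibility of the colimit cocone with right-cancellation is the crucial point, and once it is in hand the rest of part~\ref{itm:algebraic-epimorphisms} is a short chain of applications of the previous propositions.
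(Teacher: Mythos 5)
Your proposal is correct and follows essentially the same route as the paper: part~\ref{itm:algebraic-monomorphisms} by reducing to monomorphisms in $\mathrm{Rel}$ via the reflective inclusion, and part~\ref{itm:algebraic-epimorphisms} by showing $h$ is an epimorphism (as a relative $\mathrm{Tot}$-cell complex built from epimorphisms, closed under the cell operations) and then invoking Proposition~\ref{prop:relatively-total-epis}. Your explicit verification that epimorphisms are closed under sequential composition via the jointly epimorphic colimit cocone is a detail the paper leaves implicit, but it is the same argument.
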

\begin{proof}
  \ref{itm:algebraic-monomorphisms}.
  In general, morphisms in reflective subcategories are monic if and only if they are monic as morphisms in the ambient category.

  \ref{itm:algebraic-epimorphisms}.
  Every morphism in $\mathrm{Tot}$ is an epimorphism.
  Since epimorphisms are stable under pushouts and infinite compositions, every relative $\mathrm{Tot}$-cell complex and in particular $h$ is an epimorphism.
  Thus $g$ is an epimorphism in $\mathrm{Alg}$ if and only if $g'$ is an epimorphism in $\mathrm{Alg}$.
  We conclude with Proposition~\ref{prop:relatively-total-epis}.
\end{proof}

\begin{definition}
  \label{def:epic-phl-sequents}
  A PHL sequent $\mathcal{S} = \mathcal{F} \implies \mathcal{G}$ is \emph{epic} if every variable in the conclusion $\mathcal{G}$ also occurs in the premise $\mathcal{F}$.
\end{definition}

\begin{proposition}
  \label{prop:flattening-atoms}
  Let $\mathcal{F}$ be a PHL formula.
  Let $\mathrm{Flat}(\mathcal{F}) = \phi_1 \land \dots \land \phi_n$ for RHL atoms $\phi_i$.
  Let $\phi = \phi_i$ for some $i \in \{1, \dots, n\}$ and let
  \begin{equation}
    V = \{ v \mid v \text{ occurs in } \mathcal{F} \text{ or in } \phi_j \text{ for some } j < i \}.
  \end{equation}
  Then $\phi$ has one of the following forms:
  \begin{enumerate}
    \item
      \label{itm:flattened-predicate-case}
      $\phi = p(v_1, \dots, v_m)$, where $p$ is a predicate symbol and $v_1, \dots, v_m \in V$.
    \item
      \label{itm:flattened-function-case}
      $\phi = f(v_1, \dots, v_m, v)$, where $f : s_1 \times \dots \times s_m \rightarrow s$ is a function symbol, $v_1, \dots, v_m \in V$ and $v$ is a fresh variable.
    \item
      \label{itm:flattened-defined-case}
      $\phi = v \downarrow$, where $v \in V$.
    \item
      \label{itm:flattened-equals-case}
      $\phi = v_1 \equiv v_2$, where $v_1, v_2 \in V$.
  \end{enumerate}
\end{proposition}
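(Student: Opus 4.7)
The plan is to proceed by structural induction, splitting the argument into three stages that mirror the three recursive layers of Definition~\ref{def:flattening}: a claim about the flattening of a single term, then about the flattening of a single PHL atom, and finally about the flattening of a whole formula.

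For the term stage, I would prove that for every term $t$, each atom of $\mathrm{Flat}(t)$ has the shape of item~\ref{itm:flattened-function-case}, and moreover that its non-result arguments either appear in $t$ or are result variables of strictly earlier atoms of $\mathrm{Flat}(t)$, while the final result variable of each atom is fresh relative to everything preceding it. This is a direct structural induction on $t$: the base case $t = v$ yields $\mathrm{Flat}(t) = \top$, so the claim is vacuous; the composite case $t = f(t_1, \dots, t_m)$ concatenates $\mathrm{Flat}(t_1), \dots, \mathrm{Flat}(t_m)$ (to which the induction hypothesis applies) and then appends the single atom $f(v_\mathrm{Flat}(t_1), \dots, v_\mathrm{Flat}(t_m), u)$, whose arguments are exactly the result variables of the sub-flattenings (or original variables of the $t_j$) and whose last slot is the fresh variable $u$.

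For the atom stage, I would perform case analysis on the three kinds of PHL atom. In each case $\mathrm{Flat}(\phi)$ is a concatenation of term flattenings — handled by the first stage and producing only atoms of form~\ref{itm:flattened-function-case} — followed by exactly one final atom, which by Definition~\ref{def:flattening} has form~\ref{itm:flattened-predicate-case}, \ref{itm:flattened-defined-case}, or \ref{itm:flattened-equals-case} respectively. The variables appearing in this final atom are either original variables of $\phi$ or the result variables of the preceding term flattenings, so they all lie in the set $V$ as defined in the statement when the final atom is reached.

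The formula stage is mere concatenation of atom flattenings in the order that the atoms occur in $\mathcal{F}$. The only delicate point — and the closest thing to an obstacle — is the freshness convention formalized in Remark~\ref{rem:flattening}: the threaded sequence $(u_n)_{n \in \mathbb{N}}$ of reserve variables ensures that every variable marked ``fresh'' by a sub-flattening is disjoint both from the variables of $\mathcal{F}$ and from all variables introduced in any earlier atom of $\mathrm{Flat}(\mathcal{F})$. Once this convention is unpacked, the set $V$ in the statement is exactly the set of variables available at position $i$, and the required membership assertions in each of the four cases follow directly from the structural invariant established in the term stage.
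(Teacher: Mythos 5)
Your proposal is correct and is essentially the argument the paper has in mind: the paper's proof is the one-line remark that the statement ``follows inductively from the definition of flattening; see also Remark~\ref{rem:flattening}'', and your three-stage structural induction (terms, then atoms, then formulas, with the threaded freshness convention of Remark~\ref{rem:flattening} handling the disjointness of the introduced result variables) is precisely that induction spelled out in detail.
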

\begin{proof}
  Follows inductively from the definition of flattening; see also Remark~\ref{rem:flattening}.
\end{proof}

\begin{proposition}
  \label{prop:epic-stable-under-composition}
  The classifying morphisms of epic PHL sequents are, up to isomorphism, closed under composition.
\end{proposition}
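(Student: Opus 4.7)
The plan is to produce, from two composable epic PHL sequents, an explicit epic PHL sequent whose classifying morphism realizes the composition up to isomorphism. So fix two epic sequents $\mathcal{S}_i = \mathcal{F}_i \implies \mathcal{G}_i$ for $i \in \{1,2\}$ and assume that $[\mathcal{S}_2] \circ [\mathcal{S}_1]$ makes sense up to isomorphism, witnessed by some isomorphism $\alpha : [\mathcal{F}_1 \wedge \mathcal{G}_1] \xrightarrow{\cong} [\mathcal{F}_2]$ in $\mathrm{Alg}(\mathfrak{S})$. The strategy is (i) use $\alpha$ to translate $\mathcal{G}_2$ into a PHL formula $\sigma(\mathcal{G}_2)$ over the variables of $\mathcal{F}_1 \wedge \mathcal{G}_1$, (ii) conjoin it with $\mathcal{G}_1$ to form a sequent $\mathcal{S}$ with premise $\mathcal{F}_1$ that is manifestly epic, and (iii) identify $[\mathcal{S}]$ with the composite via Proposition~\ref{prop:composition-algebraic-sequent-morphisms}.

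For (i), I build a substitution $\sigma$ that sends each variable $v$ of $\mathcal{F}_2$ to a PHL term over the variables of $\mathcal{F}_1 \wedge \mathcal{G}_1$. The element $\alpha^{-1}(I_{\mathcal{F}_2}(v)) \in [\mathcal{F}_1 \wedge \mathcal{G}_1]$ is, by construction of the classifying algebraic structure as $\mathrm{FAlg}([\mathrm{Flat}(\mathcal{F}_1 \wedge \mathcal{G}_1)])$, the interpretation $I_{\mathcal{F}_1 \wedge \mathcal{G}_1}(t)$ of some PHL term $t$ in the variables of $\mathcal{F}_1 \wedge \mathcal{G}_1$, since each element of the underlying relational classifying structure corresponds either to such a variable or to a fresh name for a composite term (Definition~\ref{def:flattening} and Remark~\ref{rem:flattening}). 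Choosing such a $t$ and setting $\sigma(v) = t$, I extend $\sigma$ to PHL formulas and define
\[
  \mathcal{S} \;:=\; \mathcal{F}_1 \;\implies\; \mathcal{G}_1 \wedge \sigma(\mathcal{G}_2).
\]
Epicness of $\mathcal{S}$ is then immediate: the variables of $\mathcal{G}_1$ lie in $\mathcal{F}_1$ by epicness of $\mathcal{S}_1$; the variables of $\mathcal{G}_2$ lie in $\mathcal{F}_2$ by epicness of $\mathcal{S}_2$, so after applying $\sigma$ they are replaced by terms over variables of $\mathcal{F}_1 \wedge \mathcal{G}_1$, which by epicness of $\mathcal{S}_1$ are in turn variables of $\mathcal{F}_1$.

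It remains to identify $[\mathcal{S}]$ with $[\mathcal{S}_2] \circ [\mathcal{S}_1]$. Proposition~\ref{prop:composition-algebraic-sequent-morphisms} yields $[\mathcal{S}] \cong [\mathcal{F}_1 \wedge \mathcal{G}_1 \implies \sigma(\mathcal{G}_2)] \circ [\mathcal{S}_1]$, so I would then exhibit an isomorphism of arrows between $[\mathcal{F}_1 \wedge \mathcal{G}_1 \implies \sigma(\mathcal{G}_2)]$ and $[\mathcal{S}_2]$ restricting to $\alpha$ on domains. Via Proposition~\ref{prop:algebraic-formula-structure-universal}, this reduces to checking that for every algebraic structure $X$ and interpretation $I$ of $\mathcal{F}_1 \wedge \mathcal{G}_1$ in $X$, extending $I$ to an interpretation of $\sigma(\mathcal{G}_2)$ is in canonical bijection with extending $I \circ \alpha^{-1}$ to an interpretation of $\mathcal{G}_2$. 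The hard part will be exactly this correspondence, which rests on the substitution lemma that the interpretation of $\sigma(v)$ under any extension of $I$ agrees with that of $v$ under the corresponding extension of $I \circ \alpha^{-1}$; this is essentially forced by the defining identity $I_{\mathcal{F}_1 \wedge \mathcal{G}_1}(\sigma(v)) = \alpha^{-1}(I_{\mathcal{F}_2}(v))$ together with the fact that $\alpha$ is a morphism of algebraic structures, followed by an induction atom by atom through $\mathcal{G}_2$. The remainder of the argument is routine bookkeeping.
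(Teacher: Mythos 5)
Your proposal is correct and follows essentially the same route as the paper's proof: translate the variables of the second sequent's conclusion into terms over the first sequent via the isomorphism of classifying structures, observe the resulting sequent is epic, and identify the classifying morphisms using Proposition~\ref{prop:composition-algebraic-sequent-morphisms} together with an atom-by-atom comparison of interpretations. The only difference is cosmetic: you phrase the term-replacement as an explicit substitution $\sigma$ and state the isomorphism on $[\mathcal{F}_1 \land \mathcal{G}_1]$ rather than on $[\mathcal{G}_1]$, which is if anything slightly more careful than the paper's notation.
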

\begin{proof}
  Let $\mathcal{S} = \mathcal{F} \implies \mathcal{G}$ and $\mathcal{T} = \mathcal{G}' \implies \mathcal{H}$ be epic PHL sequents and suppose that there exists an isomorphism $k : [\mathcal{G}] \cong [\mathcal{G}']$.
  It suffices to show that $[\mathcal{T}] \circ k$ is isomorphic to the classifying morphism of an epic PHL sequent of the form $\mathcal{G} \implies \mathcal{H}'$, since then $[\mathcal{F} \implies \mathcal{G} \land \mathcal{H}'] \cong [\mathcal{T}] \circ k \circ [\mathcal{S}]$ by Proposition~\ref{prop:composition-algebraic-sequent-morphisms}.

  Every variable $v$ in $\mathcal{G}'$ corresponds to some term $t$ in $\mathcal{G}$ under $k$.
  More precisely, the canonical interpretation of $v$ and the inverse of $k$ determine an element $x = k^{-1}(I_{\mathcal{G}'}(v)) \in [\mathcal{G}]$, and then $x$ is the canonical interpretation $x = I_{\mathcal{G}}(t)$ of some term $t$ that occurs in $\mathcal{G}$.
  Let $\mathcal{H}'$ be the formula that is obtained from $\mathcal{H}$ by replacing every variable $v$ by a corresponding term $t$ in $\mathcal{G}$.
  Observe that we assumed $\mathcal{G}' \implies \mathcal{H}$ to be an epic sequent, so every variable in $\mathcal{H}$ also occurs in $\mathcal{G}'$.
  Thus, $\mathcal{G} \implies \mathcal{H}'$ is epic.

  By induction over the number of atoms in $\mathcal{H}$, we find an interpretation of $\mathcal{G} \land \mathcal{H}'$ in $[\mathcal{G}' \land \mathcal{H}]$ that is compatible with $k$, and vice versa there is an interpretation of $\mathcal{G}' \land \mathcal{H}$ in $[\mathcal{G} \land \mathcal{H}']$ that is compatible with $k^{-1}$.
  These interpretations then induce a commutative square
  \begin{equation}
    \begin{tikzcd}
      \left[\mathcal{G}\right] \arrow[r] \arrow[d, "k"', "\cong"] & \left[\mathcal{G} \land \mathcal{H}'\right] \arrow[d, "\cong"] \\
      \left[\mathcal{G}'\right] \arrow[r] & \left[\mathcal{G}' \land \mathcal{H}\right]
    \end{tikzcd}
  \end{equation}
  as desired.
\end{proof}

\begin{proposition}
  \label{prop:classifying-alg-epic}
  The classifying morphisms of epic PHL sequents are, up to isomorphism, precisely the epimorphisms of finite algebraic structures.
\end{proposition}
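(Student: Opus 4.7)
The plan is to prove both directions by leveraging Proposition~\ref{prop:monic-epic-alg-morphisms}, which identifies epimorphisms of algebraic structures as those whose relative totalization is a surjection.

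For the forward direction, let $\mathcal{S} = \mathcal{F} \implies \mathcal{G}$ be epic. The classifying morphism $[\mathcal{S}]$ is a map of finite algebraic structures by Proposition~\ref{prop:image-of-classifying-phl-sequents}. To see that it is an epimorphism, I would analyze the flattening $\mathrm{Flat}(\mathcal{F} \land \mathcal{G})$ atom by atom via Proposition~\ref{prop:flattening-atoms}. Because $\mathcal{S}$ is epic, every variable of $\mathcal{G}$ already occurs in $\mathcal{F}$, so the only elements of $[\mathrm{Flat}(\mathcal{F} \land \mathcal{G})]$ beyond those of $[\mathrm{Flat}(\mathcal{F})]$ are the fresh variables $u$ introduced when flattening composite terms in $\mathcal{G}$. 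By Proposition~\ref{prop:flattening-atoms}, each such $u$ appears in a single atom $f(v_1, \ldots, v_m, u)$ with $v_1, \ldots, v_m$ already present. After applying $\mathrm{FAlg}$, the interpretation of $u$ is forced to equal $f_{[\mathcal{F} \land \mathcal{G}]}(v_1, \ldots, v_m)$. Hence every element of $[\mathcal{F} \land \mathcal{G}]$ lies in the iterated image of $[\mathcal{F}]$ under defined function applications, which is exactly the statement that the relative totalization of $[\mathcal{S}]$ is surjective onto $[\mathcal{F} \land \mathcal{G}]$.

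For the converse, let $g : X \rightarrow Y$ be an epimorphism of finite algebraic structures, and let $X \xrightarrow{h} X' \xrightarrow{g'} Y$ be its relative totalization (Proposition~\ref{prop:saturation}), so that $g'$ is surjective. The cell-complex structure of $h$ canonically associates to every $x' \in X'$ a PHL term $t_{x'}$ over variables $\{v_x \mid x \in X\}$: each step adjoins a function application $f_{X'}(x_1', \ldots, x_n') = x''$, and inductively we put $t_{x''} = f(t_{x_1'}, \ldots, t_{x_n'})$. Choosing for each $y \in Y$ a preimage $x'_y$ under $g'$, set $t_y = t_{x'_y}$, with the convention $t_{g(x)} = v_x$ when $y \in \operatorname{Im} g$. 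Following Proposition~\ref{prop:image-of-classifying-rhl-sequents} (and using Proposition~\ref{prop:unflatten} to pass to PHL), I would let $\mathcal{F}$ describe $X$ via atoms $v_x \downarrow$, predicate atoms $p(v_{x_1}, \ldots, v_{x_n})$ for each predicate fact of $X$, and equality atoms $f(v_{x_1}, \ldots, v_{x_n}) \equiv v_x$ for each function fact. Similarly, $\mathcal{G}$ describes $Y$ using the terms $t_y$ in place of variables: predicate atoms $p(t_{y_1}, \ldots, t_{y_n})$, equalities $f(t_{y_1}, \ldots, t_{y_n}) \equiv t_y$ for function facts, and equality atoms $v_{x_1} \equiv v_{x_2}$ whenever $g(x_1) = g(x_2)$. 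Every variable occurring in $\mathcal{G}$ is some $v_x$, which is listed in $\mathcal{F}$, so $\mathcal{S} = \mathcal{F} \implies \mathcal{G}$ is epic.

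The main obstacle will be verifying $[\mathcal{S}] \cong g$. In one direction, the canonical assignments $v_x \mapsto x$ and $v_x \mapsto g(x)$ yield interpretations of $\mathcal{F}$ in $X$ and of $\mathcal{F} \land \mathcal{G}$ in $Y$, hence via Proposition~\ref{prop:algebraic-formula-structure-universal} induce a commuting square from $[\mathcal{S}]$ to $g$; in the other direction, the generic interpretations in $[\mathcal{F}]$ and $[\mathcal{F} \land \mathcal{G}]$ (with $y$ interpreted by $I_{\mathcal{F} \land \mathcal{G}}(t_y)$, which exists because $X'$ is total over $Y$ and $g'(x'_y) = y$) produce the inverse maps $X \to [\mathcal{F}]$ and $Y \to [\mathcal{F} \land \mathcal{G}]$. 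The delicate step is checking that these two pairs of maps are mutually inverse: concretely, that the equality atoms $v_{x_1} \equiv v_{x_2}$ in $\mathcal{G}$ collapse $[\mathcal{F}]$ onto the image of $X$ in $Y$ but no further, and that no ``junk'' elements remain in $[\mathcal{F} \land \mathcal{G}]$ beyond the interpretations of the $t_y$. This mirrors the bookkeeping already carried out in the proofs of Propositions~\ref{prop:image-of-classifying-rhl-sequents} and~\ref{prop:classifying-monic-epic}.
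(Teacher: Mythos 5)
Your proposal is correct, but it takes a genuinely different route from the paper's proof in both directions. For the forward direction, the paper reduces to sequents whose conclusion is a single flattened atom (using Proposition~\ref{prop:composition-algebraic-sequent-morphisms} and stability of epimorphisms under composition) and argues case by case, using that $\mathrm{FAlg}$ preserves epimorphisms and that case~\ref{itm:flattened-function-case} of Proposition~\ref{prop:flattening-atoms} is a pushout of $[f \downarrow]$; you instead verify directly that the relative totalization of $[\mathcal{S}]$ is surjective and invoke Proposition~\ref{prop:monic-epic-alg-morphisms}. (Two small imprecisions: a fresh flattening variable occurs as the \emph{last} argument of exactly one atom but may recur as an argument of later atoms, and you implicitly use that the image of a structure that is total over $[\mathcal{F} \land \mathcal{G}]$ is closed under defined function applications; both are immediate.) For the converse, the paper decomposes $g$ through its relative totalization into a finite chain of single totality-cell attachments followed by a surjection, realizes each piece as a classifying morphism of an epic sequent, and composes via Proposition~\ref{prop:epic-stable-under-composition}, which requires the finiteness bookkeeping that the attached coproducts can be taken finite and that the image chain stabilizes. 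You instead produce one epic sequent outright, with $\mathcal{F}$ presenting $X$ and $\mathcal{G}$ presenting $Y$ through terms $t_y$ harvested from the cell structure of the totalization; this avoids the composition lemma and the truncation argument (you only need one term per element of the finite $Y$, even though $X'$ may be infinite), at the price of the verification $g \cong [\mathcal{S}]$ that you flag as delicate. That verification does go through, and it is cleanest not by chasing ``junk'' elements but by showing that $Y$, equipped with the interpretation $v_x \mapsto g(x)$, already has the universal property of $[\mathcal{F} \land \mathcal{G}]$: a morphism $\psi : Y \rightarrow Z$ yields the interpretation $v_x \mapsto \psi(g(x))$; an interpretation $I$ yields $\psi(y) = I(t_y)$, which is defined because every $y \notin \operatorname{Im} g$ is a function value in $Y$ and hence occurs in some atom of $\mathcal{G}$; and the two assignments are mutually inverse --- your equality atoms $v_{x_1} \equiv v_{x_2}$ together with the convention $t_{g(x)} = v_x$ are exactly what makes the inverse check close, and naturality in $Z$ is evident.
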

\begin{proof}
  Let $\mathrm{Flat}(\mathcal{S}) = \mathcal{F} \implies \mathcal{G}$ be the flattening of an epic PHL sequent $\mathcal{S}$.
  Then $g = \mathrm{FAlg}([\mathcal{F} \implies \mathcal{G}])$ is the classifying morphism of $\mathcal{S}$.
  We need to show that $g$ is an epimorphism.
  Since epimorphisms are stable under composition, we may assume that $\mathcal{G} = \phi$ is a single RHL atom of one of the types listed in Proposition~\ref{prop:flattening-atoms} where $V$ is the set of variables occurring in $\mathcal{F}$.

  If $\phi$ is an atom of type \ref{itm:flattened-predicate-case}, \ref{itm:flattened-defined-case} or \ref{itm:flattened-equals-case}, then the morphism $[\mathcal{F} \implies \phi]$ of relational structures is surjective.
  Because the free algebraic structure functor preserves epimorphisms, this implies that $g = \mathrm{FAlg}([\mathcal{F} \implies \phi])$ is an epimorphism in $\mathrm{Alg}$.
  In case \ref{itm:flattened-function-case}, $[\mathcal{F} \implies \phi]$ is a pushout of $[f \downarrow]$ in $\mathrm{Rel}$.
  Since $[f \downarrow] = \mathrm{FAlg}([f \downarrow])$ is an epimorphism in $\mathrm{Alg}$ and pushouts preserve epimorphisms, it follows that also in this case $g$ is an epimorphism.

  Now suppose that $g : X \rightarrow Y$ is an epimorphism of finite algebraic structures.
  Let $g = g' h$ be the relative totalization of $X$ over $Y$.
  Since $h$ is a relative $\mathrm{Tot}$-cell complex, there exists by Proposition~\ref{prop:iterative-relative-cell-complexes} a sequence of pushout squares
  \begin{equation}
    \begin{tikzcd}
      \left[ v_1 \downarrow \land \dots \land v_n \downarrow \right] \arrow[r, "{[f_n \downarrow]}"] \arrow[d, "a"] & \left[ f_n(v_1, \dots, v_n) \downarrow \right] \arrow[d, "b"] \\
      X_n \arrow[r, "h_n"] & X_{n + 1}
    \end{tikzcd}
  \end{equation}
  for a totality sequent $f_n \downarrow$ for all $n$ such that $h$ is the infinite composition of the $h_n$.
  Note that, a priori, Proposition~\ref{prop:iterative-relative-cell-complexes} implies only that $h_n$ is a pushout of a \emph{coproduct} of totality sequents.
  However, finiteness of $X$ and $Y$ implies inductively that these coproducts can be chosen to be finite, and then a single pushout of a finite coproduct can equivalently be written as a finite composition of pushouts.

  We claim that $h_n \cong [\mathcal{S}_n]$ for all $n$ and a sequence of epic PHL sequents $(\mathcal{S}_n)_{n \in \mathbb{N}}$.
  To verify this, choose first a PHL formula $\mathcal{F}$ such that $[\mathcal{F}] \cong X_n$.
  A formula $\mathcal{F}$ with this property exists by Proposition~\ref{prop:image-of-classifying-phl-sequents} because the identity on $X_n$ is a map of finite algebraic structures.
  The map $a$ corresponds to elements $x_1, \dots, x_n \in X$, and these elements are the interpretations of terms $t_1, \dots, t_n$ that occur in $\mathcal{F}$.
  Now
  \begin{equation}
    h_n \cong [\mathcal{F} \implies f_n(t_1, \dots, t_n)\downarrow].
  \end{equation}

  Let $g'_n : X_n \rightarrow \operatorname{colim}_i X_i \xrightarrow{g'} Y$ for $n \geq 0$.
  Because $Y$ is finite, the chain
  \begin{equation}
    \operatorname{Im} g'_0 \subseteq \operatorname{Im} g'_1 \subseteq \dots \subseteq Y
  \end{equation}
  is eventually stationary, say for $n \geq n_0$, and then $\operatorname{Im} g'_{n_0} = \operatorname{Im} g'$.
  $g$ is an epimorphism, hence $g'$ is a surjection by Proposition~\ref{prop:monic-epic-alg-morphisms}, hence also $g'_{n_0} : X_{n_0} \rightarrow Y$ is a surjection.
  Thus $g'_{n_0} \cong [\mathrm{Unflat}(\mathcal{S})]$ for some surjective RHL sequent $\mathcal{S}$.
  Note that the PHL sequent $\mathrm{Unflat}(\mathcal{S})$ is epic because the RHL sequent $\mathcal{S}$ is surjective.

  We have thus decomposed $g$ into a composition
  \begin{equation}
    X = X_0 \xrightarrow{h_0} X_1 \xrightarrow{h_1} X_2 \xrightarrow{h_2} \dots \xrightarrow{h_{n_0 - 1}} X_{n_0} \xrightarrow{g_{n_0}} Y
  \end{equation}
  in which each map is isomorphic to the classifying morphism of an epic PHL sequent $\mathcal{S}$.
  Thus by Proposition~\ref{prop:epic-stable-under-composition}, $g \cong [\mathcal{S}]$ for some epic PHL sequent $\mathcal{S}$.
\end{proof}

\begin{corollary}
  \label{cor:epic-phl-has-strong-models}
  Let $T'$ be an epic PHL theory, i.e. a PHL theory comprised of epic sequents only.
  Then $\mathrm{Mod}(T')$ is a reflective subcategory of both $\mathrm{Alg}$ and $\mathrm{Rel}$.
\end{corollary}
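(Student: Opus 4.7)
The plan is to chain together the structural results already established: identify the classifying morphisms of epic sequents as epimorphisms, invoke the general fact that classes of epimorphisms are strong, and then apply the omnibus Proposition~\ref{prop:phl-models} to conclude reflectivity in both ambient categories.

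First I would apply Proposition~\ref{prop:classifying-alg-epic}, which characterizes the classifying morphisms of epic PHL sequents as (up to isomorphism) the epimorphisms of finite algebraic structures. In particular, for every sequent $\mathcal{S} \in T'$, the classifying morphism $[\mathcal{S}]$ is an epimorphism in $\mathrm{Alg}(\mathfrak{S})$. Thus the set
\begin{equation}
  M_1 = \{ [\mathcal{S}] \mid \mathcal{S} \in T' \}
\end{equation}
from Proposition~\ref{prop:phl-models} consists entirely of epimorphisms in $\mathrm{Alg}(\mathfrak{S})$.

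Next I would invoke Proposition~\ref{prop:epi-strong}, which says that any class of epimorphisms is strong: for morphisms that can be right-cancelled, injectivity and orthogonality automatically coincide. Applied to $M_1$, this gives that $M_1$ is a strong set of morphisms in $\mathrm{Alg}(\mathfrak{S})$.

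Finally, Proposition~\ref{prop:phl-models} already stipulates that if any one of $M_1, M_2, M_3$ is strong then all three are, and in that case $\mathrm{Mod}(T')$ is a reflective subcategory of both $\mathrm{Rel}(\mathfrak{S})$ and $\mathrm{Alg}(\mathfrak{S})$. Feeding in the strongness of $M_1$ yields the corollary immediately. There is essentially no obstacle here, as the genuinely technical content has been absorbed into the characterization of epic classifying morphisms as epimorphisms (Proposition~\ref{prop:classifying-alg-epic}) and the machinery of the small object argument used in Proposition~\ref{prop:phl-models}; the corollary is simply the payoff of combining them.
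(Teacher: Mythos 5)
Your proposal is correct and follows exactly the paper's route: the paper's proof is the one-line combination of Proposition~\ref{prop:phl-models} with Proposition~\ref{prop:classifying-alg-epic}, and your explicit appeal to Proposition~\ref{prop:epi-strong} merely spells out the step (epimorphisms form a strong class) that the paper leaves implicit.
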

\begin{proof}
  By combining Proposition~\ref{prop:phl-models} with Proposition~\ref{prop:classifying-alg-epic}.
\end{proof}

\begin{remark}
  There appears to be no simple analogue to Proposition~\ref{prop:classifying-alg-epic} that classifies PHL sequents corresponding to monomorphisms of algebraic structures.
  One possible notion of monic PHL sequent one might consider are PHL sequents $\mathcal{S} = \mathcal{F} \implies \mathcal{G}$ in which the conclusion $\mathcal{G}$ does not contain equality atoms.
  Indeed, the classifying morphisms of such sequents are monic.
  But not every monomorphism of finite algebraic structures can be described as classifying morphism of such sequents.

  For example, consider a signature given by a single sort $s$ and a nullary function symbol $c : s$, and the inclusion $g : X \hookrightarrow Y$ of algebraic structures given by singleton carriers $X_s = Y_s = \{ * \}$ and $c_X = \emptyset, c_Y = \{ * \}$.
  The RHL sequent corresponding to this inclusion is $\mathcal{S} = v \downarrow \implies c(v)$.
  But since $c$ is a function symbol, $c(v)$ is not a valid PHL atom.
  Indeed, unflattening $\mathcal{S}$ yields the PHL sequent $v \downarrow \implies c() \equiv v$, which would not be monic per our proposed definition.

  To rectify this, we might then try to change the definition of PHL so that relation atoms $r(v_1, \dots, v_n, v_{n + 1})$ are valid also in case $r = f : s_1 \times \dots \times s_n \rightarrow s$ is a function symbol.
  But now with the notion of monic PHL sequent considered above, not all classifying morphisms of monic PHL sequents would be monomorphisms, for example for the sequent $v \downarrow \land \; c(u) \implies c(v)$.
\end{remark}

\begin{proposition}
  \label{prop:strong-rhl-to-epic-phl}
  Let $\mathfrak{S} = (S, R)$ be a relational signature and let $T$ be an RHL theory for $\mathfrak{S}$.
  Then there exists an algebraic signature $\mathfrak{S}' = (S, P \sqcup F)$ on the same set of sorts $S$ such that $P = R$ and an epic PHL theory $T'$ for $\mathfrak{S}'$ such that the forgetful functor $\mathrm{Alg}(\mathfrak{S}') \rightarrow \mathrm{Rel}(\mathfrak{S})$ restricts to an equivalence $\mathrm{Mod}(T') \simeq \{ [\mathcal{S}] \mid \mathcal{S} \in T \}^\perp$.
  In particular, if $T$ is strong, then $\mathrm{Mod}(T') \simeq \mathrm{Mod}(T)$.
\end{proposition}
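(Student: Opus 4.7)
The plan is to associate to each RHL sequent $\mathcal{S} = \mathcal{F} \implies \mathcal{G}$ a family of epic PHL sequents that jointly encode the existence and uniqueness required for orthogonality to $[\mathcal{S}]$. Write $v_1, \dots, v_n$ for the variables of $\mathcal{F}$ and $u_1, \dots, u_k$ for those of $\mathcal{G}$ that do not appear in $\mathcal{F}$. For each $i$ I introduce a fresh function symbol $f_{\mathcal{S},i} : s_{v_1} \times \dots \times s_{v_n} \rightarrow s_{u_i}$. The signature $\mathfrak{S}' = (S, P \sqcup F)$ takes $P = R$ and $F$ as the set of all such symbols, and $T'$ consists, for each $\mathcal{S} \in T$, of a \emph{witness sequent}
\[
  \mathcal{F} \implies \mathcal{G}[u_i \mapsto f_{\mathcal{S},i}(v_1, \dots, v_n)]
\]
and a \emph{uniqueness sequent}
\[
  \mathcal{F} \land \mathcal{G}[u_i \mapsto w_i^{(1)}] \land \mathcal{G}[u_i \mapsto w_i^{(2)}] \implies \bigwedge_{i=1}^k w_i^{(1)} \equiv w_i^{(2)},
\]
with freshly chosen $w_i^{(j)}$. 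Both sequents are epic: in the first, the existentials have been replaced by terms over premise variables; in the second, every variable of the conclusion already occurs in the premise.

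The candidate equivalence is the forgetful functor $U : \mathrm{Alg}(\mathfrak{S}') \rightarrow \mathrm{Rel}(\mathfrak{S})$, which is well-defined because $P = R$. To see that $U$ lands in $\{[\mathcal{S}]\}^\perp$, fix $X' \in \mathrm{Mod}(T')$ and an interpretation $I$ of $\mathcal{F}$ in $U(X')$. The witness sequent supplies the extension $J(u_i) = f^{X'}_{\mathcal{S},i}(I(v_1), \dots, I(v_n))$, and given any second extension $J'$ of $I$, instantiating the uniqueness sequent at $w_i^{(1)} = J(u_i), w_i^{(2)} = J'(u_i)$ forces $J = J'$ on every $u_i$. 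For essential surjectivity, given $Y \in \{[\mathcal{S}]\}^\perp$, I build $X'$ on the carriers and relations of $Y$ by defining $f^{X'}_{\mathcal{S},i}(\vec{y})$ to be the $u_i$-component of the unique orthogonal extension of $\vec{v} \mapsto \vec{y}$ whenever this assignment is already an interpretation of $\mathcal{F}$ in $Y$, and undefined otherwise. Uniqueness of orthogonal extensions makes each $f^{X'}_{\mathcal{S},i}$ functional, both sequent families hold by construction, and $U(X') = Y$.

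Faithfulness of $U$ is immediate. For fullness, let $\phi : U(X') \rightarrow U(Y')$ be a relational morphism and $\vec{x}$ a tuple on which $f^{X'}_{\mathcal{S},i}$ is defined. Then $I := (\vec{v} \mapsto \vec{x})$ interprets $\mathcal{F}$ in $U(X')$ with unique extension $J$; its post-composition $\phi \circ J$ extends $\phi \circ I$ in $U(Y')$, and orthogonality of $Y'$ to $[\mathcal{S}]$ forces $\phi(J(u_i)) = f^{Y'}_{\mathcal{S},i}(\phi(\vec{x}))$, so $\phi$ lifts to $\mathrm{Alg}(\mathfrak{S}')$. The final sentence of the statement is the observation that, for strong $T$, $\{[\mathcal{S}]\}^\perp = \{[\mathcal{S}]\}^\pitchfork = \mathrm{Mod}(T)$. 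I expect the main obstacle to be phrasing the uniqueness sequent so that orthogonality really is captured: if one naively equated existentials one at a time using $k$ separate sequents, matching $w_i^{(1)}$ and $w_i^{(2)}$ from unrelated satisfying tuples, one would demand strictly more than orthogonality, which requires only that the two extensions of a single premise interpretation coincide; conjoining both copies of $\mathcal{G}$ in a single premise above is precisely what lets orthogonality be recovered without overshooting it.
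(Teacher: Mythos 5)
Your construction follows the paper's strategy closely (the same function symbols $f_{\mathcal{S},i}$ for conclusion-only variables, the same witness sequent, and a uniqueness sequent that, in the presence of the witness sequent, is interderivable with the paper's sequent $\mathcal{F} \land \mathcal{G} \implies \bigwedge_i u_i \equiv f_{\mathcal{S},i}(v_1,\dots,v_n)$), but it omits one family of axioms that the paper includes and that is genuinely needed: the domain-restriction sequents $f_{\mathcal{S},i}(v_1, \dots, v_n) \downarrow \implies \mathcal{F}$, which force the new partial functions to be defined \emph{only} on tuples satisfying the premise. Without them, a model of your $T'$ may define $f^{X'}_{\mathcal{S},i}$ on ``junk'' tuples where $\mathcal{F}$ fails, with unconstrained values, since both your witness and uniqueness sequents have $\mathcal{F}$ in their premise and hence say nothing about such tuples. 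Your fullness argument silently uses the missing axioms: from ``$\vec{x}$ is a tuple on which $f^{X'}_{\mathcal{S},i}$ is defined'' you conclude that $\vec{v} \mapsto \vec{x}$ interprets $\mathcal{F}$ in $U(X')$, which is exactly the content of the omitted sequents and is false in general for your $T'$.

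The failure is not merely in the proof but in the statement for your $T'$: take $T$ to consist of the single sequent $r(v) \implies u \downarrow$ over one sort $s$ and one unary relation $r$, so $F = \{f\}$ with $f : s \rightarrow s$. Let $X'$ have carrier $\{a,b\}$, $r_{X'} = \emptyset$, and $f_{X'}(a) = b$, and let $Y'$ be identical except that $f_{Y'}$ is nowhere defined. Both satisfy your witness and uniqueness sequents vacuously, so both lie in $\mathrm{Mod}(T')$, and $U(X') = U(Y')$; yet the identity on carriers is not a morphism $X' \rightarrow Y'$ of algebraic structures, so $U$ restricted to your $\mathrm{Mod}(T')$ is not full and cannot be an equivalence onto $\{[\mathcal{S}]\}^\perp$. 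Adding the sequents $f_{\mathcal{S},i}(v_1,\dots,v_n)\downarrow \implies \mathcal{F}$ (which are epic, since all their conclusion variables occur in the premise) excludes such models, makes your functor $H$ a genuine inverse to $U$ on models, and repairs the fullness argument; with that addition your proof matches the paper's.
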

\begin{proof}
  Observe that a relational $\mathfrak{S}$-structure is orthogonal to a classifying morphism $[\mathcal{S}]$ of an RHL sequent $\mathcal{S} = \mathcal{F} \implies \mathcal{G}$ if every interpretation of the premise $\mathcal{F}$ extends \emph{uniquely} to an interpretion of $\mathcal{G}$.
  Our strategy is to add function symbols for each variable in a conclusion of a sequent in $T$ that does not occur in the premise.
  We then add axioms enforcing that each conclusion variable can be obtained by application of the corresponding function symbol to the variables in the premise and vice versa.
  Because evaluation of partial functions yields a unique result, this enforces that the interpretion of conclusion variables is uniquely determined by the interpretation of premise variables.

  In detail, our set of function symbols $F$ is given by
  \begin{equation}
    F = \{ f_{\mathcal{S}, v} \mid \mathcal{S} = \mathcal{F} \implies \mathcal{G} \text{ is in } T, v \text{ occurs in } \mathcal{G} \text{ but not in } \mathcal{F} \}.
  \end{equation}
  Let $\mathcal{S} = \mathcal{F} \implies \mathcal{G}$ be in $T$.
  Let $v_1, \dots, v_n$ be an enumeration of the variables in $\mathcal{F}$ with sorts $s_1, \dots, s_n$.
  Let $v$ be a variable in $\mathcal{G}$ that does not occur in $\mathcal{F}$, and let $s$ be the sort of $v$.
  Then the signature of $f_{\mathcal{S}, v}$ is given by $f_{\mathcal{S}, v} : s_1 \times \dots \times s_n \rightarrow s$.

  Note that each relation symbol in $\mathfrak{S}$ corresponds to a predicate symbol in $\mathfrak{S}'$.
  We thus implicitly coerce RHL sequents for $\mathfrak{S}$ to PHL sequents for $\mathfrak{S}'$ (without invoking $\mathrm{Unflat}$).

  Let $T'$ be the set containing the following PHL sequents, for all $\mathcal{S} = \mathcal{F} \implies \mathcal{G}$ in $T$:
  \begin{enumerate}
    \item
      \label{itm:new-functions-satisfy}
      The sequent $\mathcal{F} \implies \mathcal{G}'$, where $\mathcal{G}'$ is obtained from $\mathcal{G}$ by replacing each variable $v$ in $\mathcal{G}$ that does not occur in $\mathcal{F}$ with $f_{\mathcal{S}, v}(v_1, \dots, v_n)$.
    \item
      \label{itm:new-functions-only-defined-if-premise}
      The sequents
      \begin{equation}
        f_{\mathcal{S}, v}(v_1, \dots, v_n) \downarrow \implies \mathcal{F}
      \end{equation}
      for all variables $v$ in $\mathcal{G}$ that do not occur in $\mathcal{F}$.
    \item
      \label{itm:new-functions-uniquely}
      The sequent
      \begin{equation}
        \mathcal{F} \land \mathcal{G} \implies \bigwedge_{v} v \equiv f_{\mathcal{S}, v}(v_1, \dots, v_n).
      \end{equation}
      where $v$ ranges over the variables in $\mathcal{G}$ that do not occur in $\mathcal{F}$.
  \end{enumerate}
  Clearly all PHL sequents in $T'$ are epic.
  Let $G : \mathrm{Alg}(\mathfrak{S})' \rightarrow \mathrm{Rel}(\mathfrak{S})$ be the forgetful functor.

  We first show that if $X \in \mathrm{Mod}(T')$, then $G(X) \perp [\mathcal{S}]$ for every sequent $\mathcal{S} \in T$.
  Let $V = \{ v_1, \dots, v_n \}$ be the enumeration of variables in $\mathcal{F}$ that we chose in the definition of the signature of the function symbols $f_{\mathcal{S}, v}$.
  Let $I$ be an interpretation of $\mathcal{F}$ in $G(X)$.
  As mentioned earlier, we implicitly treat $\mathcal{F}$ also as a PHL sequent for the signature $\mathfrak{S}'$, and under this identification we can view $I$ as an interpretation of the PHL sequent $\mathcal{F}$ in the algebraic structure $X$.
  Because $X$ satisfies sequent \ref{itm:new-functions-satisfy}, it follows that $I$ is also an interpretation of $\mathcal{G}'$ in $X$.
  By definition of $\mathcal{G}'$, it follows that we obtain an interpretation $J$ of $\mathcal{F} \land \mathcal{G}$ by setting $J(v) = I(v)$ if $v$ occurs in $\mathcal{F}$ and $J(v) = I(f_{\mathcal{S}, v}(v_1, \dots, v_n))$ if $v$ does not occur in $\mathcal{F}$.

  Thus $G(X)$ satisfies the sequent $\mathcal{F} \implies \mathcal{G}$.
  Two interpretations of $\mathcal{F} \land \mathcal{G}$ in $G(X)$ that agree on the variables in $\mathcal{F}$ agree also on the variables that occur in $\mathcal{G}$ because of sequent \ref{itm:new-functions-uniquely}.

  Next we construct a model $X = H(Y) \in \mathrm{Mod}(T')$ given $Y \in \mathrm{Rel}(\mathfrak{S})$ such that $Y \perp [\mathcal{S}]$ for $\mathcal{S} \in T$.
  Set $X_s = Y_s$ for all sorts $s \in S$ and $r_X = r_Y$ for $r \in R$.
  Let $\mathcal{S} = \mathcal{F} \implies \mathcal{G}$ be in $T$, and let $v_1, \dots, v_n$ be the enumeration of the variables in $\mathcal{F}$ that we chose earlier.
  Then we set
  \begin{equation}
    \label{eq:definition-epic-transformation-functions}
    (f_{\mathcal{S}, v})_X(J(v_1), \dots, J(v_n)) = J(v)
  \end{equation}
  whenever $J$ is an interpretation of $\mathcal{F} \land \mathcal{G}$ in $Y$.
  Since $Y$ is orthogonal to $[\mathcal{F} \implies \mathcal{G}]$, two interpretations of $\mathcal{F} \land \mathcal{G}$ are equal as soon as the interpretations agree on the variables $v_1, \dots, v_n$ of $\mathcal{F}$.
  Thus, equation \eqref{eq:definition-epic-transformation-functions} yields well-defined partial functions $(f_{\mathcal{S}, v})_X$.
  By construction, $X$ satisfies sequent \ref{itm:new-functions-satisfy} and the sequents \ref{itm:new-functions-only-defined-if-premise}.
  Satisfaction of sequent \ref{itm:new-functions-uniquely} follows again from uniqueness of the interpretation of conclusion variables.

  The assignment $Y \mapsto H(X)$ is functorial.
  Indeed, let $g : Y_0 \rightarrow Y_1$ be a map in $\mathrm{Rel}(\mathfrak{S})$ with $Y_0$ and $Y_1$ orthogonal to $[\mathcal{S}]$ for $\mathcal{S} \in T$, and let $X_i = H(Y_i)$ for $i \in \{0, 1\}$.
  We need to show that the action of $g$ on carrier sets is also a morphism $X_0 \rightarrow X_1$, i.e. that it preserves partial functions.
  This follows from the definition of the partial functions \eqref{eq:definition-epic-transformation-functions} because if $J$ is an interpretation of $\mathcal{F} \land \mathcal{G}$ in $Y_0$, then $g \circ J$ is an interpretation of $\mathcal{F} \land \mathcal{G}$ in $Y_1$.

  Clearly $G \circ H = \mathrm{Id}$ is the identity functor.
  If $X \in \mathrm{Mod}(T')$, then the partial functions of $X$ satisfy equation \eqref{eq:definition-epic-transformation-functions}.
  Thus, $H \circ G = \mathrm{Id}$ is the identity functor.
\end{proof}

\bibliographystyle{abbrvnat}
\setcitestyle{authoryear,open={(},close={)}}
\bibliography{main}

\end{document}